\documentclass[12pt,journal,draftclsnofoot,onecolumn,english]{IEEEtran}
\usepackage{graphicx}
\usepackage{epstopdf}
\usepackage{amsmath}
\usepackage{amsthm}
\usepackage{amsfonts}
\usepackage[justification=centering]{caption}
\usepackage{color}
\newtheorem{theorem}{Theorem}
\newtheorem{lemma}{Lemma}
\newtheorem{definition}{Definition}
\newtheorem{corollary}{Corollary}
\newtheorem{example}{Example}
\newtheorem{remark}{Remark}

\newtheorem{proposition}{Proposition}
\usepackage{bm}
\usepackage{booktabs}
\usepackage{multirow}
\usepackage{algorithm}
\usepackage{algpseudocode}
\usepackage{amssymb}
\usepackage{cite}
\usepackage{pstricks}
\usepackage{hyperref}
\usepackage{caption}
\usepackage{subcaption}
 \usepackage{lipsum}
 \usepackage{aecompl}
 \usepackage{arydshln}
\newcommand{\tabcaption}{\def\@captype{table}\caption}
\newcommand{\tabincell}[2]{\begin{tabular}{@{}#1@{}}#2\end{tabular}}
\begin{document}
\title{Coded Caching Schemes for Multiaccess Topologies via Combinatorial Design}
\author{Minquan~Cheng,  Kai~Wan,~\IEEEmembership{Member,~IEEE,} Petros Elia,~\IEEEmembership{Member,~IEEE}
and~Giuseppe~Caire,~\IEEEmembership{Fellow,~IEEE}

\thanks{
A short version of this paper   was presented at the 2023 IEEE International Symposium on Information Theory (ISIT).
}
\thanks{M. Cheng is with Guangxi Key Lab of Multi-source Information Mining $\&$ Security, Guangxi Normal University,
Guilin 541004, China  (e-mail:  chengqinshi@hotmail.com). The work of M.~Cheng was in part supported by 2022GXNSFDA035087, NSFC (No.62061004, U21A20474), Guangxi Collaborative Innovation Center of Multi-source Information Integration and Intelligent
Processing, the Guangxi Bagui Scholar Teams for Innovation and Research
Project, and the Guangxi Talent Highland Project of Big Data Intelligence
and Application.}
\thanks{K.~Wan is with the School of Electronic Information and Communications,
Huazhong University of Science and Technology, 430074  Wuhan, China,  (e-mail: kai\_wan@hust.edu.cn). The work of K.~Wan  was partially funded by the   National Natural
Science Foundation of China (NSFC-12141107).}
\thanks{Petros Elia is with the Communication Systems Department, EURECOM, 06410 Sophia Antipolis, France (e-mail:
elia@eurecom.fr). The work of P.~Elia was supported in part by  the ERC Project DUALITY under Grant
725929, as well as by a Huawei France-funded Chair towards Future Wireless Networks.}
\thanks{G. Caire is with the Electrical Engineering and Computer Science Department, Technische Universit\"{a}t Berlin,
10587 Berlin, Germany (e-mail: caire@tu-berlin.de). The work of G.~Caire was partially funded by the European Research Council under the ERC Advanced Grant N. 789190, CARENET.}
}
\maketitle

\begin{abstract}
This paper studies a multiaccess coded caching (MACC) where the connectivity topology between the users and the caches can be described by a class of combinatorial designs. Our model includes as special cases several MACC topologies considered in previous works. The considered MACC  network  includes a server containing $N$ files, $\Gamma$ cache nodes and $K$ cacheless users, where each user can access  $L$  cache nodes. The server is connected to the users via an error-free shared link, while the users can retrieve the cache content of the connected cache-nodes while the users can directly access the content in their connected cache-nodes. Our goal is to minimise the worst-case transmission load on the shared link in the delivery phase. 
  The main limitation of the existing MACC works is that  only some specific access topologies are considered, and thus the number of users $K$ should be  either
    linear or exponential to $\Gamma$.  We overcome this limitation by formulating a new access topology derived from two classical combinatorial structures, referred to as the $t$-design and the $t$-group divisible design. In these topologies, $K$ scales linearly, polynomially, or even exponentially with $\Gamma$. By leveraging the properties of the considered combinatorial structures, we propose two classes of coded caching schemes for a flexible number of users, where the number of users can scale linearly, polynomially or exponentially with the number of cache nodes. In addition, our schemes can unify most schemes for the shared link network and unify many schemes for the multi-access network except for the cyclic wrap-around topology.
%
%
%
%
%
%
 \end{abstract}

\begin{IEEEkeywords}
Coded caching, multiaccess networks, combinatorial design, placement delivery array.
\end{IEEEkeywords}

%
%

\section{Introduction}


Caching can effectively shift traffic from peak to off-peak times \cite{BBD} by storing fractions of popular content in users' local caches during peak traffic  times, so that users can be partially served from their local caches, thereby reducing
network traffic.
In the setting of the shared-link caching model, a single server has access to a library of $N$ files of equal length and   serves  $K$   users through an error-free shared link, where each user has a    cache of    $M$ files.
   A   coded caching scheme contains two phases: the placement phase and the delivery phase. During the placement phase, each cache is filled with content from the library without any knowledge of future users' requests. During the delivery phase, each user requests a single file from the library. In terms of the users' requests and caches, the server then broadcasts a sequence of    messages such that each user's request can be satisfied.
The objective is to minimize the worst-case number of transmissions during the delivery phase (normalized by the file size)    among all possible requests, referred to as worst-case load $R$. Compared to the conventional uncoded caching scheme, the seminal coded caching  technique was originally proposed by Maddah-Ali and Niesen (MN) in~\cite{MN} to further reduce the number of transmissions during peak traffic times by introducing an additional coded caching gain.
The  coded caching scheme proposed by MN (referred to as MN scheme) uses a combinatorial cache placement and transmits coded multicast messages during the delivery where each multicast message is simultaneously useful to a set of users.
The  MN scheme in \cite{MN}  is generally order optimal within a factor of $4$~\cite{GR}, and is
 optimal under the constraint  of $N\geq K$ and uncoded cache placement (i.e., each user directly stores some bits of files in the library)~\cite{WTP,YMA}.
 For the case   $N< K$, the MN scheme was further improved by removing redundant multicast messages~\cite{YMA}; the resulting scheme is generally order optimal within a factor of $2$~\cite{yufactor2TIT2018} and is
 optimal under the constraint  of uncoded cache placement\cite{YMA}.
Combinatorial design was introduced into coded caching to construct coded caching schemes in \cite{YCTC}, where
a combinatorial structure, referred to as    \emph{Placement Delivery Array (PDA)}, was introduced into coded caching to construct coded caching schemes with the uncoded placement and one-shot delivery. It was shown in \cite{STD,SZG}  that the coded caching schemes in \cite{MN,YCTC,STD,SDLT,SZG,TR,YTCC,CKSB} can be represented by appropriate PDAs.
Under the PDA structure, various coded caching schemes   were then designed; just list a few,~\cite{YCTC,CJWY,CJTY,CJYT,MW,ZCJ,YTCC,SZG,CWLZC,SB,ER,ENR}.

Following the seminal results in \cite{MN}, numerous works have addressed  a variety of extended models beyond the original shared-link setting.
While the mature literature on coded caching has explored a rich variety of settings, some recent results - which we discuss later - have brought to the fore a powerful new way of exploiting a modest number of edge caches, with the motivation of edge caching which was widely used in the communication systems.   
 In this paper we consider this multiaccess coded caching (MACC) model, where cache content is stored at edge cache-nodes in the network and users do not have their own caches, as shown in Fig.~\ref{multiaccess-system}. The original MACC model
was proposed in  \cite{JHNS},
 containing a   server with $N$ files,   $\Gamma$   cache-nodes and $K$ \emph{cache-less} users.
Each   user could access a subset of cache-nodes with negligible cost in a cyclic wrap-around way; i.e,   each user can access $L$ neighbouring cache-nodes in a cyclic wrap-around fashion.
  An MACC scheme also contains two phases. During the placement phase, each cache-node places  $M$ files in its memory, with full knowledge of the network topology but without knowledge of subsequent users' requests. During the delivery phase, each user requests a single file; the server then broadcasts messages to the users such that each user can recover its requested file from the broadcasted messages and the cache content in the cache-nodes which it can access.
  As in the shared-link model, the objective is to reduce the worst-case delivery load $R$ for any given memory size $M$.
\begin{figure}
\centerline{\includegraphics[scale=0.5]{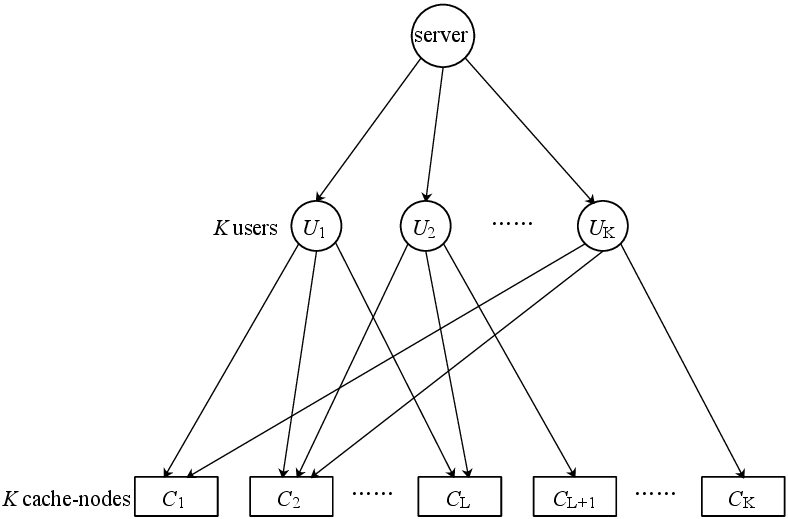}}
\caption{\small Multiaccess  coded caching system.}
\label{multiaccess-system}
\end{figure}

The authors in \cite{JHNS} proposed   a   coloring-based scheme that   allows for the maximum local caching gain  (i.e., the cache content stored by the cache-nodes connected to one user are totally different and thus each user can totally retrieve $LM$ files from the connected cache), but the coded caching gain is even worse than
the MN scheme for the shared-link model where each user caches $M$ files.
Recent works \cite{CWLZC,SB,MARB,SPE,RK,SR,RKstructure,KMR,WCLC,ZWCC,WCKC}   proposed   improved MACC schemes and converse bounds for this
cyclic topology. However, the general order optimality (with or without the constraint of uncoded cache placement) for this model still remains open.


Recently, the MACC problem with   combinatorial access topology  was considered in \cite{MKRmn}. This topology,
 involves $\Gamma$ cache-nodes and $\binom{\Gamma}{L}$ users, where each subset of $L$ cache-nodes is connected to a distinct user.
A coded caching scheme for this MACC model was proposed in \cite{MKRmn}, which was shown to be optimal  under the constraint of uncoded cache placement~\cite{FP}.
 Another access topology was considered based on a combinatorial structure  known as cross-resolvable design~\cite{KMR}  where this topology contains  $\Gamma = mq$ cache-nodes serving exactly $\binom{m}{L}q^L$ users, and   each cache has size $M = N/q$ for any positive integer $q$ and $m\geq L$.\footnote{One can check that $\binom{m}{L}q^L=\binom{\Gamma/q}{L}q^L=\mathcal{O}((\Gamma/q)^L q^L)=\mathcal{O}(\Gamma^L).$} 

\paragraph*{Main contributions}
Although numerous works have been proposed in the literature for the MACC model, given the number of cache-nodes $\Gamma$ and the user access degree $L$, the number of users in the system is either linear with $\Gamma$, or of order $\Gamma^L$.
When these schemes are applied in practical systems, a large number of virtual users may be required. The main contribution of this paper is to consider the access topology with a more flexible number of users. By formulating the access topology in terms of two classical combinatorial structures, $t$-design and $t$-group divisible design (also known as $t$-GDD), we propose two classes of coded caching schemes for a
flexible number of users, where the number of users can be scaled linearly, polynomially, or exponentially with the number of cache-nodes. Note that the access topologies in \cite{MKRmn,KMR} can be considered as special cases of ours.

Since the combinatorial structures considered in \cite{JHNS,CWLZC,RKstructure,MKRmn,KMR}, i.e. the cyclic wrap-around access, the combinatorial access and the cross-resolvable design access, are different, the authors proposed topology-dependent cache placement and delivery strategies. In this paper,
we aim to propose unified coded caching strategies for arbitrary parameters of $t$-designs and $t$-GDDs access topology by exploiting the structure of the $t$-design and $t$-GDD, respectively. Our main contributions are as follows.
\begin{itemize}
\item According to the property of $t$-design, we apply the placement strategy of the MN scheme into the $t$-design access topology and delivery strategy generated via the   property of $t$-design (Definition \ref{def-design}) to obtain our first scheme. Interestingly, our scheme covers the schemes in \cite{MKRmn} as special cases.

\item According to the property of $t$-GDD, 
 by applying an Orthogonal Array (OA) placement strategy into  the $t$-GDD access topology and the delivery strategy generated by the   property of $t$-GDD (Definition \ref{def-GDD}), we obtain the second scheme proposed in this paper. Furthermore, we show that the access topology, i.e. the cross-resolvable design, in \cite{KMR} can be considered as a special case of ours. Compared with the schemes in \cite{MR,MKR,KMR}, our scheme has smaller transmission load and subpacketization.
\end{itemize}
As a by-product, the proposed schemes can also work and unify some existing schemes for the original MN shared-link coded caching model, where each user has its own cache. In particular, our $t$-design scheme covers the   schemes in \cite{MN,YTCC}, and our $t$-GDD scheme covers the PDA schemes in \cite{SZG,TR,YCTC,CWZW}, as special cases.

\paragraph*{Notations}  In this paper, the following notations will be used unless otherwise stated.
\begin{itemize}
\item $|\cdot|$ denotes the cardinality of a set.
\item For a set $\mathcal{V}$, we sort it in the lexicographic order; let   $\mathcal{V}(j)$ represent the $j$-th smallest element in   $\mathcal{V}$  and let $\mathcal{V}(\mathcal{J})=\{\mathcal{V}(j)\ |\ j\in\mathcal{J} \}$.
\item For any positive integers $a$, $b$, $t$ with $a<b$ and $t\leq b $,   let $[a:b]=\{a,a+1,\ldots,b\}$,   $[1:b]=[b]$, and
${[b]\choose t}=\{  \mathcal{V}\subseteq [b]\ |\  |\mathcal{V}|=t\}$, i.e., ${[b]\choose t}$ is the collection of all $t$-subsets of $[b]$.
\item Let ${\bf a}$ be a vector with length $n$. For any $i\in[n]$, ${\bf a}(i)$ denotes the $i^{\text{th}}$ coordinate of ${\bf a}$.
For any subset $\mathcal{T}\subseteq [n]$, ${\bf a}(\mathcal{T})$ denotes a vector with length $|\mathcal{T}|$ obtained by taking only the coordinates with indices in $ \mathcal{T}$.
\item For any $F\times m$ array $\mathbf{P}$, for any integers $i\in[F]$ and $j\in [m]$, $\mathbf{P}(i,j)$ represents the element located in the $i^{\text{th}}$ row and the $j^{\text{th}}$ column of $\mathbf{P}$; $\mathbf{P}(\mathcal{ V},\mathcal{T})$ represents the subarray generated by the row indices in $\mathcal{V}\subseteq [F]$ and the columns indices in $\mathcal{T}\subseteq [m]$. In particular let $\mathbf{P}([F],\mathcal{T})$  be shortened by $\mathbf{P}(\cdot,\mathcal{T})$ and  $\mathbf{P}(\mathcal{V},[m])$  be shortened by $\mathbf{P}(\mathcal{V},\cdot)$.
\item If $a$ is not divisible by $q$, $\langle a\rangle _q$ denotes the least non-negative residue of $a$ modulo $q$; otherwise, $\langle a\rangle _q:=q$.
\item For any two vectors ${\bf x}$ and ${\bf y}$ with the same length, $d({\bf x},{\bf y})$, which is called the hamming distance of ${\bf x}$ and ${\bf y}$, i.e., the number of coordinates in which ${\bf x}$ and ${\bf y}$ differ.
\end{itemize}

\section{System Model of MACC}\label{sec-pre}
In this section, we first introduce the original shared-link MN coded caching model in~\cite{MN}, where each user has its own cache (i.e. the number of cache-nodes and the number of users are equal, while each user accesses a different cache-node). We then present the  MACC model considered in this paper, where each user can access an arbitrary subset of cache-nodes.
\subsection{Shared-link coded caching model}
\label{sub:ori model}
In the shared-link coded caching system~\cite{MN}, a server containing $N$ files of equal length in the library $\mathcal{W}=\{W_{1}, W_{2}, \ldots, $ $W_{N}\}$ is connected by an error-free shared link to $K$ users in  $\{U_1,U_2,\ldots,U_K\}$ with $K\leq N$, and each user has a cache of size   $M$ files where $0\leq M \leq N$. The memory ratio is defined as $\mu=M/N$.
An $F$-division $(K,M,N)$ coded caching scheme contains two phases:
\begin{itemize}
  \item Placement phase. Each file is divided into $F$ packets of equal size.\footnote{\label{foot:uncoded}In this paper, we only consider uncoded cache placement.}
   Each user $U_k$, where $k\in [K]$, caches a total of up to $MF$ packets of files. Let $\mathcal{Z}_{U_k}$ denote the cache content  at user $U_k$. Note that the placement phase is done without knowledge of later requests.
  \item Delivery phase. Each user   requests a file from the server. The file requested by user $U_k$ is denoted by $W_{d_{U_k}}$, and the request vector of all users is denoted by $\mathbf{d}=(d_{U_1},d_{U_2},\ldots,d_{U_K})$. According to the users' cache content and requests, the server broadcasts $S_{{\bf d}}$ packets to all users so that each user's request can be satisfied.
\end{itemize}

In such system, the worst-case number of  transmitted files (a.k.a. worst-case load, or simply load) among all possible requests is expected to be as small as possible, which is defined as
\begin{align}
R=\max\left\{S_{\mathbf{d}}/F \ |\  \mathbf{d}\in[N]^K \right\} . \label{eq:def of load}
\end{align}
For the shared-link coded caching problem, the authors in \cite{YCTC} proposed a combinatorial structure to construct  coded caching schemes with uncoded cache placement (symmetric cross files) and clique-covering delivery,\footnote{\label{foot:clique-covering}Clique-covering delivery means that each multicast message sent is the XOR of some subfiles and is useful to a subset of users, where each user requests one subfile and caches all the other subfiles. }  referred to as a Placement Delivery Array (PDA), with the following definition.
\begin{definition}\rm(\cite{YCTC})
\label{def-PDA}
For  positive integers $K,F, Z$ and $S$, an $F\times K$ array  $\mathbf{P}=(\mathbf{P}(j,k))_{j\in[F] ,k\in[K]}$, composed of a specific symbol $``*"$  and $S$ positive integers from $[S]$, is called a $(K,F,Z,S)$  PDA  if it satisfies the following conditions:
\begin{enumerate}
  \item [C$1$.] the symbol $``*"$ appears $Z$ times in each column;
  \item [C$2$.] each integer occurs at least once in the array;
  \item [C$3$.] for any two distinct entries $\mathbf{P}(j_1,k_1)$ and $\mathbf{P}(j_2,k_2)$,    $\mathbf{P}(j_1,k_1)=\mathbf{P}(j_2,k_2)=s$ is an integer  only if
  \begin{enumerate}
     \item [a.] $j_1\ne j_2$, $k_1\ne k_2$, i.e., they lie in distinct rows and distinct columns; and
     \item [b.] $\mathbf{P}(j_1,k_2)=\mathbf{P}(j_2,k_1)=*$, i.e., the corresponding $2\times 2$  subarray formed by rows $j_1,j_2$ and columns $k_1,k_2$ must be of the following form
  \begin{align*}
    \left(\begin{array}{cc}
      s & *\\
      * & s
    \end{array}\right)~\textrm{or}~
    \left(\begin{array}{cc}
      * & s\\
      s & *
    \end{array}\right).
  \end{align*}
   \end{enumerate}
\end{enumerate}
\hfill $\square$
\end{definition}
 In a $(K,F,Z,S)$ PDA $\mathbf{P}$,  for each $k\in [K]$,  column $k$ represents    user $U_k$; for each $j\in [F]$,  row $j$ represents the $j^{th}$ packet of each file. If $\mathbf{P}(j,k)=*$, then user $U_k$ caches the $j^{th}$ packet of all files. If $\mathbf{P}(j,k)=s$ is an integer, then the $j^{th}$ packet of each file  is not cached by user $U_k$.
  In addition, a multicast message, which is  the XOR of the requested packets indicated by $s$,  is sent by the server at time slot $s$. Condition C$2$ of the definition \ref{def-PDA} implies that the number of multicast messages sent by the server is exactly $S$. So the total transmission load is $R=S/F$. Finally, Condition C$3$   of Definition \ref{def-PDA} guarantees that each user can get the packet it requests, since it has cached all the other packets in the multicast message useful to it  except the packet it requests.  Thus, the following lemma was proved by Yan et al. in \cite{YCTC}.
\begin{lemma}\rm(\cite{YCTC})
\label{le-Fundamental}Using Algorithm \ref{alg:PDA}, an $F$-division caching scheme for the $(K,M,N)$ shared-link coded caching model can be realized by a $(K,F,Z,S)$ PDA  with $\frac{M}{N}=\frac{Z}{F}$. Each user can decode its requested file correctly for any request ${\bf d}$ with the load $R=\frac{S}{F}$.
\hfill $\square$
\end{lemma}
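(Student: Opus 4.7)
The plan is to exhibit an explicit placement/delivery algorithm driven by the array $\mathbf{P}$, verify the memory and load counts from Conditions C1 and C2, and then show that Condition C3 is precisely what makes each transmitted XOR decodable by every user it serves.

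First, I would set up the placement. Split each file $W_n$ into $F$ equal packets $W_{n,1},\ldots,W_{n,F}$ indexed by the rows of $\mathbf{P}$. For each user $U_k$ (column $k$) and each row $j$ with $\mathbf{P}(j,k)=*$, place the packet $W_{n,j}$ in $\mathcal{Z}_{U_k}$ for every $n\in[N]$. By Condition C1 each column contains exactly $Z$ stars, so each user stores $NZ$ packets of total normalized size $NZ/F$. Setting $M/N=Z/F$ matches the cache constraint. This part is routine bookkeeping.

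Next, I would describe the delivery. Upon receiving the request vector $\mathbf{d}=(d_{U_1},\ldots,d_{U_K})$, for each integer $s\in[S]$ the server computes and broadcasts the single coded packet
\begin{equation*}
Y_s \;=\; \bigoplus_{(j,k)\,:\,\mathbf{P}(j,k)=s} W_{d_{U_k},\,j}.
\end{equation*}
By Condition C2 each symbol in $[S]$ appears at least once in $\mathbf{P}$, so exactly $S$ such messages are sent, giving load $R=S/F$ as claimed. The order in which the XORs are indexed is arbitrary, so this step is constructive and immediate.

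The only real content of the proof is decodability, and this is where Condition C3 does the work. Fix a user $U_k$ and a row index $j$ with $\mathbf{P}(j,k)=s\in[S]$; I need to show that $U_k$ can recover $W_{d_{U_k},j}$ from $Y_s$. The packet $W_{d_{U_k},j}$ appears as one of the summands in $Y_s$ by construction. For any other summand $W_{d_{U_{k'}},j'}$ with $\mathbf{P}(j',k')=s$ and $(j',k')\neq(j,k)$, Condition C3(a) gives $k'\neq k$ and $j'\neq j$, and Condition C3(b) then forces $\mathbf{P}(j',k)=*$. By the placement rule this means $U_k$ has cached $W_{n,j'}$ for all $n$, in particular $W_{d_{U_{k'}},j'}$. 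Hence $U_k$ can subtract every interfering term in $Y_s$ and isolate its desired packet. Repeating this for every $(j,k)$ with $\mathbf{P}(j,k)\in[S]$, together with the fact that packets corresponding to stars are cached directly, shows that every user recovers its requested file.

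The main obstacle, if any, is bookkeeping around Condition C3, namely making explicit that the $2\times 2$ pattern ensures every interfering packet in a given multicast message is cached by the intended user. Once this is cleanly stated, the counts from C1 and C2 give the memory ratio and load immediately, completing the proof.
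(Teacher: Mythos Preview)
Your proof is correct and follows exactly the approach the paper uses: the paper's discussion preceding the lemma (together with Algorithm~\ref{alg:PDA}) argues, just as you do, that C1 gives the memory ratio $Z/F$, C2 gives $S$ multicast messages and hence load $S/F$, and C3 guarantees that every interfering packet in a given XOR is cached by the intended user. There is nothing to add.
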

\begin{algorithm}[htb]
\caption{Coded caching scheme for Shared-link caching model based on PDA in \cite{YCTC}}\label{alg:PDA}
\begin{algorithmic}[1]
\Procedure {Placement}{$\mathbf{P}$, $\mathcal{W}$}
\State Divide each file $W_n\in\mathcal{W}$ into $F$ packets, i.e., $W_{n}=(W_{n,j}\ |\ j\in [F])$.
\For{$k\in [K]$}
\State $\mathcal{Z}_{U_{k}}\leftarrow\{W_{n,j}\ |\ \mathbf{P}(j,k)=*, \forall~n\in[N]\}$
\EndFor
\EndProcedure
\Procedure{Delivery}{$\mathbf{P}, \mathcal{W},{\bf d}$}
\For{$s=1,2,\cdots,S$}
\State  Server sends $\bigoplus_{\mathbf{P}(j,k)=s,j\in[F],k\in[K]}W_{d_{U_k},j}$.
\EndFor
\EndProcedure
\end{algorithmic}
\end{algorithm}

Let us briefly introduce the MN coded caching scheme in~\cite{MN} from the viewpoint of PDA, where the resulting PDA is called an MN PDA. For any $\mu\in \{\frac{1}{K}, \frac{2}{K},\ldots,\frac{K-1}{K},1\}$ we let $F={K\choose \mu K}$. We sort all $(\mu K+1)$-subsets of $[K]$  in lexicographic order and define $\phi(\mathcal{S})$ as its order for each $(\mu K+1)$-subset $\mathcal{S}$. Clearly, $\phi$ is a bijection from ${[K]\choose \mu K+1}$ to $[{K\choose \mu K+1}]$.
Then, an MN PDA is defined as a ${K\choose \mu K}\times K$ array $\mathbf{P}=(\mathbf{P}(\mathcal{D},k))_{\mathcal{D}\in {[K]\choose \mu K},k\in [K]}$
by
\begin{align}\label{Eqn_Def_AN}
\mathbf{P}(\mathcal{D},k)=\left\{\begin{array}{cc}
\phi(\mathcal{D}\cup\{k\}), & \mbox{if}~k\notin\mathcal{D}\\
*, & \mbox{otherwise}
\end{array}
\right.
\end{align}
where each of the $\binom{K}{\mu K}$ rows in the array is labelled by a distinct  subset  in $ {[K]\choose \mu K}$. Thus, according to the definition of PDA, the achieved load by the MN PDA is as follows.
\begin{lemma}\rm(MN PDA\cite{MN})
\label{le-MN}
For any positive integer $K$ and each $\mu\in \{\frac{1}{K}, \frac{2}{K},\ldots,\frac{K-1}{K},1\}$, there exists a $\left(K,{K\choose \mu K}, {K-1\choose \mu K-1}, {K\choose \mu K+1}\right)$ PDA, which leads a ${K\choose \mu K}$-division $(K,M,N)$ shared-link coded caching scheme with memory ratio $\frac{M}{N}=\frac{t}{K}$ and   load $R=\frac{K-\mu K}{t+1}$.
\hfill $\square$
\end{lemma}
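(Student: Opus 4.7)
The plan is to verify that the array $\mathbf{P}$ defined in \eqref{Eqn_Def_AN} satisfies Conditions C1, C2, C3 of Definition \ref{def-PDA} with parameters $F=\binom{K}{\mu K}$, $Z=\binom{K-1}{\mu K-1}$, and $S=\binom{K}{\mu K+1}$, and then invoke Lemma \ref{le-Fundamental} to translate the parameters into memory ratio and load.

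First I would check C1. Fix a column $k\in[K]$. By \eqref{Eqn_Def_AN}, the entry $\mathbf{P}(\mathcal{D},k)$ equals $*$ exactly when $k\in\mathcal{D}$, so counting stars in column $k$ reduces to counting $\mu K$-subsets of $[K]$ that contain the fixed element $k$, giving $\binom{K-1}{\mu K-1}=Z$ stars per column. Next I would check C2. Since $\phi$ is a bijection from $\binom{[K]}{\mu K+1}$ onto $[\binom{K}{\mu K+1}]$, for any target integer $s\in[S]$ I can recover the unique $(\mu K+1)$-subset $\mathcal{S}=\phi^{-1}(s)$; then for any $k\in\mathcal{S}$, setting $\mathcal{D}=\mathcal{S}\setminus\{k\}$ produces a row label in $\binom{[K]}{\mu K}$ with $k\notin\mathcal{D}$ and $\mathbf{P}(\mathcal{D},k)=\phi(\mathcal{S})=s$. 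This also shows each integer appears exactly $\mu K+1$ times.

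The main verification, and the only step that requires a small argument, is C3. Suppose $\mathbf{P}(\mathcal{D}_1,k_1)=\mathbf{P}(\mathcal{D}_2,k_2)=s$ for two distinct cells. Then $k_i\notin\mathcal{D}_i$ and $\mathcal{D}_1\cup\{k_1\}=\mathcal{D}_2\cup\{k_2\}=\phi^{-1}(s)=:\mathcal{S}$. If $k_1=k_2$, taking set differences with $\{k_1\}$ forces $\mathcal{D}_1=\mathcal{D}_2$, a contradiction; hence $k_1\neq k_2$, and consequently $\mathcal{D}_1\neq\mathcal{D}_2$, establishing C3(a). For C3(b), since $k_2\in\mathcal{S}$ and $k_2\neq k_1$, we have $k_2\in\mathcal{S}\setminus\{k_1\}=\mathcal{D}_1$, so $\mathbf{P}(\mathcal{D}_1,k_2)=*$, and by symmetry $\mathbf{P}(\mathcal{D}_2,k_1)=*$. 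Thus $\mathbf{P}$ is a $(K,\binom{K}{\mu K},\binom{K-1}{\mu K-1},\binom{K}{\mu K+1})$ PDA.

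Finally I would plug the parameters into Lemma \ref{le-Fundamental}. The memory ratio is $M/N=Z/F=\binom{K-1}{\mu K-1}/\binom{K}{\mu K}=\mu K/K=t/K$, and the delivery load is $R=S/F=\binom{K}{\mu K+1}/\binom{K}{\mu K}=(K-\mu K)/(\mu K+1)=(K-\mu K)/(t+1)$, matching the statement. I expect no real obstacle here: the only delicate point is C3, and even that is essentially a two-line set-theoretic observation that each coded multicast $\bigoplus_{k\in\mathcal{S}}W_{d_{U_k},\mathcal{S}\setminus\{k\}}$ is simultaneously decodable by every user indexed in $\mathcal{S}$.
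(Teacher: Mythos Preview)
Your proof is correct and follows exactly the approach implicit in the paper: the paper states the construction \eqref{Eqn_Def_AN} and simply asserts (citing \cite{MN}) that it is a PDA with the claimed parameters, without spelling out the verification of C1--C3. Your argument fills in precisely those details in the standard way and then invokes Lemma~\ref{le-Fundamental}, so there is nothing to add.
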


\begin{example}\label{MN-pda}
\rm
We then illustrate the MN PDA by the following example where $N=K=4$ and  $M=2$.
By the construction of the MN PDA, we have  the following   $(4,6,3,4)$ PDA,
\begin{align}
\label{eq-PDA-6-4}
\mathbf{P}=\left(\begin{array}{cccc}
*	&	*	&	1	&	2	\\
*	&	1	&	*	&	3	\\
*	&	2	&	3	&	*	\\
1	&	*	&	*	&	4	\\
2	&	*	&	4	&	*	\\
3	&	4	&	*	&	*	
\end{array}\right).
\end{align}
Using Algorithm \ref{alg:PDA}, the detailed caching scheme is as follows.
\begin{itemize}
   \item \textbf{Placement Phase}: From Line 2 of the algorithm we have $W_n=(W_{n,1},W_{n,2},W_{n,3},W_{n,4},W_{n,5},W_{n,6})$ where  $n\in [4]$. Then by Lines 3-5 in Algorithm \ref{alg:PDA}, the users' caches are
       \begin{align*}
       \mathcal{Z}_{U_1}=\left\{W_{n,1},W_{n,2},W_{n,3}\ |\ n\in[4]\right\};\ \ \ \ \ \ \
       \mathcal{Z}_{U_2}=\left\{W_{n,1},W_{n,4},W_{n,5}\ |\ n\in[4]\right\}; \\
       \mathcal{Z}_{U_3}=\left\{W_{n,2},W_{n,4},W_{n,6}\ |\ n\in[4]\right\};\ \ \ \ \ \ \
       \mathcal{Z}_{U4}=\left\{W_{n,3},W_{n,5},W_{n,6}\ |\ n\in[4]\right\}.
       \end{align*}
   \item \textbf{Delivery Phase}: Assume that the request vector is $\mathbf{d}=(1,2,3,4)$. By the transmitting process by Lines 8-10 in Algorithm~\ref{alg:PDA},  the server transmits the multicast messages in Table~\ref{table1}, with total   load   $R=\frac{4}{6}=\frac{2}{3}$.
   \begin{table}[!htp]
  \normalsize{
  \begin{tabular}{|c|c|}
\hline
   Time slot& Multicast messages  \\
\hline
   $1$&$W_{1,4}\oplus W_{2,2}\oplus W_{3,1}$\\ \hline
   $2$&$W_{1,5}\oplus W_{2,3}\oplus W_{4,1}$\\ \hline
  $3$& $W_{1,6}\oplus W_{3,3}\oplus W_{4,2}$\\ \hline
   $4$& $W_{2,6}\oplus W_{3,5}\oplus W_{4,4}$\\ \hline
  \end{tabular}}\centering
  \caption{Delivery phase in Example \ref{MN-pda} }\label{table1}
\end{table}
\end{itemize}
\hfill $\square$
\end{example}
In the literature of shared-link coded caching model, there are many constructions based on  PDA~\cite{CJWY,CJYT,CJTY,CWZW,ZCJ,ZCW,YCTC,YTCC,SZG,MW}. Some other constructions  could be also transformed into PDAs such as  the caching schemes based on  liner block code~\cite{TR}, projective space~\cite{CKSM}, combinatorial designs~\cite{ASK}, etc.  We list  their performances in Table \ref{knownPDA1}.

\begin{table}
  \centering
  \caption{Existing PDAs where $K,K_1,m,t,q,a,b,r\in\mathbb{Z}^+$, $\mu\in \{i/K\ |\ i\in [0:K]\}$ and  $\mu_1\in \{i/K\ |\ i\in [0:K_1]\}$. Define that $\left[m \atop t\right]_q=\frac{(q^m-1)\cdots(q^{m-t+1}-1)}{(q^t-1)\cdots(q-1)}$. The greatest common divisor of $k$ and $K$ is denoted by $gcd(k,K)$.}  \label{knownPDA1}
  \begin{tabular}{|c|c|c|c|c|}
\hline
PDAs                          &$K$ & $F$& $Z$& $S$ \\
\hline
\cite{MN,YCTC,CWZW} & $K$  & ${K\choose \mu K}$ & ${K-1 \choose \mu K-1}$ & ${K\choose \mu K+1}$\\
 \hline

\cite{SJTLD,CJWY}:  &$K$ &$\frac{K_1}{gcd(K_1,K)} {K_1\choose \mu_1K_1}$ &$\frac{K_1}{gcd(K_1,K)} {K_1-1\choose \mu_1 K_1-1}$& $\frac{K}{gcd(K_1,K)} {K_1\choose \mu_1 K_1+1}$ \\
\hline

\multirow{2}{*}{\tabincell{c}{\cite{YCTC,TR,CWZW}: \\$m\geq 2,q\geq 2$}} &  \multirow{2}{*}{$mq$} & $q^{m-1}$  & $q^{m-2}$ & $(q-1)q^{m-1}$ \\  \cline{3-5}

  & & $(q-1)q^{m-1}$ & $(q-1)^2q^{m-2}$ & $q^{m-1}$  \\
\hline

\tabincell{c}{\cite{SZG,CJYT,CWZW}: \\$b< m$, $q\geq 2$}& ${m\choose t}q^t$ & $q^m$ & $\left(1-\left(\frac{q-1}{q}\right)^t\right)q^m$  &  $(q-1)^tq^m$ \\[8pt]
 \hline

\multirow{2}{*}{\tabincell{c}{\\[-0.2cm] \cite{CJYT}: $z<q$,\\ $t< m$}} &${m\choose t}q^t$ & $\left\lfloor\frac{q-1}{q-z}\right\rfloor^tq^m$ & $\left(1-\left(\frac{q-z}{q}\right)^t\right)\left\lfloor\frac{q-1}{q-z}\right\rfloor^tq^m$ & $(q-z)^tq^m$ \\ [8pt] \cline{2-5}

 & $mq$ & $\lfloor\frac{q-1}{q-z}\rfloor q^{m-1}$ &  $\lfloor\frac{q-1}{q-z}\rfloor q^{m-2}$  & $(q-z)q^{m-1}$ \\ [5pt]

 \hline

\multirow{2}{*}{\tabincell{c}{\\[-0.2cm] \cite{CWZW}: $t< m$,\\ $q\geq 2$}}& ${m\choose t}q^t$ & $q^{m-t}$ &  $\left(1-\left(\frac{q-1}{q}\right)^t\right)q^{m-t}$  & $(q^t-1)q^{m-t}$ \\ [8pt] \cline{2-5}

 &${m\choose t}q^t$ & $q^{m-1}$ & $\left(1-\left(\frac{q-1}{q}\right)^t\right)q^{m-1}$ & $(q-1)^tq^{m-1}$ \\ [8pt]
 \hline

\tabincell{c}{\cite{YTCC,CJTY}: $\eta <a$\\ $<m$, $\eta <b<m$, \\$a+b\leq m+\eta $} &  ${m\choose a}$ & ${m\choose b}$ &  $\left(1-\frac{{a\choose \eta }{m-a\choose b-\eta }}{{m\choose b}}\right){m\choose b}$  & \tabincell{c}{${m\choose a+b-2\eta }\cdot$\\ $\min\{{m-a-b+2\eta \choose \eta },$\\ ${{a+b-2\eta \choose a-\eta }}\}$} \\
\hline

\multirow{2}{*}{\tabincell{c}{\cite{CKSM}: $a+t\leq m$,\\ prime power $q\geq 2$}}& \tabincell{c}{$\frac{1}{t!}q^{\frac{t(t-1)}{2}}$\\$\prod_{i=0}^{t-1}\left[m-i \atop 1\right]_q$}& \tabincell{c}{$\frac{1}{a!}q^{\frac{a(a-1)}{2}}$\\$\prod_{i=0}^{a-1}\left[m-i \atop 1\right]_q$}
&\tabincell{c}{$\frac{1}{a!}q^{\frac{a(a-1)}{2}}\left(\prod_{i=0}^{a-1}\left[m-i \atop 1\right]_q\right.$\\ $\left.-q^{at}\prod_{i=0}^{a-1}\left[m-t-i \atop 1\right]_q\right)$}  &\tabincell{c}{$\frac{1}{(a+t)!}q^{\frac{(a+t)(a+t-1)}{2}}$\\$\prod_{i=0}^{a+t-1}\left[m-i \atop 1\right]_q$} \\
\cline{2-5}
&$\left[m \atop t\right]_q$& $\left[m \atop a+t\right]_q$ &
$\left[m \atop a+t\right]_q-\left[m-t \atop a\right]_q$&
$\left[m \atop a\right]_q$\\
 \hline

\tabincell{c}{\cite{ASK}: prime power \\ $q\geq 2$}&$q^2+q+1$& $q^2+q+1$ &$q^2$&$q^2+q+1$ \\
\hline

\end{tabular}
\end{table}

\subsection{Multiaccess coded caching}
A  multiaccess coded caching problem contains a server with a library of $N$ equal-length files  (denoted by $\mathcal{W}=\{W_{1},  \ldots,   W_{N}\}$), $\Gamma$ cache-nodes (denoted by $ C_1, \ldots,C_{\Gamma} $), and $K\leq N$ users (denoted by $ U_{1},  \ldots, U_{K} $). Each cache-node has a cache of size  $M$ files where $0\leq M \leq  \frac{N}{L}$. The memory ratio is defined as $\mu=M/N$.
For any integer $k\in [K]$, $\mathcal{B}_k$ denotes the set of cache-nodes accessible to user $U_k$. Assume that each user $k\in [K]$ can access a distinct set of $L$ cache-nodes, i.e., $|\mathcal{B}_k|=L$. We call the above caching model as  $(L,K,\Gamma,M,N)$ multiaccess coded caching system with access topology $\mathfrak{B}=\{\mathcal{B}_k\ |\ k\in[K]\}$.

An $F$-division  $(L,K,\Gamma,M,N)$ coded caching scheme with access topology $\mathfrak{B}$  runs in two phases:
\begin{itemize}
\item {\bf Placement phase:} Each file is divided into $F$ packets of equal size. Each cache-node $C_{\gamma}$ where $\gamma\in[\Gamma]$, caches a totally of up to $MF$ packets of   files. Let $\mathcal{Z}_{C_{\gamma}}$ denote the cache content at cache-node $C_{\gamma}$. The placement phase is also done without knowledge of later requests. Each user $U_{k}$ where $k \in [K]$ can retrieve the packets cached at the $L$ cache-nodes  in  $\mathcal{B}_k$. Let $\mathcal{Z}_{U_k}$ denote the packets retrievable by user $U_k$.
\item {\bf Delivery phase:} Each user requests one file. According to the request vector $\mathbf{d}=(d_{U_1},d_{U_2}$, $\ldots$, $d_{U_{K}})$,
     the cache  content in cache-nodes and the  access topology, the server broadcasts $S_{{\bf d}}$ coded packets to all users, such that each user's request can be satisfied.
\end{itemize}

We aim to design multiaccess  coded caching schemes with minimum worst-case load as defined in~\eqref{eq:def of load}.

The PDA construction (review in Section~\ref{sub:ori model}) could be extended to the MACC model, to construct MACC coded caching schemes with
 with the uncoded placement and one-shot delivery, as proposed in \cite{CWLZC}. The placement and delivery strategies of such MACC can be characterised by three arrays defined as follows.
\begin{definition}[\cite{CWLZC}]\rm
\label{defn:three arrays}
\begin{itemize}
\item An $F\times \Gamma$ node-placement array $\mathbf{C}$ consists of star and null, where $F$ and $\Gamma$ represent the subpacketization of each subfile and the number of cache-nodes,  respectively. For any integers $j\in [F]$ and $\gamma\in [\Gamma]$, the entry $\mathbf{C}(j,\gamma)$ is star if and only if the $\gamma^{\text{th}}$ cache-node caches the $j^{\text{th}}$ packet of each $W_n$ where $n\in [N]$.
\item An $F\times K$ user-retrieve array $\mathbf{U}$ consists of star and null, where $F$ and $K$ represent the subpacketization of each subfile and the number of users, respectively. For any integers $j\in [F]$ and $\gamma\in [\Gamma]$, the entry $\mathbf{U}(j,k)$ is a star if and only if the $k^{\text{th}}$ user can retrieve the $j^{\text{th}}$ packet  of each $W_n$ where $n\in [N]$ from its connected cache-nodes.
\item An $F\times K$ user-delivery array $\mathbf{Q}$ consists of $\{*\}\cup[S]$, where $F$, $K$ and the stars in $\mathbf{Q}$ have the same meaning as $F$, $K$ of $\mathbf{U}$ and the stars in $\mathbf{U}$, respectively. Each integer represents one multicast message, and $S$ represents the total number of multicast messages sent in the delivery phase.
    \hfill $\square$
\end{itemize}
\end{definition}
The authors in \cite{CWLZC} showed that if user-delivery array $\mathbf{Q}$ satisfies   Condition C3 in Definition~\ref{def-PDA}, each user can decode its required file by its retrievable cache content and received coded messages from server. Furthermore if each column of $\mathbf{Q}$ has exactly $Z$ stars, then $\mathbf{Q}$ is a $(K,F,Z,S)$ PDA. In this paper we will  construct the above arrays to obtain new MACC schemes. 

For  the sake of simplicity we do not distinguish between   user $U_k$ and its accessible cache-node set $\mathcal{B}_k$ unless otherwise stated. Then any fixed access topology can be represented by an appropriate combinatorial structure. In this paper we will consider   some classical combinatorial structures in combinatorial design theory as the access topologies.  In the following we will introduce some  necessary combinatorial concepts that will be used in our paper.


\section{Combinatorial Designs}
\label{sec-combinatorial-concepts}
\subsection{$t$-design}
\label{sub-design}
\begin{definition}\rm(\cite{Stinson}, Design)
 A design is a pair $(\mathcal{X}, \mathfrak{B})$ such that the following properties are
satisfied:
\begin{itemize}
\item $\mathcal{X}$ is a set of elements (called points), and
\item $\mathfrak{B}$ is a collection of non-empty subsets of $\mathcal{X}$ (called blocks).
\end{itemize}
\hfill $\square$
\end{definition}
\begin{definition}\rm(\cite{CD}, $t$-design)
\label{def-design}
Let $\Gamma$, $K$, $L$, $t$ and $\lambda$ be positive integers. A $t$-$(\Gamma,L,K,\lambda)$ design is a design $(\mathcal{X}, \mathfrak{B})$ where $\mathcal{X}$ has $\Gamma$ points and $\mathfrak{B}$ has $K$ blocks such that the following properties are
satisfied:
\begin{itemize}
\item $|\mathcal{B}|=L$ for any $\mathcal{B}\in \mathfrak{B}$;
\item every $t$-subset of $\mathcal{X}$ is contained in exactly $\lambda$ blocks.
\end{itemize}
\hfill $\square$
\end{definition}
By Definition \ref{def-design},   the number of blocks is
\begin{align}\label{eq-value-K}
K=\frac{\lambda{\Gamma\choose t}}{{L\choose t}}\approx O\left(\Gamma^t\right) \ \ \ \ \  (\Gamma\rightarrow \infty, \Gamma \gg t).
\end{align}Therefore, a $t$-$(\Gamma,L,K,\lambda)$ design is also referred to as a $t$-$(\Gamma,L,\lambda)$ design  in this paper.
 Note that for a $t$-$(\Gamma,L,K,\lambda)$ design, each point appears the same time in all blocks, i.e., $r= KL/\Gamma=\frac{\lambda \binom{\Gamma-1}{t-1}}{\binom{L-1}{t-1}}$.

A  $t$-$(\Gamma,L,\lambda)$ design is also a $t'$-$(\Gamma,L,\lambda_{t'})$ design where $t'\leq t$ and
\begin{align}\label{eq-occurrence}
\lambda_{t'}= \frac{\lambda{\Gamma-t'\choose t-t'}}{ \binom{L-t'}{t-t'}}.
\end{align}
The $2$-$(\Gamma,L,\lambda)$ design is always called $(\Gamma,L,\lambda)$ balanced incomplete block design (in short $(\Gamma,L,\lambda)$ BIBD).

\begin{example}\rm
\label{exam-1-t-design}
1) When $\Gamma=7$ and $L=3$, let $\mathcal{X}=\{1,2,3,4,5,6,7\}$ and
$$\mathfrak{B}=\{\{1,2,4\},\{2,3,5\},\{3,4,6\},\{4,5,7\},\{5,6,1\},\{6,7,2\},\{7,1,3\}\}.$$
Then $(\mathcal{X},\mathfrak{B})$ is a $2$-$(7,3,1)$ design, i.e., $(7,3,1)$ BIBD.

2) When $\Gamma=9$ and $L=3$, let $\mathcal{X}=\{1,2,3,4,5,6,7,8,9\}$ and
\begin{align*}
\mathfrak{B}=&\{\{1,4,7\},\{2,5,8\},\{3,6,9\},\{1,2,3\},\{4,5,6\},\{7,8,9\},\{1,6,8\},
\{2,4,9\},\{3,5,7\}, \\&
\{1,5,9\},\{2,6,7\},\{3,4,8\}
\}.
\end{align*}
Then $(\mathcal{X},\mathfrak{B})$ design is a $2$-$(9,3,1)$ design, i.e., $(9,3,1)$ BIBD.
\hfill $\square$
\end{example}

Recently, the following sufficient and necessary condition of the existence   of  $t$-design was proved in \cite{Peter,GKLO}.
\begin{lemma}[\cite{Peter,GKLO}]\rm
\label{conjecture-design}
Given $t$, $L$ and $\lambda$, there exists an integer $\Gamma_0(t,L,\lambda)$ (which is a function of $(t,L,\lambda)$) such that
for any $\Gamma>\Gamma_0(t,L,\lambda)$, a $t$-$(\Gamma,L,\lambda)$ design exists if and only if for any $0\leq i\leq t-1$,
${L-i\choose t-i}$ divides $\lambda {\Gamma-i\choose t-i}$.
\hfill $\square$
\end{lemma}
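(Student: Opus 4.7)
The necessity direction reduces to a standard double-counting argument. Fix any $i$-subset $\mathcal{I}\subseteq\mathcal{X}$ with $0\le i\le t-1$ and count pairs $(\mathcal{T},\mathcal{B})$ with $\mathcal{I}\subseteq\mathcal{T}\subseteq\mathcal{B}$, $|\mathcal{T}|=t$, $\mathcal{B}\in\mathfrak{B}$. Counting by extending $\mathcal{I}$ to $\mathcal{T}$ first gives $\binom{\Gamma-i}{t-i}\lambda$; counting by the number $\lambda_i$ of blocks through $\mathcal{I}$ and then choosing $\mathcal{T}$ inside $\mathcal{B}$ gives $\lambda_i\binom{L-i}{t-i}$. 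Equating yields formula \eqref{eq-occurrence}, and since $\lambda_i$ must be a nonnegative integer, the divisibility $\binom{L-i}{t-i}\mid\lambda\binom{\Gamma-i}{t-i}$ follows for every $0\le i\le t-1$.

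The sufficiency direction is the deep half of the statement. The plan is to follow Keevash's randomized algebraic approach. First I would run a random greedy (``nibble'') process: at each step, sample a uniformly random $L$-subset of $\mathcal{X}$ that does not violate the current partial-design constraint, update the residual ``demand'' hypergraph of $t$-subsets still needing coverage, and iterate. A Rödl-nibble / Pippenger–Spencer style concentration argument shows that, under the hypothesis $\Gamma\gg\Gamma_0(t,L,\lambda)$, this process yields a partial $t$-design covering all but a tiny fraction of the required $t$-subsets, leaving a ``deficit'' whose structure is pseudorandom and whose coverage vector inherits the divisibility conditions of Lemma~\ref{conjecture-design}.

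The hard step is the \emph{absorption}: one must convert this near-design into an exact $t$-$(\Gamma,L,\lambda)$ design. The idea is to reserve, before running the nibble, an absorbing structure with the property that \emph{any} admissible small deficit can be fixed by locally swapping pre-reserved blocks; Keevash builds such absorbers from algebraic templates on affine/projective spaces combined with random perturbations, while the iterative-absorption approach of Glock–K\"uhn–Lo–Osthus uses a nested cascade of vortices to successively shrink the deficit.

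The main obstacle is precisely this absorption step: one needs the absorber to be \emph{robust}, i.e.\ flexible enough to simultaneously accommodate every residual configuration consistent with the stated divisibility conditions, while being rare enough not to disturb the nibble analysis. Quantifying this flexibility and aligning the output of the nibble with what the absorber can repair is the core of the Keevash and GKLO machinery. Rather than reproduce that technology in this paper, I would simply invoke \cite{Peter,GKLO} for the sufficiency and give the short double-counting argument above for necessity.
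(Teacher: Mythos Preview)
Your proposal is correct and in fact goes well beyond what the paper does: the paper gives no proof whatsoever for this lemma, simply stating it with the citation \cite{Peter,GKLO} and placing the $\square$ immediately after the statement. Your double-counting argument for necessity is the standard one (and is exactly how the paper justifies formula~\eqref{eq-occurrence} just above the lemma), and your high-level outline of the Keevash nibble-plus-absorption and GKLO iterative-absorption machinery is accurate; since you ultimately defer sufficiency to \cite{Peter,GKLO}, your treatment strictly subsumes the paper's.
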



For example, if $t=2$ and $L=3$,  we have $\Gamma_0(2,3,1)=\Gamma_0(2,3,2)=2$ in
\cite[Page 72, Section 3, Chapter II]{CD}.
Since the general expression of $\Gamma_0(t,L,\lambda)$ is very complicated, the interested reader could refer to \cite{Peter,GKLO} for the details.

\subsection{$t$-GDD}
\label{sub-GDD}
Next we will review another  specific type of  design, referred to as  group divisible design, which is also useful to our later construction.
\begin{definition}\rm(\cite{CD}, $t$-GDD)
\label{def-GDD}
Let $L$, $t$, $q$ and $m$ be positive integers with $t\leq L\leq m$. A $(m ,q, L,\lambda)$ {\em group divisible $t$-design} (a.k.a, $t$-$(m ,q, L,\lambda)$ GDD) is a triple $(\mathcal{X}, \mathfrak{G}, \mathfrak{B})$ where \begin{itemize}
\item $\mathcal{X}$ is a set of $\Gamma=mq$ points;
\item $\mathfrak{G}=\{\mathcal{G}_1,\mathcal{G}_2,\ldots,\mathcal{G}_{m}\}$ is a partition of $\mathcal{X}$ into $m$ non-overlapping subsets (called groups), where each subset has size $q$;
\item $\mathfrak{B}$ is a family of $L$-blocks of $\mathcal{X}$ such that every block intersects every group in at most one point, and every $t$-subset of points from $t$ distinct groups belongs to exactly $\lambda$ blocks.
\end{itemize}\hfill $\square$
\end{definition}
We can obtain the number of blocks in $\mathfrak{B}$ by Definition \ref{def-GDD} as follows.
\begin{align}\label{eq-GDD-blocks-number}
K=\frac{\lambda{m\choose t}q^t}{{L\choose t}}
\end{align}
From \eqref{eq-value-K} and \eqref{eq-GDD-blocks-number} the number of blocks in a $t$-$(\Gamma,L,\lambda)$ design is no smaller than that of a $t$-$(m ,q, L,\lambda)$ GDD, since
\begin{align*}
\frac{\lambda{\Gamma\choose t}}{{L\choose t}}=\frac{\lambda{mq\choose t}}{{L\choose t}} \geq \frac{\lambda{m\choose t}q^t}{{L\choose t}},
\end{align*}
where the equality holds when $q=1$. In other words, when $q=1$, the $t$-$(m ,1, L,\lambda)$ GDD is exactly $t$-$(m,L,\lambda)$ design.
\begin{remark}[Point representation of GDD]\rm
By Definition \ref{def-GDD}, in a $t$-$(m ,q, L,\lambda)$ GDD for any $u\in [m]$ and $v\in [q]$, we can denote the $v^{\text{th}}$ point of $u^{\text{th}}$ group by a vector $(u,v)$.
\end{remark}
\begin{example}\rm
\label{example-2-GDD}
When $m=3$, $q=2$ and $L=3$, we have $\Gamma=mq=6$. Let \begin{eqnarray}
\begin{split}
\mathcal{X}&=\{(1,1),(1,2),(2,1),(2,2),(3,1),(3,2)\} ,  \\
\mathfrak{G}&=\{\mathcal{G}_1=\{(1,1),(1,2)\},
\mathcal{G}_2=\{(2,1),(2,2)\},
\mathcal{G}_{3}=\{(3,1),(3,2)\}\},&\\
\mathfrak{B}&=\{\{(1,1),(2,1),(3,1)\},
\{(1,2),(2,1),(3,2)\},
\{(1,1),(2,2),(3,2)\},
\{(1,2),(2,2),(3,1)\}
\}.&\label{eq-example-3-2-3-block-GDD}
\end{split}
\end{eqnarray}  By Definition \ref{def-GDD} we can check that the triple $(\mathcal{X},\mathfrak{G},\mathfrak{B})$ is a 
$2$-$(3,2,3,1)$ GDD.
\end{example}

By Definition \ref{def-design} and Definition \ref{def-GDD}, we can see that any $t$-design could be seen as a $t$-GDD with $q=1$.
There are many  results on the construction and existence   of the $t$-GDDs;  please refer to \cite[Section IV-4]{CD} for   the details. 
 The existence of the $t$-GDD with $\lambda=1$ is given in the following lemma.
\begin{lemma}(\cite{Mohacsy-2})\rm
 \label{lemma-existence-GDD}
Given $t$, $L$ and $m$ where $t\leq L\leq m$, there exists an integer $q_0(t,L,m)$ (which is a function of $(t,L,m)$) such that
for any $q\geq q_0$, a $t$-$(m,q,L,1)$ GDD exists if and only if for any $0\leq i\leq t-1$,
${L-i\choose t-i}$ divides $q^{t-i}{m-i\choose t-i}$.
\end{lemma}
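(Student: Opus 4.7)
The plan is to separate the statement into two directions. The necessity of the divisibility conditions is a routine double-counting argument, while the sufficiency (asymptotic existence) is the deep part, and I would follow the strategy used for asymptotic existence results for designs, adapted to the group-divisible setting.

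For the necessity direction, I would fix $i\in[0:t-1]$ and any $i$-subset $\mathcal{T}$ of points drawn from $i$ distinct groups, and count in two ways the pairs $(\mathcal{S},\mathcal{B})$ where $\mathcal{S}$ is a $t$-subset of points from $t$ distinct groups with $\mathcal{T}\subseteq\mathcal{S}$ and $\mathcal{B}\in\mathfrak{B}$ with $\mathcal{S}\subseteq\mathcal{B}$. The defining property of the $t$-GDD (Definition \ref{def-GDD}) forces each such $\mathcal{S}$ into exactly one block, so the total count is $q^{t-i}\binom{m-i}{t-i}$. On the other side, if $\lambda_i(\mathcal{T})$ denotes the number of blocks containing $\mathcal{T}$, then each such block contributes $\binom{L-i}{t-i}$ completions of $\mathcal{T}$ to an admissible $\mathcal{S}$. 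Hence $\lambda_i(\mathcal{T})=\frac{q^{t-i}\binom{m-i}{t-i}}{\binom{L-i}{t-i}}$ is independent of $\mathcal{T}$, and integrality forces $\binom{L-i}{t-i}\mid q^{t-i}\binom{m-i}{t-i}$ for every $i\in[0:t-1]$.

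For the sufficiency direction, I would appeal to the asymptotic existence machinery for designs. Concretely, the plan is: (i) verify that the necessary divisibility conditions are exactly the local integrality conditions required for a ``$t$-design on the complete $m$-partite $t$-graph with parts of size $q$''; (ii) realize the desired $t$-$(m,q,L,1)$ GDD as an exact decomposition of this $m$-partite hypergraph into $L$-cliques that intersect each part in at most one vertex; (iii) construct such a decomposition by combining a near-decomposition obtained from a random or nibble-type argument with an absorbing gadget that cleans up the leftover $t$-edges. This mirrors the strategy by which Keevash and Glock--Kühn--Lo--Osthus established analogous results in the undirected/unpartitioned case, and it is the pathway Mohácsy follows in the multipartite setting. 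Alternatively, one can try a purely recursive approach, starting from a finite list of small ingredient GDDs and amplifying $q$ using Wilson-type fundamental constructions and PBD closure; this gives existence for $q$ in a suitable arithmetic progression and, combined with ``filling in groups'' lemmas, eventually for all sufficiently large $q$ meeting the divisibility conditions.

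The hard part is clearly step (iii) of the sufficiency argument: showing that, once the local divisibility conditions hold, a global exact decomposition actually exists rather than merely an approximate one. The leftover after the nibble is a pseudorandom but sparse sub-hypergraph, and constructing an absorber that respects the partition constraint (no block meeting any group twice) is substantially more delicate than in the unpartitioned case, because one must preserve the $m$-partite structure throughout the random rounding and the absorption. I would not attempt to reproduce Mohácsy's proof in full; rather, I would cite Lemma~\ref{lemma-existence-GDD} as a black box and only verify, for each specific family of parameters used later in the paper's constructions, that the divisibility hypotheses of the lemma are satisfied, so that the lemma yields the GDDs needed for our caching schemes.
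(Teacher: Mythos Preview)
The paper does not prove this lemma at all: it is stated with a citation to \cite{Mohacsy-2} and used as a black box, exactly as you yourself suggest doing at the end of your proposal. So there is nothing to compare against on the paper's side.

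Your necessity argument is correct and standard. Your outline of the sufficiency direction is a reasonable high-level roadmap of how such asymptotic existence results are proved, though it conflates two rather different proof traditions (the Wilson/Mohácsy recursive-construction approach and the Keevash/Glock--K\"uhn--Lo--Osthus absorber approach); Mohácsy's actual proof predates the absorber method and proceeds recursively. In any case, since the paper only cites the result, your instinct to treat it as a black box and simply check the divisibility hypotheses when needed is exactly right.
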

In the following we will show that the concept of cross resolvable design proposed in \cite{KMR} is a special case of  $t$-GDD.
\begin{definition}\rm(\cite{KMR}, Resolvable design and  cross resolvable design)
\label{def-cross-resolvable-desgin}
A design $(\mathcal{V},\mathfrak{A})$ is called resolvable if the blocks in $\mathfrak{A}$ can be divided into parallel classes where the blocks in each class partition  the set of elements $\mathcal{V}$.
A resolvable design $(\mathcal{V},\mathfrak{A})$ is called  $t$-cross if the intersection of any $t$ blocks drawn from any $t$ distinct parallel classes has the same size $\lambda_t$, referred to as the $t^{\text{th}}$ cross intersection number.\hfill $\square$
\end{definition}

 For the sake of clarification, in the following  we use the $t$-$(v,k,\Gamma,m,\lambda_t)$ resolvable design to represent the $t$-cross resolvable design $(\mathcal{V},\mathfrak{A})$, by letting   $|\mathcal{V}|=v$ and $\mathfrak{A}$ contain $\Gamma$ blocks with of size $k$, where $\mathfrak{A}=\mathfrak{A}_1\cup\mathfrak{A}_2\cup \cdots \cup \mathfrak{A}_m$ can be divided into     $m$ parallel classes.

 For any design $(\mathcal{V},\mathfrak{A})$,  we define its dual design as follows:
 we regard the blocks in $\mathfrak{A}$ as points and the points in $\mathcal{V}$ as blocks, where in the dual design  each point represented by $\mathcal{A}\in \mathfrak{A}$ is contained by a block represented by $a\in \mathcal{V} $ if and only if $a\in \mathcal{A}$;  then the resulting design $(\mathcal{X},\mathfrak{B})$
 is called the dual design of $(\mathcal{V},\mathfrak{A})$. Clearly a design is a dual design of its dual design.

\begin{example}\rm
\label{exam-dual-8-4}
Let $\mathcal{V}=[4]$ and $\mathfrak{A}=\mathfrak{A}_1\cup\mathfrak{A}_2\cup\mathfrak{A}_3$ where
\begin{align}
&\mathfrak{A}_1=\{\mathcal{A}_1=\{1,3\},\mathcal{A}_2=\{2,4\}\}, \ \ \
\mathfrak{A}_2=\{\mathcal{A}_3=\{1,2\},\mathcal{A}_4=\{3,4\}\}, \nonumber \\
&\mathfrak{A}_3=\{\mathcal{A}_5=\{1,4\},\mathcal{A}_6=\{2,3\}\}.\label{eq-cross-design}
\end{align}
It can be seen that the intersection of any two blocks from different parallel classes contains exactly one point. So by Definition \ref{def-cross-resolvable-desgin}, $(\mathcal{V},\mathfrak{A})$ is a $2$-$(v,k,\Gamma,m,\lambda_t)=(4,2,6,3,1)$ resolvable design. Then its dual design is  $(\mathcal{X},\mathfrak{B})$ where
\begin{align}\label{eq-3-2-3-point-GDD}
\mathcal{X}=\mathfrak{A}=\{(1,1)=\mathcal{A}_1,(1,2)=\mathcal{A}_2,
(2,1)=\mathcal{A}_3,(2,2)=\mathcal{A}_4,(3,1)=\mathcal{A}_5,\mathcal{A}_6\}
\end{align} and all the blocks are
\begin{align}
\mathfrak{B}=\{\ &
\{(1,1)=\mathcal{A}_1,(2,1)=\mathcal{A}_3,(3,1)=\mathcal{A}_5\},\nonumber\\
&\{(1,2)=\mathcal{A}_2,(2,1)=\mathcal{A}_3,(2,2)=\mathcal{A}_6\},\nonumber\\
&\{(1,1)=\mathcal{A}_1,(2,2)=\mathcal{A}_4,(3,2)=\mathcal{A}_6\},\nonumber\\
&\{(1,2)=\mathcal{A}_2,(2,2)=\mathcal{A}_4,(3,1)=\mathcal{A}_5\}
\ \}.\label{eq-3-2-3-block-GDD}
\end{align}Let $\mathcal{G}_i=\mathfrak{A}_i$ for each $i\in [3]$, then we can define the set of groups
\begin{align}
\mathfrak{G}=\{\ &\mathcal{G}_1=\{(1,1)=\mathcal{A}_1,
(1,2)=\mathcal{A}_2\},\nonumber\\
&\mathcal{G}_2=\{(2,1)=\mathcal{A}_3,
(2,2)=\mathcal{A}_4\},\nonumber\\
&\mathcal{G}_{3}=\{(3,1)=\mathcal{A}_5,(3,2)=\mathcal{A}_6\}\ \}.
\label{eq-3-2-3-group-GDD}
\end{align}
We can check that $(\mathcal{X},\mathfrak{G},\mathfrak{B})$ is a $t$-$(m,q,L,\lambda)=$ $2$-$(3,2,3,1)$ GDD  which is exactly the GDD in Example \ref{example-2-GDD}.
\hfill $\square$
\end{example}
Using the above transformation method in Example \ref{exam-dual-8-4}, we can obtain the following result that the dual of  any cross resolvable design is a GDD, whose proof could be found in Appendix~\ref{sec:duality}.
\begin{lemma}\rm
\label{le-cross-GDD}
The dual of a $t$-$(v,k,\Gamma,m,\lambda_t)$ resolvable design $(\mathcal{V}, \mathfrak{A})$ is a $t$-$(m,q,m,\lambda_t)$ GDD where $q=v/k$.
\hfill $\square$
\end{lemma}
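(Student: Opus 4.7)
The plan is to verify each axiom of a $t$-$(m,q,m,\lambda_t)$ GDD for the dual design $(\mathcal{X},\mathfrak{B})$ by translating combinatorial statements in $(\mathcal{V},\mathfrak{A})$ into dual statements. First I would set up the dictionary exactly as in Example~\ref{exam-dual-8-4}: the point set of the dual is $\mathcal{X}=\mathfrak{A}$, which has $|\mathfrak{A}|=\Gamma=mq$ elements (since the $m$ parallel classes of $\mathfrak{A}$ each contain $v/k=q$ blocks partitioning $\mathcal{V}$). I would then define the groups of the dual as $\mathcal{G}_i:=\mathfrak{A}_i$ for each $i\in[m]$, so that $\mathfrak{G}=\{\mathcal{G}_1,\ldots,\mathcal{G}_m\}$ is a partition of $\mathcal{X}$ into $m$ groups of size $q$, as required.

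Next I would describe the block set $\mathfrak{B}$ of the dual explicitly: for each $a\in\mathcal{V}$, let $\mathcal{B}_a:=\{\mathcal{A}\in\mathfrak{A}:a\in\mathcal{A}\}$, and set $\mathfrak{B}=\{\mathcal{B}_a:a\in\mathcal{V}\}$. Because the blocks of each parallel class $\mathfrak{A}_i$ partition $\mathcal{V}$, every $a\in\mathcal{V}$ lies in exactly one block of each $\mathfrak{A}_i$; hence $|\mathcal{B}_a\cap\mathcal{G}_i|=1$ for every $i$, which simultaneously gives $|\mathcal{B}_a|=m$ (so every dual block has the prescribed size $L=m$) and $|\mathcal{B}_a\cap\mathcal{G}_i|\le 1$ (so every dual block meets every group in at most one point).

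It remains to check the balance condition. Fix any $t$-subset $T=\{\mathcal{A}_{i_1},\ldots,\mathcal{A}_{i_t}\}\subseteq\mathcal{X}$ whose members lie in $t$ distinct groups, i.e.\ whose members are drawn from $t$ distinct parallel classes $\mathfrak{A}_{j_1},\ldots,\mathfrak{A}_{j_t}$ of the primal. A block $\mathcal{B}_a\in\mathfrak{B}$ contains $T$ if and only if $a\in\mathcal{A}_{i_\ell}$ for every $\ell\in[t]$, i.e.\ $a\in\bigcap_{\ell=1}^{t}\mathcal{A}_{i_\ell}$. By the cross-resolvable hypothesis (Definition~\ref{def-cross-resolvable-desgin}), this intersection has cardinality exactly $\lambda_t$, so $T$ is contained in exactly $\lambda_t$ blocks of $\mathfrak{B}$. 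All three defining properties of Definition~\ref{def-GDD} are therefore satisfied, proving that $(\mathcal{X},\mathfrak{G},\mathfrak{B})$ is a $t$-$(m,q,m,\lambda_t)$ GDD.

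No serious obstacle is anticipated; the proof is a bookkeeping exercise that crucially uses the two special features of a cross resolvable design---the parallel-class partition property to obtain the group structure and the common block size $m$ in the dual, and the cross-intersection property to obtain the constant index $\lambda_t$. The only care needed is to make sure the map $a\mapsto\mathcal{B}_a$ is injective (which follows because a point of $\mathcal{V}$ is determined by the $m$ blocks of $\mathfrak{A}$ covering it, one per class), so that $|\mathfrak{B}|=v$ and we really do produce the promised dual structure.
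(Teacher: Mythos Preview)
Your proof is correct and follows the same approach as the paper's: take the groups to be the parallel classes, the dual blocks to be $\mathcal{B}_a=\{\mathcal{A}\in\mathfrak{A}:a\in\mathcal{A}\}$, use the partition property of each class to get $|\mathcal{B}_a\cap\mathcal{G}_i|=1$ (hence block size $m$ and at most one point per group), and use the cross-intersection property for the index $\lambda_t$; if anything, your verification is more explicit than the paper's. One small caveat: your closing remark that $a\mapsto\mathcal{B}_a$ is injective does not follow from $t$-cross resolvability alone (it would require the intersection of one block from each of the $m$ classes to be a singleton, i.e.\ $\lambda_m=1$), but this is harmless since Definition~\ref{def-GDD} allows $\mathfrak{B}$ to be a family with repeated blocks, so injectivity is not needed for the lemma.
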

\subsection{Orthogonal array (OA)}
\label{sub-OA}
Next, we review a classical combinatorial structure called an orthogonal array, which will be used in the placement phase of the proposed caching scheme for the $t$-GDD access topology.
\begin{definition}\rm(\cite{Stinson}, OA)
\label{def-OA}
Let $\mathbf{A}$ be an $F_1\times m$ matrix whose element is in $[q]$,  for positive integers $F_1$, $m$,  $q\geq 2$, and $s\leq m$. $\mathbf{A}$ is an orthogonal array (OA) of strength $s$, denoted by OA$_{\lambda}(F_1,m,q,s)$, if each $1\times s$ row vector  in $[q]^s$   appears exactly $\lambda$ times in  $\mathbf{A}(\cdot,\mathcal{S})$ for each $\mathcal{S}\in {[m]\choose s}$.
\hfill $\square$
\end{definition}
By definition, we have $F_1=\lambda q^s$;  then any OA$_{\lambda}(F_1,m,q,s)$ can  be also written as OA$_{\lambda}(m,q, s)$ for short \cite{Stinson}. The parameter $\lambda$ is the index of the orthogonal array. If $\lambda$ is omitted, then it is understood to be $1$.
\begin{example}\rm
\label{exam-OA-3-2-3}
Let us consider the following array,
\begin{align}
\mathbf{A}=\left(\begin{array}{ccc}
1&1&1\\
2&1&2\\
1&2&2\\
2&2&1
\end{array}\right).\label{eq-OA-4}
\end{align}
Then we have
\begin{align*}
& \mathbf{A}(\cdot,\{1,2\})=\left(\begin{array}{cc}
1&1\\
2&1\\
1&2\\
2&2
\end{array}\right),\ \
 \mathbf{A}(\cdot, \{1,3\})=\left(\begin{array}{cc}
1&1\\
2&2\\
1&2\\
2&1
\end{array}\right),\ \
 \mathbf{A}(\cdot,\{2,3\})=\left(\begin{array}{cc}
1&1\\
1&2\\
2&2\\
2&1
\end{array}\right).
\end{align*}
Clearly for each $\mathcal{S}\in \{\{1,2\}, \{1,3\} , \{2,3\}\}$,
  each vector in $[2]^2=\{(1,1),(1,2),(2,1),(2,2)\}$   appears exactly once in the rows of   $\mathbf{A}(\cdot, \mathcal{S})$.  So
$\mathbf{A}$ in \eqref{eq-OA-4} satisfies Definition \ref{def-OA}, and thus is  an OA$(3,2,2)$.
\hfill $\square$
\end{example}


OA has been widely studied in combinatorial theory, graph theory and coding theory \cite{CD}.
There are many  results on the construction and existence   of the OAs;  please refer to~\cite[Section II-6,7]{CD} for more details.  In particular, it is well known that for any positive integers $m$, $s$ and $s<m$, there exists an OA$(m,q,s)$ for some prime $q$.  In fact, there is a one-to-one mapping between cross resolvable design and OA. Let us first take an example to show this mapping.
\begin{example}\rm
\label{example-cross-OA}
Recall that the $(\mathcal{V},\mathfrak{A})$ is a $t$-$(v,k,\Gamma,m,\lambda)=2$-$(4,2,6,3,1)$ resolvable design in Example \ref{exam-dual-8-4}.
The dual design of this resolvable design is a
  $t$-$(m,q,m,\lambda)=$ $2$-$(3,2,3,1)$ GDD $(\mathcal{X},\mathfrak{G},\mathfrak{B})$, where $\mathcal{X}=\mathfrak{A}$ in \eqref{eq-3-2-3-point-GDD}, all the groups of $\mathfrak{G}$ are in \eqref{eq-3-2-3-group-GDD} and all the blocks of $\mathfrak{B}$ are in \eqref{eq-3-2-3-block-GDD}. We sort all the points in   $\mathcal{X}$ in an arbitrary fixed order; for example
 we sort all points in $\mathfrak{A}$  in the lexicographic order, i.e.,   $\mathfrak{A}=\{(1,1),(1,2), (2,1),(2,2),(3,1),(3,2)\}$.  
Then we represent each block by a row vector according to the above order; for example, the first block of $\mathfrak{B}$ in \eqref{eq-3-2-3-block-GDD},   $\{(1,1),(2,1),(3,1)\}$ is
represented by a row vector $((1,1),(2,1),(3,1))$.  Similarly, $\mathfrak{B}$ can be represented by the following array,
\begin{align*}
\mathbf{A}_{\mathfrak{B}}=\left(
  \begin{array}{ccc}
(1,1)&(2,1)&(3,1)\\
(1,2)&(2,1)&(3,2)\\
(1,1)&(2,2)&(3,2)\\
(1,2)&(2,2)&(3,1)
  \end{array}
\right).
\end{align*}
Next we replace each element in $\mathbf{A}_{\mathfrak{B}}$   by its second coordinate. For example, $(1,1)$ is replaced by $1$; $(1,2)$ is replaced its second coordinate $2$; $(3,2)$ is replaced by its second coordinate $2$.  Similarly, the resulting array after the above replacement process is
\begin{align*}
\mathbf{A}=\left(
  \begin{array}{ccc}
1&1&1\\
2&1&2\\
1&2&2\\
2&2&1
  \end{array}
\right).
\end{align*}
It is interesting to see that the resulting array $\mathbf{A}$ is an  OA$_{\lambda}(m,v/k,t)=$OA$(3,2,2)$ in \eqref{eq-OA-4}.

Furthermore,  given the  OA in \eqref{eq-OA-4}, we can also obtain the $2$-$(v,k,\Gamma,m,\lambda)=(4,2,6,3,1)$ resolvable design in Example \ref{exam-dual-8-4} by performing the  corresponding inverse operations. Hence, there is a one-to-one mapping between OA and cross  resolvable design.
\hfill $\square$
\end{example}

Using the above transformation method in Example \ref{example-cross-OA}, we have the one-to-one mapping between a cross resolvable design and an OA, whose proof could be found in Appendix~\ref{sec:proof of OA lemma}.
\begin{lemma}\rm
\label{le-cross-OA}
 There is a one-to-one mapping between a $t$-$(v,k,\Gamma,m,\lambda_t)$ resolvable design and an OA$_{\lambda_t}(m,q,t)$ where $q=v/k$.  
\hfill $\square$
\end{lemma}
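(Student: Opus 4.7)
The plan is to exhibit two explicit, mutually inverse constructions, following the recipe illustrated in Example~\ref{example-cross-OA}. Forward direction: starting from a $t$-$(v,k,\Gamma,m,\lambda_t)$ resolvable design $(\mathcal{V},\mathfrak{A})$, first pass to its dual $(\mathcal{X},\mathfrak{G},\mathfrak{B})$ as in Lemma~\ref{le-cross-GDD}. Here $|\mathcal{X}|=\Gamma=mq$ with $q=v/k$, the groups $\mathcal{G}_1,\ldots,\mathcal{G}_m$ correspond to the $m$ parallel classes of $\mathfrak{A}$, and each dual block $\mathcal{B}_a=\{A\in\mathfrak{A}: a\in A\}$ (for $a\in\mathcal{V}$) meets every group $\mathcal{G}_u$ in exactly one point, because each $a\in\mathcal{V}$ lies in exactly one block of each parallel class. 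Labelling every point of $\mathcal{X}$ by a pair $(u,v)\in[m]\times[q]$ (its group index and its position within that group), I would encode each dual block $\mathcal{B}_a$ as the row vector in $[q]^m$ whose $u$-th coordinate is the second coordinate of the unique point of $\mathcal{B}_a$ in $\mathcal{G}_u$. Stacking these $v$ row vectors yields an array $\mathbf{A}$ of size $v\times m$ over $[q]$.

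To verify that $\mathbf{A}$ is an OA$_{\lambda_t}(m,q,t)$, I would fix any $\mathcal{S}=\{i_1,\ldots,i_t\}\in\binom{[m]}{t}$ and any target tuple $(c_1,\ldots,c_t)\in[q]^t$. Each coordinate $c_j$ designates a specific block $A_{i_j}\in\mathfrak{A}_{i_j}$, one from each of $t$ distinct parallel classes. A row of $\mathbf{A}(\cdot,\mathcal{S})$ indexed by $a\in\mathcal{V}$ equals $(c_1,\ldots,c_t)$ iff $a\in \bigcap_{j=1}^{t} A_{i_j}$; by the defining property of $t$-cross resolvability, this intersection has size exactly $\lambda_t$. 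Hence every tuple in $[q]^t$ appears $\lambda_t$ times in $\mathbf{A}(\cdot,\mathcal{S})$, establishing strength $t$ and index $\lambda_t$.

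Reverse direction: given an OA$_{\lambda_t}(m,q,t)$ array $\mathbf{A}$ with $F=\lambda_t q^t$ rows, set $\mathcal{V}=[F]$ and, for each column $j\in[m]$ and symbol $c\in[q]$, define the block $\mathcal{A}_{j,c}=\{r\in[F]:\mathbf{A}(r,j)=c\}$. Since strength $t$ implies strength $1$, for each fixed $j$ the $q$ blocks $\{\mathcal{A}_{j,c}\}_{c\in[q]}$ partition $[F]$, so $\mathfrak{A}_j=\{\mathcal{A}_{j,c}:c\in[q]\}$ is a parallel class, and taking $\mathfrak{A}=\bigcup_{j=1}^{m}\mathfrak{A}_j$ gives $\Gamma=mq$ blocks of common size $|\mathcal{A}_{j,c}|=\lambda_t q^{t-1}=v/q=k$. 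Picking $t$ blocks $\mathcal{A}_{i_1,c_1},\ldots,\mathcal{A}_{i_t,c_t}$ from $t$ distinct parallel classes, their intersection is precisely the set of rows matching the prescribed symbols in columns $i_1,\ldots,i_t$, whose cardinality is $\lambda_t$ by strength $t$. Thus $(\mathcal{V},\mathfrak{A})$ is a $t$-$(v,k,\Gamma,m,\lambda_t)$ resolvable design.

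Finally, I would check that the two maps are mutual inverses: after relabelling $\mathcal{V}$ by the row indices of the associated OA (equivalently, labelling rows by the elements of $\mathcal{V}$), applying the forward construction followed by the reverse returns the original parallel-class structure block-for-block, and symmetrically in the OA direction. The main step is really just the parameter bookkeeping ($v=\lambda_t q^t$, $k=\lambda_t q^{t-1}$, $\Gamma=mq$) together with the translation dictionary ``value $c$ in column $j$ of $\mathbf{A}$'' $\leftrightarrow$ ``membership in the $c$-th block of the $j$-th parallel class''; once this dictionary is set, the equivalence of OA strength $t$ with the $t$-cross intersection property is immediate. The only mildly delicate point is to ensure the chosen labelling of $\mathcal{X}$ by $[m]\times[q]$ is consistent on both sides, so that composing the two constructions truly gives the identity and not merely an isomorphic object.
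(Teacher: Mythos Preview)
Your proposal is correct and follows essentially the same approach as the paper: both directions use the dictionary ``value $c$ in column $j$ of $\mathbf{A}$'' $\leftrightarrow$ ``membership in the $c$-th block of the $j$-th parallel class,'' and verify the OA strength-$t$ condition directly against the $t$-cross intersection property. The paper defines $\mathbf{A}(j,u)=v$ iff $j\in\mathcal{A}_{u,v}$ without explicitly passing through the dual GDD, but your intermediate step through Lemma~\ref{le-cross-GDD} yields the identical array, and your explicit check that the two maps are mutual inverses is a welcome addition the paper omits.
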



\section{Main results}
\label{sec-main-results}
We first consider the access topology as a $t$-design. By leveraging
  the MN placement strategy and the properties of  $t$-design, we propose a MACC scheme with the following load, whose description could be found in Section \ref{sec-packing-design}.
\begin{theorem}[$t$-design scheme]\rm
\label{th-PDA-SS}
Consider a $(L,K,\Gamma,M,N)$ multiaccess coded caching system with access topology $\mathfrak{B}$ where $([\Gamma],\mathfrak{B})$ is
  a $t$-$(\Gamma,L,1)$ design, for   positive integers $\Gamma$, $L$ and $t\geq 2$, the lower convex envelop of the following memory-load tradeoff points is achievable,
\begin{align}
\label{eq-load}
(M,R_{\text{Th1}})=\left(\mu N,   \frac{{\Gamma \choose t+\mu\Gamma}-K{L\choose t+\mu\Gamma}-\sum^{\mu\Gamma-1}_{i=1}{L\choose t+i}{\Gamma-L\choose \mu\Gamma -i}}{{\Gamma\choose \mu\Gamma}{L\choose t}} \right),
\end{align}
for each $\mu=0, \frac{1}{\Gamma}, \frac{2}{\Gamma},\ldots, \frac{\Gamma-L}{\Gamma}$.
\hfill $\square$
\end{theorem}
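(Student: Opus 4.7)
The plan is to construct a PDA $\mathbf{P}$ of size $F\times K$ with the claimed parameters and then apply the PDA-to-caching construction of Algorithm~\ref{alg:PDA}, adapted to the MACC setting via Definition~\ref{defn:three arrays}. I adopt the MN-style placement at the cache-nodes: each file is split into $\binom{\Gamma}{\mu\Gamma}$ packets labelled by $\mu\Gamma$-subsets $\mathcal{T}\subseteq[\Gamma]$, with cache $C_\gamma$ storing packet $\mathcal{T}$ iff $\gamma\in\mathcal{T}$, and each packet is further sub-divided into $\binom{L}{t}$ parts to attain $F=\binom{\Gamma}{\mu\Gamma}\binom{L}{t}$. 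A user $\mathcal{B}$ then retrieves every subpacket with $\mathcal{T}\cap\mathcal{B}\neq\emptyset$; each column of the PDA thus contains $Z=\bigl(\binom{\Gamma}{\mu\Gamma}-\binom{\Gamma-L}{\mu\Gamma}\bigr)\binom{L}{t}$ stars, so the memory ratio is $M/N=\mu$.

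The delivery exploits the $\lambda=1$ property of a $t$-design: every $t$-subset $\mathcal{A}\subseteq[\Gamma]$ lies in a unique block $\mathcal{B}(\mathcal{A})\in\mathfrak{B}$. For each $(t+\mu\Gamma)$-subset $\mathcal{S}\subseteq[\Gamma]$ I introduce one integer label and place it in the entries with column $\mathcal{B}(\mathcal{A})$ and row $(\mathcal{S}\setminus\mathcal{A},\,j(\mathcal{B}(\mathcal{A}),\mathcal{A}))$ as $\mathcal{A}$ ranges over $\binom{\mathcal{S}}{t}$, where $j(\mathcal{B},\cdot)$ is a fixed enumeration of $\binom{\mathcal{B}}{t}$; the associated multicast message is the XOR of the indicated subpackets. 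Condition~C3 of Definition~\ref{def-PDA} reduces to checking that (a) two entries sharing a label correspond to distinct $\mathcal{A}$'s and hence---by $\lambda=1$ and the choice of $j$---live in distinct rows and columns, and (b) the mate positions are stars, which is equivalent to the identity $\mathcal{T}\cap\mathcal{B}(\mathcal{A})=\emptyset\iff\mathcal{B}(\mathcal{A})\cap\mathcal{S}=\mathcal{A}$.

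It remains to count $S$, the number of distinct labels. The naive count is $\binom{\Gamma}{t+\mu\Gamma}$, one label per $\mathcal{S}$, but two reductions must be applied. (i)~If $\mathcal{S}\subseteq\mathcal{B}$ for some block (requiring $t+\mu\Gamma\leq L$), every $\mathcal{A}\in\binom{\mathcal{S}}{t}$ gives $\mathcal{B}(\mathcal{A})=\mathcal{B}$ and the entire XOR is already cached, so the label is unused; summing over the $K$ blocks kills $K\binom{L}{t+\mu\Gamma}$ labels. (ii)~If $|\mathcal{S}\cap\mathcal{B}|=t+i$ for some block and $1\leq i\leq\mu\Gamma-1$, the XOR is expressible as a combination of XORs attached to smaller subsets and may be removed; counting these via a Vandermonde-type argument and collapsing the over-counting across blocks via the $\lambda=1$ condition produces the $\sum_{i=1}^{\mu\Gamma-1}\binom{L}{t+i}\binom{\Gamma-L}{\mu\Gamma-i}$ term. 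Dividing $S$ by $F$ yields~\eqref{eq-load} at each admissible $\mu$; standard memory-sharing then extends the result to the full lower convex envelope.

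The main obstacle will be step~(ii): proving that the multicast messages indexed by $\mathcal{S}$'s with heavy block-intersections are genuinely redundant (removable without breaking decodability) and counting them exactly. The key algebraic fact is that whenever $|\mathcal{B}\cap\mathcal{S}|\geq t+1$, user $\mathcal{B}$ already caches all but at most one term of the associated XOR, so that XOR can be rewritten in terms of XORs indexed by $(t+\mu\Gamma-1)$-subsets. Careful bookkeeping---which leverages the fact that $\lambda=1$ bounds the pairwise intersection of distinct blocks by $t-1$---is required to enumerate these removable labels without double-counting and to confirm that the remaining labels still suffice for every user to decode its requested file.
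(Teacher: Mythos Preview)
Your construction of the user-delivery array and the verification of Condition~C3 match the paper's approach (up to notation), and your reduction~(i) correctly removes the $K\binom{L}{t+\mu\Gamma}$ labels that never appear because the corresponding $\mathcal{S}$ lies entirely inside a block. The genuine gap is reduction~(ii).

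The quantity $S'=\sum_{i=1}^{\mu\Gamma-1}\binom{L}{t+i}\binom{\Gamma-L}{\mu\Gamma-i}$ is not a count of globally redundant multicast messages; it is a \emph{per-user} count. For a fixed user $\mathcal{B}$, the labels $\mathcal{S}$ with $t+1\le|\mathcal{S}\cap\mathcal{B}|<t+\mu\Gamma$ number exactly $S'$, and the paper's Proposition~\ref{pro-1} shows that user $\mathcal{B}$ caches \emph{every} packet appearing in each such XOR (not ``all but at most one'' as you wrote), hence can reconstruct that entire XOR from cache alone. But the set of $S'$ reconstructible XORs depends on $\mathcal{B}$: different users know different subsets of the $S$ messages, so one cannot simply delete $S'$ labels from the PDA without breaking decodability for the remaining users. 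There are no ``XORs attached to smaller subsets'' in the scheme to rewrite against, and the $\lambda=1$ property does not collapse these per-user sets into a single globally removable set (indeed, for $\mu\Gamma\ge 3$ a single $\mathcal{S}$ can have heavy intersection with more than one block, so your ``collapsing the over-counting'' would not even yield the number $S'$). The paper's fix is to MDS-encode the $S$ multicast messages with an $[S,S-S']$ code and transmit only $S-S'$ coded symbols; since each user already knows $S'$ of the $S$ original messages, it can recover all of them and hence decode. Without this MDS step, the PDA alone only certifies load $S/F$ with $S\le\binom{\Gamma}{t+\mu\Gamma}-K\binom{L}{t+\mu\Gamma}$, and the additional $S'$ saving in~\eqref{eq-load} is unjustified by your argument.
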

Note that to achieve  each memory-load tradeoff point in~\eqref{eq-load}, we first construct  a MACC scheme with
 whose user-delivery array is a $(K={\Gamma\choose t}/{L\choose t}$, $F={\Gamma\choose \mu\Gamma}{L\choose t}$, $Z=\left({\Gamma\choose \mu\Gamma}-{\Gamma -L\choose \mu\Gamma}\right){L\choose t}$, $S={\Gamma \choose t+\mu\Gamma}$ $-K{L\choose t+\mu\Gamma})$ PDA. Then by observing
each user can re-construct some multicast messages from its retrieval cache content, we use a $[n,k]$  maximum distance separable (MDS) code to further reduce the number of transmissions where
$n={\Gamma \choose t+\mu\Gamma}-K{L\choose t+\mu\Gamma}$ and $k={\Gamma \choose t+\mu\Gamma}-K{L\choose t+\mu\Gamma}-\sum^{\mu\Gamma-1}_{i=1}{L\choose t+i}{\Gamma-L\choose \mu\Gamma -i} $

.


 By Definition \ref{def-design} and Definition \ref{def-GDD}, a $t$-$(m,1,L,\lambda)$ GDD is also a $t$-$(m,L,\lambda)$ design. Next we generalize the access topology in Theorem~\ref{th-PDA-SS} (i.e.,  the $t$-design access topology) to the $t$-GDD access topology and propose a MACC scheme which uses the OA structure into the cache placement by extending the construction of the PDA in Theorem \ref{th-PDA-SS}. The achieved load is stated in the following theorem  and  the description on the scheme could be found in  Section \ref{sec-Scheme-GDD}.
\begin{theorem}[$t$-GDD scheme]\rm
\label{th-GDD}
Consider a $(L,K,\Gamma,M,N)$ multiaccess coded caching system with a $t$-$(m,q,L,1)$ GDD access topology
$\mathfrak{B}$ where $([\Gamma=mq],\mathfrak{G}, \mathfrak{B})$ is
for positive integers $m$, $q$, $L$, $t$ and $s$ with $1\leq t\leq L\leq s\leq m$, if there exists an OA$(m,q,s)$,
 the following memory-load tradeoff point  is achievable,
\begin{align}
  (M, R_{\text{Th2}})=\left(\frac{\left(q^s-(q-1)^Lq^{s-L}\right){L\choose t}}{q^s{L\choose t}} N,    \frac{q^{m-s}(q-1)^t}{{L\choose t}} \right).
\label{eq-load for t GDD}
\end{align}
\hfill $\square$
\end{theorem}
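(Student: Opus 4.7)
The plan is to construct, following the three-array formalism of Definition~\ref{defn:three arrays}, a node-placement array, a user-retrieve array, and a user-delivery array for the $(L,K,\Gamma,M,N)$ MACC system with $t$-$(m,q,L,1)$ GDD access topology, using an OA-based placement in place of the MN-style placement used in the proof of Theorem~\ref{th-PDA-SS}, and then verify that the resulting user-delivery array is a valid PDA achieving the claimed load.

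First, I would label each cache node by $(u,v)$ with $u\in[m]$ and $v\in[q]$ following the GDD point representation, and set the subpacketization to $F=q^s\binom{L}{t}$, indexing the packets of each file by pairs $(j,\tau)$ where $j\in[q^s]$ is a row of a fixed OA$(m,q,s)$ matrix $\mathbf{A}$ and $\tau$ plays the same sub-packet role as the $\binom{L}{t}$ factor in the $t$-design scheme. The OA placement stores at cache $C_{(u,v)}$ every packet $(j,\tau)$ whose OA row satisfies $\mathbf{A}(j,u)=v$, possibly further refined by $\tau$. Using the strength-$s$ property of $\mathbf{A}$ together with the GDD axiom that each block $\mathcal{B}_k=\{(u_1,v_1),\ldots,(u_L,v_L)\}$ meets each group in at most one point (so the $u_i$'s are distinct and $L\leq s$), restriction of $\mathbf{A}$ to any $L$ columns has each $L$-tuple in $[q]^L$ appearing exactly $q^{s-L}$ times. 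Hence the number of OA rows for which $\mathbf{A}(j,u_i)\neq v_i$ holds for every $i$ equals $(q-1)^L q^{s-L}$, yielding the retrievable fraction $M/N=(q^s-(q-1)^L q^{s-L})\binom{L}{t}/(q^s\binom{L}{t})$ stated in~\eqref{eq-load for t GDD}.

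Next, for delivery, I would parameterize each multicast message by a tuple consisting of a $t$-subset $\mathcal{T}\subseteq[m]$ of groups, an ``excluded'' vector $(v_u)_{u\in\mathcal{T}}\in[q]^t$, and an extension vector over the remaining columns; the $\lambda=1$ GDD property guarantees that the $t$-cache subset $\{(u,v_u):u\in\mathcal{T}\}$ lies in exactly one block $\mathcal{B}_{k^\ast}$, and jointly with the $\binom{L}{t}$ sub-packet labels this uniquely determines the $\binom{L}{t}$ users whose requested packets are XORed into the message. Counting such tuples yields $S=q^m(q-1)^t$ multicast messages, whence $R=S/F=q^{m-s}(q-1)^t/\binom{L}{t}$ matches~\eqref{eq-load for t GDD}.

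The main obstacle is verifying Condition~C3 of the PDA definition (Definition~\ref{def-PDA}) for the resulting user-delivery array: for each multicast message, every participating user must retrieve, via its accessible caches, every packet in the XOR other than its own requested one. This requires the joint exploitation of the OA strength-$s$ property (which governs which packets sit in which cache) and the $t$-GDD block-uniqueness property (which governs which caches different users share). Because $t\leq L\leq s$, there is enough OA slack to absorb the sub-packetization factor $\binom{L}{t}$; checking C3 then reduces to showing that two distinct delivery tuples cannot produce the same integer entry in two non-transpose positions of the array, which should follow from the $\lambda=1$ GDD property combined with the fact that OA rows are distinct on any $s$-subset of columns. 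This closes the verification and establishes the theorem.
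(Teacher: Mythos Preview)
Your placement phase is right and matches the paper exactly: files split into $q^s\binom{L}{t}$ packets indexed by $(j,\mathcal{T})$ with $j\in[q^s]$, $\mathcal{T}\in\binom{[L]}{t}$; cache $C_{u,v}$ stores $(j,\mathcal{T})$ iff $\mathbf{A}(j,u)=v$; and the OA strength-$s$ property (with $L\le s$) gives $F-Z=(q-1)^L q^{s-L}\binom{L}{t}$, hence the stated memory ratio.

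The delivery phase, however, is where your proposal goes wrong. Your parameterization ``$t$-subset $\mathcal{T}\subseteq[m]$, a vector $(v_u)_{u\in\mathcal{T}}\in[q]^t$, and an extension over the remaining columns'' counts to $\binom{m}{t}q^t\cdot q^{m-t}=\binom{m}{t}q^m$, not $(q-1)^t q^m$; the asserted $S=q^m(q-1)^t$ does not follow from what you wrote. More seriously, the claim that each message serves exactly $\binom{L}{t}$ users is incorrect: in the paper's worked example ($m=3,q=2,L=t=2,s=2$) one has $\binom{L}{t}=1$ but every multicast symbol appears in three columns, so the gain is $3$. A gain of $\binom{L}{t}$ would also be inconsistent with the load $R=(q-1)^t q^{m-s}/\binom{L}{t}$ you are trying to prove.

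The paper's actual delivery label is different and contains an ingredient you are missing. For a non-star entry at row $(j,\mathcal{T})$ and column $\mathcal{B}_k=\{(u_{k,1},v_{k,1}),\ldots,(u_{k,L},v_{k,L})\}$ (non-star means $\mathbf{A}(j,u_{k,h})\neq v_{k,h}$ for all $h\in[L]$), one forms the vector $\mathbf{e}\in[q]^m$ by taking $\mathbf{e}(i)=\mathbf{A}(j,i)$ everywhere \emph{except} at the $t$ positions $\{u_{k,h}:h\in\mathcal{T}\}$, where one overwrites with the block's values $v_{k,h}$. The label is then the pair $(\mathbf{e},n_{\mathbf{e}})$, where $n_{\mathbf{e}}$ is the running count of how many times $\mathbf{e}$ has already appeared \emph{in that column}. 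This per-column occurrence counter is essential: for fixed $\mathbf{e}$ and fixed column, the sub-packet index $\mathcal{T}$ is forced (it is the set of $h$ with $e_{u_{k,h}}=v_{k,h}$), but there can be up to $(q-1)^t$ rows $j$ producing the same $\mathbf{e}$, one for each choice of the $t$ overwritten OA entries in $[q]\setminus\{v_{k,h}\}$. That is precisely where the bound $S\le q^m(q-1)^t$ comes from, and without the counter the array would violate C$3$-a). The C$3$-b) verification then proceeds by a case split on whether $\psi(\mathcal{B}_k)(\mathcal{T})=\psi(\mathcal{B}_{k'})(\mathcal{T}')$, using $\lambda=1$ in the equal case and a Hamming-distance argument in the unequal case.
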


Note that we consider $t$-$(\Gamma,L,\lambda=1)$ design and $t$-$(m,q,L,\lambda=1)$ GDD  as access topologies.
 For $\lambda>1$, we can also extend the proposed schemes to
   the $t$-$(\Gamma,L,\lambda)$ design and $t$-$(m,q,L,\lambda)$ GDD access topologies. For the details, please referred to Section \ref{section-lambda>1-th1}.
\subsection{Performance analysis on the proposed schemes in Theorems~\ref{th-PDA-SS} and~\ref{th-GDD}}
\label{sub-performance-first}
By Lemma \ref{conjecture-design} and Theorem \ref{th-PDA-SS}, we can obtain arbitrary $t$-$(\Gamma,L,\lambda)$ designs for any parameters $t$, $\Gamma$, $L$ and $\lambda$ when $\Gamma$ is larger than a threshold $\Gamma_0$.\footnote{For instance when $t=2$ and $L=3$, it is well known that $\Gamma_0=3$ for any $\lambda\in [3]$.}    Then by Theorem \ref{th-PDA-SS}, we can obtain a $(L,K={\Gamma\choose t}/{L\choose t},\Gamma, M,N)$ multiaccess coded caching scheme with memory ratio $\mu=M/N$, subpacketization $F={\Gamma\choose \mu\Gamma}{L\choose t}$ and transmission load in \eqref{eq-load}.
Note that the number of serving users is $K={\Gamma\choose t}/{L\choose t}$. By letting $t$ vary  from $1$ to $L$, we can have a   wide range of  user numbers,  which can scale   linearly,  polynomially, or exponentially with the number of cache-nodes $\Gamma$. As a result, we can serve a very flexible number of users.  By Lemma \ref{lemma-existence-GDD} and Theorem \ref{th-GDD}, we can also server a very flexible number of users.

For any positive integers $\Gamma$ and $L$,  there always exists a $L$-$\left(\Gamma,L,K={\Gamma\choose L},r={\Gamma-1\choose L-1},1\right)$ design $\left(\mathcal{X}=[\Gamma], \mathfrak{B}={[\Gamma]\choose L}\right)$. In fact, the combinatorial access topology proposed in \cite{MKRmn} is exactly the $L$-$\left(\Gamma,L,K={\Gamma\choose L},r={\Gamma-1\choose L-1},1\right)$ design access topology. Then by Theorem \ref{th-PDA-SS}, we have a $\left({\Gamma\choose L}, {\Gamma\choose \mu\Gamma}, {\Gamma\choose \mu\Gamma}-{\Gamma-L\choose \mu\Gamma},{\Gamma\choose \mu\Gamma+L}\right)$ PDA which leads to a $(L,r={\Gamma-1\choose L-1},K={\Gamma\choose L}, M,N)$ multiaccess coded caching scheme based on  $\left(\mathcal{X}=[\Gamma], \mathfrak{B}={[\Gamma]\choose L}\right)$, i.e., combinatorial access topology, with memory ratio  $\mu=\frac{M}{N}$, subpacketization $F={\Gamma\choose \mu\Gamma}$ and load $R_{\text{Th1}}={\Gamma\choose \mu\Gamma+L}/ {\Gamma\choose \mu\Gamma}$. This scheme is exactly the scheme proposed in \cite{MKRmn}, which was shown to be optimal under uncoded cache placement and combinatorial access topology~\cite{FP}.

Next, let us compare the scheme  in Theorem \ref{th-GDD} with the schemes in \cite{MR,MKR,KMR} for the multiaccess access topology. The scheme  in \cite{KMR} has a  coded multicasting gain equal to $4$  under a $2$-$(q^m,q^{m-1},\frac{q(q^m-1)}{q-1},\frac{q^m-1}{q-1},q^{m-2})$  resolvable design with $L=2$ and memory ratio $M/N=1/q$ where $q$ is any prime power and $m$ is any positive integer larger than $1$.
 In this paper we only devote to studying the coded caching gain that depends on the value of the parameters $L$, $\Gamma$ and memory ratio $M/N=1/q$ where $q$ is any positive integer. Furthermore when the memory ratio is $M/N=1/q$, the scheme in \cite{KMR} is reduced to the scheme in \cite{MR}. So we only need to compare with the scheme with memory ratio $M/N=1/q$ in \cite{MR}, where the authors used the  OA$(m,q,m)$ for any positive integer $m$ and $q\geq 2$ as the placement strategy to obtain the following result.
\begin{lemma}\rm(CRS in \cite{MR} generated by OA$(m,q,m)$)
\label{eq-MKR}
For any positive integer $m$ and $q$, there exists a $(L,K,\Gamma,M,N)
    =\left(t,{m\choose t}q^t,mq,M,N\right)$ multiaccess coded caching for the $t$-$(v,k,\Gamma,m,\lambda)=t$-$(q^m$, $q^{m-1},mq,m,q^{m-t})$ resolvable design access topology  generated by OA$(m,q,m)$ with memory ratio $M/N=1/q$,   transmission load
\begin{align}\label{eq-KMR-load}
R_{\text{MR}}=\frac{\mu_t{q\choose 2}^t{m\choose t}}{v}=
\frac{q^{m-t}{q\choose 2}^t{m\choose t}}{4n}={m\choose t}\left(\frac{q-1}{2}\right)^t
\end{align} and coded gain $g_{\text{MR}}=2^t$.
\hfill $\square$
\end{lemma}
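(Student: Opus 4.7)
The plan is to reconstruct the CRS scheme of~\cite{MR} explicitly and verify its load. For the placement, I would use the OA$(m,q,m)$, whose $q^m$ rows are indexed by all vectors $\mathbf{a}=(a_1,\ldots,a_m)\in[q]^m$; split each file into $F=q^m$ packets indexed by such $\mathbf{a}$, label each cache-node by a pair $(k,j)\in[m]\times[q]$ (group $k$, element $j$), and store in cache-node $(k,j)$ the $q^{m-1}$ packets whose $k$-th coordinate satisfies $a_k=j$. This achieves $M=N/q$ as required. Under the GDD/cross-resolvable-design correspondence, each user is indexed by a $t$-subset $\mathcal{T}=\{k_1<\cdots<k_t\}\subseteq[m]$ of groups together with a value $j_k\in[q]$ for each $k\in\mathcal{T}$, giving $K=\binom{m}{t}q^t$ users each accessing $L=t$ caches; such a user misses packet $\mathbf{a}$ iff $a_k\neq j_k$ for every $k\in\mathcal{T}$.

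For the delivery, for each $\mathcal{T}\in\binom{[m]}{t}$, each unordered pair $\{j_k,j_k'\}\subseteq[q]$ chosen in every $k\in\mathcal{T}$, and each completion $(a_{k'})_{k'\notin\mathcal{T}}\in[q]^{m-t}$, I would group together the $2^t$ users whose group-set equals $\mathcal{T}$ and whose label at each $k\in\mathcal{T}$ lies in $\{j_k,j_k'\}$. These users are in natural bijection with the $2^t$ packets obtained by letting each $a_k$, $k\in\mathcal{T}$, range over $\{j_k,j_k'\}$: each user maps to the unique packet in that set that it misses, namely the one whose value at every $k\in\mathcal{T}$ is the ``opposite'' element of $\{j_k,j_k'\}$. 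The server then broadcasts the XOR, over these $2^t$ (user, missed-packet) pairs, of the corresponding requested packets.

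The main obstacle is verifying decodability. Fix any user $u$ in such a group of size $2^t$: every other packet appearing in the XOR differs from $u$'s missed packet in at least one coordinate $k\in\mathcal{T}$, and in each such differing coordinate its value necessarily equals $j_k$ or $j_k'$, i.e., matches $u$'s own label at that coordinate. Hence, by the retrieval rule, $u$ already has that packet in its accessible caches and can cancel the $2^t-1$ extraneous terms to recover its missed packet. Once decodability is established, the counting is routine: the total number of transmissions is $\binom{m}{t}\binom{q}{2}^t q^{m-t}$, giving
\[
R_{\text{MR}}=\frac{\binom{m}{t}\binom{q}{2}^t q^{m-t}}{q^m}=\binom{m}{t}\!\left(\frac{q-1}{2}\right)^{\!t},
\]
and each XOR simultaneously serves $g_{\text{MR}}=2^t$ users, matching the claim.
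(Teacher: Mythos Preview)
Your reconstruction is correct: the placement via the rows of OA$(m,q,m)$, the user indexing by $(\mathcal{T},(j_k)_{k\in\mathcal{T}})$, the $2^t$-fold XOR groups indexed by $\mathcal{T}$, an unordered pair in each coordinate of $\mathcal{T}$, and a completion in $[q]^{m-t}$, together with the decodability argument, all check out and yield the stated load $\binom{m}{t}\left(\frac{q-1}{2}\right)^{t}$ and gain $2^t$.

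Note, however, that the paper does not give its own proof of this lemma: it is quoted verbatim as a result from~\cite{MR} (the ``CRS in~\cite{MR}'' scheme) and is used only as a baseline for comparison with Theorem~\ref{th-GDD}. So there is no in-paper argument to compare against; what you have written is essentially a faithful recapitulation of the construction in~\cite{MR}, and it is sound.
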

For  the same access topology in Lemma~\ref{eq-MKR}, the proposed scheme in Theorem \ref{th-GDD} is a   $(t$, ${m\choose t}q^t$, $mq$, $M,N)$ multiaccess coded caching scheme with memory ratio $M/N=1/q$,  transmission load $R_{\text{Th2}}=(q-1)^t$ and subpacketization  $q^{m-1}$. We can check that $R_{\text{Th2}}/R_{\text{MR}}=2^t/{m\choose t}<1$ always holds when $t\leq m/2$. So, when $t\leq m/2$, the proposed scheme in Theorem \ref{th-GDD} (with $L=t$ and $s= m -1$)
has lower transmission load  and subpacketization (i.e., $1/q$ times smaller) than that of the scheme in \cite{MR}.

\subsection{Extension of the proposed schemes to the shared-link model}
\label{sub:extension to shared-link}
As a by-product, we can directly extend the proposed schemes in Theorem~\ref{th-PDA-SS} and Theorem~\ref{th-GDD} for the MACC model to the shared-link coded caching model in Section~\ref{sub:ori model} with the same number of users as the MACC model; this is because we can let
 the retrievable content by each user in the MACC model be the the cache content of one user in the shared-link model, while the delivery phase does not change.  Then we can obtain the following results.
\begin{corollary}(PDA via Theorem \ref{th-PDA-SS})\rm
\label{corollary-design}
Given a $t$-$(\Gamma,L,1)$ design, there exists a $(K={\Gamma\choose t}/{L\choose t}$, $F={\Gamma\choose \mu\Gamma}{L\choose t}$, $Z=\left({\Gamma\choose \mu\Gamma}-{\Gamma -L\choose \mu\Gamma}\right){L\choose t}$, $S\leq{\Gamma \choose t+\mu\Gamma}$ $-K{L\choose t+\mu\Gamma})$ PDA. Then the following memory-load tradeoff for shared-link model is achievable
\begin{align}\label{eq-load-PDA-th-1}
(M,R_{\text{Th}_1\text{-Shared-link}})= \left(N-\frac{{\Gamma -L\choose \mu\Gamma}N}{{\Gamma\choose \mu\Gamma}}, \frac{{\Gamma-\mu\Gamma\choose t}}{{t+\mu\Gamma\choose t}{L\choose t}}-\frac{{\Gamma\choose t}{L\choose t+\mu\Gamma}}{{\Gamma\choose \mu\Gamma}{L\choose t}^2}\right).
\end{align}
with subpacketization $F={\Gamma\choose \mu\Gamma}{L\choose t}$. 
\hfill $\square$
\end{corollary}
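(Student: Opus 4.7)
The plan is simply to reuse, as a stand-alone combinatorial object, the user-delivery array that is constructed inside the proof of Theorem~\ref{th-PDA-SS}. As stated just after Theorem~\ref{th-PDA-SS}, that construction already produces a $(K,F,Z,S)$ PDA with the exact parameters appearing in the corollary, \emph{before} the optional MDS shortening step is invoked on top of it. Since Conditions~C1--C3 of Definition~\ref{def-PDA} are purely combinatorial properties of an array and say nothing about whether caches are physically shared among users, the very same array is a legitimate PDA for the $K$-user shared-link model. My first step is therefore to extract that PDA from the proof of Theorem~\ref{th-PDA-SS} and invoke Lemma~\ref{le-Fundamental}, which via Algorithm~\ref{alg:PDA} immediately yields a shared-link coded caching scheme with subpacketization $F=\binom{\Gamma}{\mu\Gamma}\binom{L}{t}$, memory ratio $M/N=Z/F$, and worst-case load $R=S/F$.

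The second step is to verify that the resulting pair $(M,R)$ equals the one displayed in \eqref{eq-load-PDA-th-1}. For the memory, direct substitution gives
\[
\frac{M}{N}=\frac{Z}{F}=\frac{\bigl(\binom{\Gamma}{\mu\Gamma}-\binom{\Gamma-L}{\mu\Gamma}\bigr)\binom{L}{t}}{\binom{\Gamma}{\mu\Gamma}\binom{L}{t}}=1-\frac{\binom{\Gamma-L}{\mu\Gamma}}{\binom{\Gamma}{\mu\Gamma}},
\]
which rearranges to $M=N-\binom{\Gamma-L}{\mu\Gamma}N/\binom{\Gamma}{\mu\Gamma}$. For the load I would use the elementary identity $\binom{\Gamma}{t+\mu\Gamma}\big/\binom{\Gamma}{\mu\Gamma}=\binom{\Gamma-\mu\Gamma}{t}\big/\binom{t+\mu\Gamma}{t}$ and plug in $K=\binom{\Gamma}{t}/\binom{L}{t}$ to rewrite
\[
R=\frac{S}{F}=\frac{\binom{\Gamma}{t+\mu\Gamma}-K\binom{L}{t+\mu\Gamma}}{\binom{\Gamma}{\mu\Gamma}\binom{L}{t}}
\]
as the two-term closed form $R_{\text{Th}_1\text{-Shared-link}}$ in \eqref{eq-load-PDA-th-1}.

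The essential point that makes this corollary work, and also what distinguishes it from Theorem~\ref{th-PDA-SS}, is that the MDS post-processing exploited in the MACC setting is dropped here; what survives is exactly the raw PDA-based delivery of Algorithm~\ref{alg:PDA}, which is tailor-made for the shared-link model. In particular, this is why the shared-link load $S/F$ is larger than the MACC load $R_{\text{Th1}}$ by the removed sum $\sum_{i=1}^{\mu\Gamma-1}\binom{L}{t+i}\binom{\Gamma-L}{\mu\Gamma-i}\big/(\binom{\Gamma}{\mu\Gamma}\binom{L}{t})$. There is thus no real technical obstacle to overcome: once the $t$-design PDA of Theorem~\ref{th-PDA-SS} is in hand, the corollary is just one invocation of Lemma~\ref{le-Fundamental} followed by two short binomial simplifications.
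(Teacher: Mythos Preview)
Your proposal is correct and matches the paper's approach: the paper states Corollary~\ref{corollary-design} as a direct by-product of the user-delivery array $\mathbf{Q}$ built in Section~\ref{sec-packing-design} (which is shown there to be a PDA with exactly the parameters in~\eqref{eq-PDA-parameters-K-F-Z}--\eqref{eq-PDA-parameters-S}), combined with Lemma~\ref{le-Fundamental}, and your binomial manipulations for $M$ and $R$ are exactly the simplifications needed to rewrite $Z/F$ and $S/F$ in the form~\eqref{eq-load-PDA-th-1}. Your observation that the MDS shortening of Section~\ref{sub-perform} is omitted here is also correct and explains precisely why the shared-link load exceeds the MACC load $R_{\text{Th1}}$.
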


\begin{corollary}[PDAs via Theorem \ref{th-GDD}]\rm
\label{corollary-GDD}
Given a $t$-$(m,q,L,1)$ GDD, we have  a $(K={m\choose t}q^t/{L\choose t}$, $F= q^s{L\choose t}$,  $Z=\left(q^s-(q-1)^Lq^{s-L}\right){L\choose t}$, $S)$ PDA, where $S\leq (q-1)^tq^{m}$. Then the following memory-load tradeoff for shared-link model is achievable
\begin{align}\label{eq-load-PDA-th-2}
(M,R _{\text{Th}_2\text{-Shared-link}})=
\left(N-\frac{(q-1)^LN}{q^L},
\frac{(q-1)^tq^{m-s}}{{L\choose t}}\right).
\end{align} 
 with subpacketization $F=q^s{L\choose t}$. In addition, we can also compute the exact value of $S$ in the following cases:
\begin{itemize}
\item   $L=t$ and $s=m-1$. We have $S=(q-1)^tq^{m-1}$ and $R_{\text{Th}_2\ \text{Shared-link}}=(q-1)^t$;
\item   $L=t$ and $s+t=m$ with $s>t$. We have $S=(q^t-1)q^{m-t}$ and  $R_{\text{Th}_2\ \text{Shared-link}}=q^t-1$;
\item   $s=m$. We have $S\leq (q-1)^tq^m$ and $R_{\text{Th}_2\ \text{Shared-link}}\leq (q-1)^t/{L\choose t}$.
\end{itemize}
\hfill $\square$
\end{corollary}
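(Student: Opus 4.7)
The plan is to apply the MACC-to-shared-link reduction sketched in Section~\ref{sub:extension to shared-link} to the $t$-GDD scheme of Theorem~\ref{th-GDD}: identify each shared-link user with a block of the $t$-$(m,q,L,1)$ GDD, equip it with exactly the content that the corresponding MACC user retrieves from its $L$ accessible cache-nodes, and inherit the delivery strategy verbatim. The user-delivery array of Theorem~\ref{th-GDD} then plays the role of the PDA, so all that remains is to read off its parameters $(K,F,Z,S)$ and refine the count of distinct integer labels $S$ in each of the three announced regimes.

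Reading off the parameters is straightforward. The number of columns equals the number of GDD blocks, $K=\binom{m}{t}q^{t}/\binom{L}{t}$ by~\eqref{eq-GDD-blocks-number}, and the OA$(m,q,s)$-based placement fixes $F=q^{s}\binom{L}{t}$. Because each block meets $L$ distinct groups in a single point and the OA has strength $s\ge L$, the $L$ coordinates indexed by any block jointly realize each of the $q^{L}$ patterns equally often; a user therefore misses a fraction $((q-1)/q)^{L}$ of the packets, giving $Z=(q^{s}-(q-1)^{L}q^{s-L})\binom{L}{t}$ and the claimed memory ratio $M/N=1-((q-1)/q)^{L}$. Conditions C1--C3 of Definition~\ref{def-PDA} are inherited directly from the one-shot three-array construction of~\cite{CWLZC} used inside Theorem~\ref{th-GDD}. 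Counting the multicasts transmitted in the MACC delivery yields the global bound $S\le (q-1)^{t}q^{m}$, so $R=S/F\le (q-1)^{t}q^{m-s}/\binom{L}{t}$, matching the general formula in~\eqref{eq-load-PDA-th-2}.

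It remains to pin down $S$ in each of the three special regimes. The case $s=m$ admits no ``free'' OA coordinate and leaves only the global bound. When $L=t$ and $s=m-1$, one has $\binom{L}{t}=1$, and a direct enumeration over the OA rows shows that each of the $(q-1)^{t}q^{m-1}$ candidate multicasts bears a distinct integer label, yielding $S=(q-1)^{t}q^{m-1}$ and $R=(q-1)^{t}$. The delicate case is $L=t$ with $s+t=m$ and $s>t$, where many candidate multicasts coalesce; I would handle it by indexing each multicast by a pair (target $t$-tuple of groups, non-zero difference pattern over $[q]^{t}$) and using the strength-$s$ property of the OA to show that the resulting equivalence classes are in bijection with pairs (non-zero element of $[q]^{t}$, shift in $[q]^{m-t}$), giving $S=(q^{t}-1)q^{m-t}$ and $R=q^{t}-1$.

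The principal technical obstacle is precisely this coalescence argument in the last special case, because here Theorem~\ref{th-GDD}'s bound $(q-1)^{t}q^{m}$ strictly overcounts $S$. The key observation I would exploit is that with $s=m-t$ the OA row-tuples are ``just wide enough'' to distinguish packets within a single block but not wide enough to keep candidate multicasts on different target tuples separate; verifying that equivalence classes all have the same cardinality reduces the enumeration to an orbit count over $[q]^{t}\setminus\{\mathbf{0}\}$. Once $S$ is established in each of the three regimes, the memory, subpacketization, and load claims in Corollary~\ref{corollary-GDD} follow immediately from $M/N=Z/F$ and $R=S/F$.
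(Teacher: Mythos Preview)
Your overall framework---applying the MACC-to-shared-link reduction of Section~\ref{sub:extension to shared-link} to the user-delivery array of Theorem~\ref{th-GDD}, reading off $K$, $F$, $Z$ from the OA placement and the GDD block structure, and invoking the general bound $S\le (q-1)^{t}q^{m}$---is exactly what the paper does. On that part the two agree.

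The difference is in how the three special regimes are handled. You propose direct combinatorial enumerations: counting distinct labels $({\bf e},n_{\bf e})$ over the OA rows when $s=m-1$, and an orbit/equivalence-class argument when $s+t=m$. The paper does \emph{not} do any of this; it observes (last paragraph of Section~\ref{subsec-UDA-GDD}) that once $L=t$ the user-delivery array \eqref{eq-array-user-PDA-GDD} is literally the array produced by \cite[Construction~1]{CWZW}, and then the exact values $S=(q-1)^{t}q^{m-1}$, $S=(q^{t}-1)q^{m-t}$, and $S\le(q-1)^{t}q^{m}$ are simply quoted from that reference (they appear in the last three rows of the \cite{CWZW} block in Table~\ref{knownPDA1}). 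This sidesteps entirely the ``principal technical obstacle'' you identify.

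Your direct route is not wrong in spirit, but the sketches would need substantial work to stand alone. In the $s=m-1$ case, the phrase ``each of the $(q-1)^{t}q^{m-1}$ candidate multicasts bears a distinct integer label'' is not quite right as stated: the labels are pairs $({\bf e},n_{\bf e})$, and what one actually has to show is that the \emph{number} of distinct such pairs across the whole array is $(q-1)^{t}q^{m-1}$, which requires controlling how many rows $j$ of an arbitrary OA$(m,q,m-1)$ satisfy both the agreement constraint on $m-t$ coordinates and the avoidance constraint on the remaining $t$. For the $s+t=m$ case your coalescence idea is reasonable but again only outlined. By contrast, the paper's reduction to \cite{CWZW} is a one-line observation that avoids all of this.
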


Let us then compare our proposed PDAs in Corollary \ref{corollary-design} and Corollary \ref{corollary-GDD}
and the existing PDAs in \cite{MN,YTCC,SZG,TR,YCTC,CWZW} for the shared-link model.

First let us consider the comparison between our PDA in Corollary \ref{corollary-design} and the existing PDAs in  \cite{MN,YTCC}. For any positive integers $\Gamma$ and $L$, there always exists a $L$-$\left(\Gamma,L,K={\Gamma\choose L},1\right)$ design and a $1$-$(\Gamma,1,\Gamma,1)$ design. Thus by Corollary \ref{corollary-design}, we can obtain a $({\Gamma\choose L}, {\Gamma\choose \mu\Gamma}, {\Gamma\choose \mu\Gamma}-{\Gamma-L\choose \mu\Gamma},{\Gamma\choose \mu\Gamma+L})$ PDA which is exactly the PDA with the parameter $\eta=0$, $a=L$, $b=\Gamma\mu$ in \cite{YTCC}, and also a $(\Gamma, {\Gamma\choose \mu\Gamma}, {\Gamma-1\choose \mu\Gamma-1},{\Gamma\choose \mu\Gamma+1})$ PDA which is exact the MN PDA in \cite{MN}.

Now let us compare our PDA in Corollary \ref{corollary-GDD}  with the existing PDAs in \cite{MN,SZG,TR,YCTC,CWZW}. Note that the construction of the PDAs in \cite{CWZW} unifies all the constructions in  \cite{MN,SZG,YCTC,TR}. So in the following we need only to dcompare the PDAs in Corollary \ref{corollary-GDD} and the PDAs in \cite{CWZW}, as stated in the following:
\begin{itemize}
\item when $L=t$, for any positive integers $s+t\leq m$ we have $({m\choose t}q^t, q^s, \left(q^s-(q-1)^Lq^{s-t}\right),S)$ PDAs in Corollary \ref{corollary-GDD} with $S\leq q^m$ which are exactly the last three PDAs in \cite{CWZW} by Table \ref{knownPDA1}. This implies that  the last three PDAs in \cite{CWZW} listed in Table \ref{knownPDA1} can be covered by the PDA in Corollary \ref{corollary-GDD} by letting $L=t$, and can be directly used for the multiaccess network. So we propose a unified construction framework from the viewpoint of multiaccess setting;
\item when $L>t$, we have some other new PDAs in Theorem \ref{th-GDD}. So by Theorem \ref{th-GDD} we have some new schemes for the multiaccess network and the shared-link network.
\end{itemize}

From the above introduction, the relationship between the PDAs in \cite{MN,YTCC,SZG,TR,YCTC,CWZW} and the PDAs in Theorem \ref{th-PDA-SS}, Theorem \ref{th-GDD} can be represented in Fig. \ref{fig-GDD-existing}.
 \begin{figure}[http!]
\centerline{\includegraphics[scale=1]{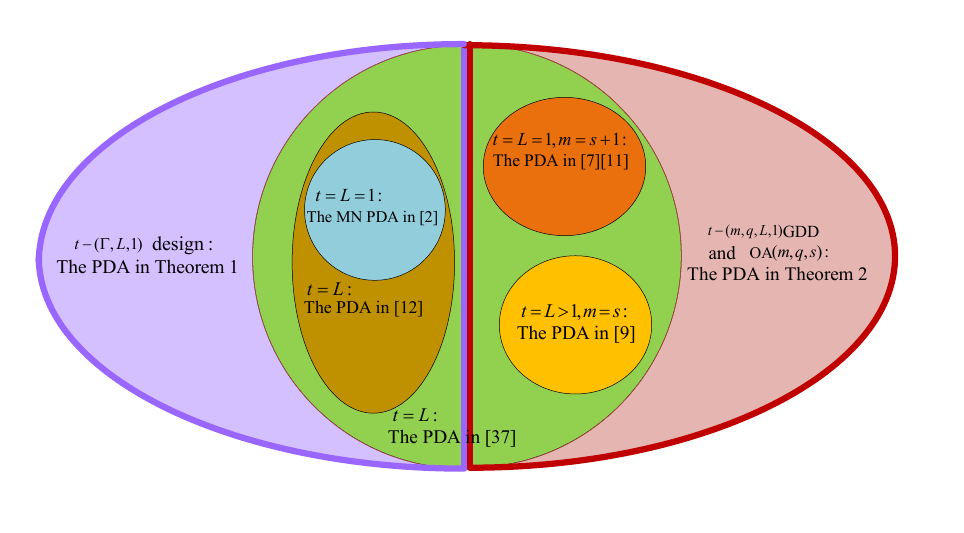}}
\caption{\small The relationship among the PDAs in \cite{MN,YCTC,CWZW,TR,SZG,YTCC} and Corollaries \ref{corollary-design}, \ref{corollary-GDD}.}
\label{fig-GDD-existing}
\end{figure}

Finally, let us compare numerically  our proposed PDAs in Corollary \ref{corollary-design} and Corollary \ref{corollary-GDD}
and the existing PDAs in \cite{YCTC,CWZW,SJTLD,CJWY,TR,SZG,CJYT,CKSM,ASK,CJTY,ZCJ,ZCW,MW} for the shared-link model. For the comparison, it is worth remarking that
\begin{itemize}
\item  all the schemes in \cite{MN,YCTC,TR,SZG} are the special cases of the scheme in \cite{CWZW};
\item we cannot find out an appropriate number of users $K$ to compare each of the schemes in \cite{YTCC,CJWY,CKSM,ZCW,MW,ASK} since all of them work for very special numbers of users;
\item the schemes in \cite{CJTY,ZCW} only work for the specific number of users and the size of  memory such as the  combinatorial numbers;
\item the scheme in \cite{ASK} has the memory ratio approximating to $1$;
\item \cite{ZCJ,CJWY,SJTLD,MW} devote to proposing the methods such that based on a given based PDA, we can obtain a new PDA with flexible user number and subpacketization.
\end{itemize}
From the above remarks, we only need to compare the schemes in \cite{CJYT,YTCC,CKSM}. Let us consider the scheme via $t$-$(\Gamma,L,1)=2$-$(\Gamma,3,1)$ design in Corollary \ref{corollary-design}, the scheme via $t$-$(m,q,L,1)=2$-$(m,q,3,1)$ GDD and a trivial OA$(m,q,m)$ in Corollary \ref{corollary-GDD}.

First let us consider the PDA in Corollary \ref{corollary-design}. Based on a $2$-$(\Gamma,3,1)$ design, the obtained user-delivery array is a $(K,F,Z,S)$ PDA with
\begin{align}
 K=\frac{\Gamma(\Gamma-1)}{6},
 F=3{\Gamma\choose \mu\Gamma}, Z=3\left({\Gamma\choose \mu\Gamma}-{\Gamma-3\choose \mu\Gamma}\right),
 S={\Gamma\choose 2+\mu\Gamma}-K{3\choose 2+\mu\Gamma}.\label{eq-compar-numberic}
\end{align}
By Lemma \ref{le-Fundamental} we have a $(K,M,N)$ coded caching scheme for shared-link network where
\begin{align*}
\frac{M}{N}=1-(1-\mu)\left((1-\frac{\mu\Gamma}{\Gamma-1}\right)\left(1-\frac{\mu\Gamma}{\Gamma-2}\right),\ \ R=\frac{(\Gamma-\mu\Gamma)(\Gamma-\mu\Gamma-1)}{3(2+\mu\Gamma)(1+\mu\Gamma)}.
\end{align*}

When $K=35$, by choosing the parameters $\Gamma=15$ in \eqref{eq-compar-numberic}, the parameter $K=35$ of the MN scheme in \cite{MN}, the parameters $m=7$ and $a=5$ of the scheme in \cite{YTCC}, the parameters $q=2$, $m=4$, $t=2$ and $a=1$ of the scheme in \cite{CKSM}, and the parameters $q=7$ and $m=5$ of the scheme in \cite{CJYT}, we have the transmission loads and the subpacketizations of the schemes in \cite{CJYT,YTCC,CKSM} and Theorem \ref{th-PDA-SS} illustrated in Fig. \ref{fig-design-R} and Fig. \ref{fig-design-F}, respectively. It can be seen the proposed PDAs have the advantages on subpacketizations or transmission loads.
\begin{figure}[http!]
\centerline{\includegraphics[scale=0.7]{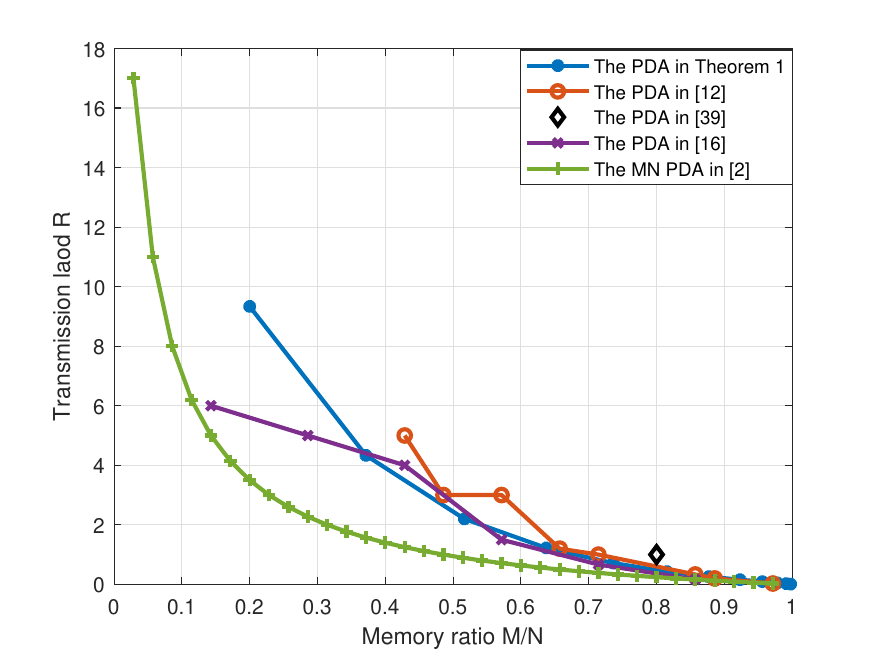}}
\caption{\small The transmission loads of PDAs in \cite{MN,CKSM,YTCC,CJYT} and Theorem \ref{th-PDA-SS}.}
\label{fig-design-R}
\end{figure}
\begin{figure}[http!]
\centerline{\includegraphics[scale=0.7]{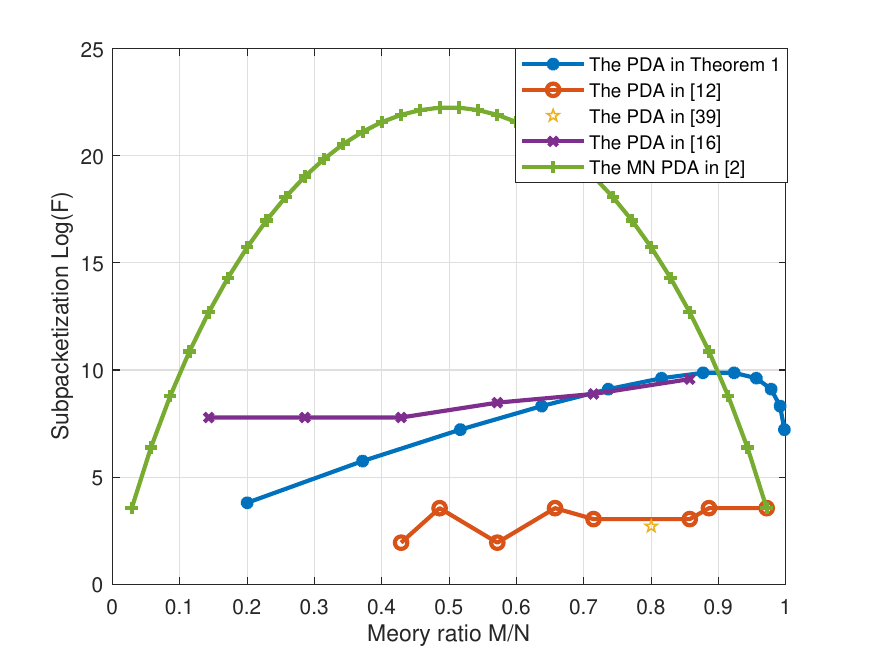}}
\caption{\small The subpacketizatins of PDAs in \cite{MN,CKSM,YTCC,CJYT} and  Corollary \ref{corollary-design}. }
\label{fig-design-F}
\end{figure}

Now let us consider the PDA in Corollary \ref{corollary-GDD}. Based on a $2$-$(m,q,3,1)$ GDD and OA$(m,q,m)$, the obtained user-delivery array is a $(K,F,Z,S)$ PDA with
\begin{align*}
K=\frac{m(m-1)q^2}{6},\ \ \ F= 3q^m,\ \ \ Z=3q^{m-3}(q^3-(q-1)^3),\ \ \  S= (q-1)^2q^{m}.
\end{align*}
Due to the special values of the user number and the memory ratio of the proposed PDA in Corollary \ref{corollary-GDD},  the comparison to the schemes in \cite{CJYT,YTCC,CKSM} is provided in Table \ref{tab-compare}.
\begin{table}
\center
\caption{The schemes in \cite{MN,CJYT,YTCC,CKSM} and Theorem \ref{th-GDD}.
\label{tab-compare}}
\renewcommand\arraystretch{0.8}
    \setlength{\tabcolsep}{0.5mm}{
\begin{tabular}{|c|c|c|c|c|}
\hline
PDAs $\&$ parameters&User number $K$& Memory ratio $M/N$
&Load $R$&Subpacketiztion $F$ \\ \hline
Corollary \ref{corollary-GDD}: $m=15,q=5$&875&0.488&5.33    &$0.92\times 10^{11}$\\
\cite{CJYT}: $m=7,q=125,z=61$   &875&0.488&64   &$3.8\times 10^{12}$\\
\cite{MN}: $K=875$, $\mu=0.488$	                    &875&0.488&1.05&$5.3\times 10^{261}$\\ \hline
Corollary \ref{corollary-GDD}: $m=16,q=4$&640&0.578&3	&$1.29\times 10^{10}$	\\
\cite{CJYT}: $m=10,q=64,z=37$   &640&0.578&13.5	&$3.6\times 10^{16}$ \\
\cite{MN}: $K=640$, $\mu=0.578$	                    &640&0.578&0.73	&$5.7\times 10^{187}$	\\ \hline
Corollary \ref{corollary-GDD}: $m=16,q=5$&1000&0.488&5.33	&$4.56\times 10^{11}$\\
\cite{CJYT}: $m=10,q=125,z=61$  &1000&0.488&64	&$4.8\times 10^{14}$\\
\cite{MN}:$K=1000$, $\mu=0.488$	                    &1000&0.488&1.05&$2.0\times 10^{299}$\\ \hline
Corollary \ref{corollary-GDD}: $m=7,q=20$&2800&0.143&120.33	&$0.38\times 10^{10}$\\
\cite{CKSM}: $q=7,m=5,t=1,a=4$   &2801&0.143&480	&$2.5\times 10^{12}$\\ \hline
Corollary \ref{corollary-GDD}: $m=4,q=8$ &128&0.33&16.33	   &$1.23\times 10^4$	\\
\cite{CKSM}: $q=3,m=5,t=1,a=4$   &121&0.33&16.2 &$7.64\times 10^{6}$	\\ \hline
Corollary \ref{corollary-GDD}: $m=1,q=3$&165&0.704&1.333  &531441	\\
\cite{YTCC}: $m=11,a=3,b=1,\eta=1$ &165&0.727&25&11	\\ \hline
Corollary \ref{corollary-GDD}: $m=4,q=12$&288&0.23&40.22& 62208	\\
\cite{YTCC}: $m=13,a=3,b=1,\eta=0$ &286&0.23&55&13	\\
\hline
\end{tabular}
}
\end{table}
In Table~\ref{tab-compare} we can see that the proposed PDAs in Corollary \ref{corollary-GDD} has the advantages both on transmission load and subpacetizations compared with the PDAs in \cite{CJYT,CKSM}; has a lower transmission load and higher subpacketization compared to the PDA  in \cite{YTCC}.
\subsection{Sketch of the proposed scheme in Theorem \ref{th-PDA-SS}}
\label{subsect-sketch-1}
Let us consider an $(L,K,\Gamma,M,N)=(3,7,7,1,7)$ coded caching scheme for the $t$-$(\Gamma,L,K,\lambda)=2$-$(7,3,7,1)$ design $(\mathcal{X},\mathfrak{B})$ (in Example \ref{exam-1-t-design}) access topology. More precisely,
user $U_{\mathcal{B}_k}$  where $k\in [7]$ is connected to the cache-nodes with indices in $\mathcal{B}_k$, where
\begin{align*}
&\mathcal{B}_1= \{1,2,4\}, \ \mathcal{B}_2= \{2,3,5\}, \ \mathcal{B}_3= \{3,4,6\}, \ \mathcal{B}_4= \{4,5,7\}, \\
&\mathcal{B}_5= \{5,6,1\},\ \mathcal{B}_6= \{6,7,2\}, \ \mathcal{B}_7= \{7,1,3\}.
\end{align*}
In addition, we have $\mu\Gamma=1$ and ${L\choose t}={3\choose 2}=3$.

Using the MN scheme and the $2$-$(7,3,7,1)$ design, we first divide each file into $7$ subfiles with equal size, i.e., for each $n\in [7]$, $W_n=(W_{n,\mathcal{D}})_{\mathcal{D}\in {[7]\choose 1}}$, and further divide each subfile into $3$ packets with equal size, i.e., for each $n\in [N]$ and $\mathcal{D}\in {[7]\choose 1}$, $W_{n,\mathcal{D}}=(W^{\mathcal{T}}_{n,\mathcal{D}})_{\mathcal{T}\in {[3]\choose 2}}$.
Let us introduce our main idea by constructing three arrays $\mathbf{C}$, $\mathbf{U}$, and $\mathbf{Q}$ defined in Definition \ref{defn:three arrays}.

By Definition \ref{defn:three arrays}, we have $F={\Gamma\choose \mu\Gamma}{L\choose t}={7\choose 1}{3\choose 2}=21$ and $K={\Gamma\choose t}/{L\choose t}={7\choose 2}/{3\choose 2}=7$. For convenience, we will represent the row labels of $\mathbf{C}$, $\mathbf{U}$, and $\mathbf{Q}$ by a pair $(\mathcal{D},\mathcal{T})$ where $\mathcal{D}\in {[\Gamma]\choose \mu\Gamma}={[7]\choose 1}$, $\mathcal{T}\in {[L]\choose 2}={[3]\choose 2}$.

\subsubsection{Construction of node-placement array $\mathbf{C}$}
We divide each file $W_n$ where $n\in [7]$ into $21$ packets, where
 each packet is denoted by $W^{\mathcal{T}}_{n,\mathcal{D}}$ where $\mathcal{T}\in {[3]\choose 2}$ and $\mathcal{D}\in {[7]\choose 1}$,
and  is cached by cache-node $\gamma$    if $\gamma\in \mathcal{D}$. Then the packets cached by cache-nodes can be written as (note that in the examples for simplicity we omit the braces and commas when we indicate the indices of cache-nodes, users and packets)
\begin{align}
\mathcal{Z}_{C_1}&=\{W^{12}_{n,1},\ W^{13}_{n,1},\ W^{23}_{n,1}\ |\ n\in [7]\},\ \
\mathcal{Z}_{C_2} =\{W^{12}_{n,2},\ W^{13}_{n,2},\ W^{23}_{n,2}\ |\ n\in [7]\}, \nonumber\\
\mathcal{Z}_{C_3}&=\{W^{12}_{n,3},\ W^{13}_{n,3},\ W^{23}_{n,3}\ |\  n\in [7]\},\ \  \mathcal{Z}_{C_4} =\{W^{12}_{n,4},\ W^{13}_{n,4},\ W^{23}_{n,4}\ |\ n\in [7]\},\nonumber\\\
\mathcal{Z}_{C_5}&=\{W^{12}_{n,5},\ W^{13}_{n,5},\ W^{23}_{n,5}\ |\ n\in [7]\},\ \
\mathcal{Z}_{C_6} =\{W^{12}_{n,6},\ W^{13}_{n,6},\ W^{23}_{n,6}\ |\ n\in [7]\},\nonumber\\
\mathcal{Z}_{C_7}&=\{W^{12}_{n,7},\ W^{13}_{n,7},\ W^{23}_{n,7\}}\ |\ n\in [7]\}.\label{eq-exa-design-cache-node}
\end{align}
By Definition \ref{defn:three arrays}, a $21\times 7$ node-placement array $\mathbf{C}$ which represents the packets cached by the cache-nodes can be written in Table \ref{tab-caching-node} as follows.
Note that since each row of $\mathbf{C}$  represents a packet $W^{\mathcal{T}}_{n,\mathcal{D}}$ for the files $n\in [N]$, we let the row index of $\mathbf{C}$ be $(\mathcal{D},\mathcal{T})$, where $\mathcal{T}\in {[3]\choose 2}$ and $\mathcal{D}\in {[7]\choose 1}$.

\begin{table}
\center
\caption{Node-placement array $\mathbf{C}$.
\label{tab-caching-node}}
\renewcommand\arraystretch{0.8}
\begin{tabular}{|c|ccccccc|}
\hline
Packet labels&\multicolumn{7}{|c|}{Cache-node set $\mathcal{X}$} \\ \hline
$\mathcal{D},\mathcal{T}$         &  $C_1$&$C_2$	&$C_3$	&$C_4$	&$C_5$	&$C_6$	&$C_7$ \\ \hline
$\{1\},\{1,2\}$&$*$&&&&&&\\
$\{2\},\{1,2\}$&&$*$&&&&&\\
$\{3\},\{1,2\}$&&&$*$&&&&\\
$\{4\},\{1,2\}$&&&&$*$&&&\\
$\{5\},\{1,2\}$&&&&&$*$&&\\
$\{6\},\{1,2\}$&&&&&&$*$&\\
$\{7\},\{1,2\}$&&&&&&&$*$\\ \hline
$\{1\},\{1,3\}$&$*$&&&&&&\\
$\{2\},\{1,3\}$&&$*$&&&&&\\
$\{3\},\{1,3\}$&&&$*$&&&&\\
$\{4\},\{1,3\}$&&&&$*$&&&\\
$\{5\},\{1,3\}$&&&&&$*$&&\\
$\{6\},\{1,3\}$&&&&&&$*$&\\
$\{7\},\{1,3\}$&&&&&&&$*$\\ \hline
$\{1\},\{2,3\}$&$*$&&&&&&\\
$\{2\},\{2,3\}$&&$*$&&&&&\\
$\{3\},\{2,3\}$&&&$*$&&&&\\
$\{4\},\{2,3\}$&&&&$*$&&&\\
$\{5\},\{2,3\}$&&&&&$*$&&\\
$\{6\},\{2,3\}$&&&&&&$*$&\\
$\{7\},\{2,3\}$&&&&&&&$*$\\ \hline
\end{tabular}
\end{table}
\subsubsection{Construction of user-retrieve array $\mathbf{U}$}
Recall that the $2$-$(7,3,7,1)$ design $(\mathcal{X},\mathfrak{B})$ in Example \ref{exam-1-t-design} has the blocks $\mathfrak{B}=\{\{1,2,4\}$, $\{2,3,5\}$, $\{3,4,6\}$, $\{4,5,6\}$, $\{5,6,1\}$, $\{6,7,2\}$, $\{7,1,3\}\}$. Then the users $U_{\mathcal{B}}$ where $\mathcal{B}\in\mathfrak{B}$ can retrieve the following  packets,
\begin{align}
&\mathcal{Z}_{U_{124}}=\{{\color{red}W^{12}_{n,1},W^{13}_{n,1},W^{23}_{n,1}}, {\color{blue}W^{12}_{n,2},W^{13}_{n,2},W^{23}_{n,2}},
{\color{magenta}W^{12}_{n,4},W^{13}_{n,4},W^{23}_{n,4}}|n\in [7]\},\nonumber\\
&\mathcal{Z}_{U_{235}}=\{{\color{red}W^{12}_{n,2},W^{13}_{n,2}},{\color{red}W^{23}_{n,2}}, {\color{blue}W^{12}_{n,3},W^{13}_{n,3},W^{23}_{n,3}},
{\color{magenta}W^{12}_{n,5},W^{13}_{n,5},W^{23}_{n,5}}|n \in[7]\},\nonumber\\
& \mathcal{Z}_{U_{346}}= \{{\color{red}W^{12}_{n,3},W^{13}_{n,3},W^{23}_{n,3}}, {\color{blue}W^{12}_{n,4},W^{13}_{n,4}},
{\color{blue} W^{23}_{n,4}},{\color{magenta}W^{12}_{n,6},W^{13}_{n,6},W^{23}_{n,6}} | n\in [7]\}\nonumber\\
&\mathcal{Z}_{U_{457}}=\{{\color{red}W^{12}_{n,4}},  {\color{red}W^{13}_{n,4},W^{23}_{n,4}},{\color{blue}W^{12}_{n,5},W^{13}_{n,5},W^{23}_{n,5}},
{\color{magenta}W^{12}_{n,7},W^{13}_{n,7},W^{23}_{n,7}}  | n\in [7]\},\nonumber\\
&\mathcal{Z}_{U_{156}}=\{ {\color{red}W^{12}_{n,5},W^{13}_{n,5},W^{23}_{n,5}}, {\color{blue}W^{12}_{n,6}, W^{13}_{n,6}},
{\color{blue}W^{23}_{n,6}}, {\color{magenta}W^{12}_{n,1},W^{13}_{n,1},W^{23}_{n,1}}| n\in [7]\},\nonumber\\
&\mathcal{Z}_{U_{267}}=\{{\color{red}W^{12}_{n,6}},{\color{red}W^{13}_{n,6},W^{23}_{n,6}}, {\color{blue}W^{12}_{n,7},W^{13}_{n,7},W^{23}_{n,7}},
{\color{magenta}W^{12}_{n,2},W^{13}_{n,2}},{\color{magenta}W^{23}_{n,2}}|n\in [7]\}, \nonumber\\
&\mathcal{Z}_{U_{137}}=\{{\color{red}W^{12}_{n,7},W^{13}_{n,7},W^{23}_{n,7}}, {\color{blue}W^{12}_{n,1}},{\color{blue}W^{13}_{n,1},W^{23}_{n,1}},
{\color{magenta}W^{12}_{n,3},W^{13}_{n,3},W^{23}_{n,3}}| n\in [7]\}.\label{eq-exa-design-cache-user}
\end{align}
By Definition \ref{defn:three arrays}, a $21\times 7$ node-placement array $\mathbf{C}$ representing the packets retrieved by the users can be written in Table \ref{tab-user-retrieve} as follows.
\begin{table}[!htbp]
\center
\caption{User-retrieve array $\mathbf{U}$.
\label{tab-user-retrieve}}
\renewcommand\arraystretch{0.8}
\begin{tabular}{|c|ccccccc|}
\hline
Subpacket labels&\multicolumn{7}{|c|}{User set $\mathfrak{B}$} \\ \hline
$(\mathcal{D},\mathcal{T})$         &  $U_{124}$&$U_{235}$&$U_{346}$&$U_{457}$&$U_{156}$&$U_{267}$&$U_{137}$\\ \hline
$(\{1\},\{1,2\})$&*&&&&*&&*\\
$(\{2\},\{1,2\})$&*&*&&&&*&\\
$(\{3\},\{1,2\})$&&*&*&&&&*\\
$(\{4\},\{1,2\})$&*&&*&*&&&\\
$(\{5\},\{1,2\})$&&*&&*&*&&\\
$(\{6\},\{1,2\})$&&&*&&*&*&\\
$(\{7\},\{1,2\})$&&&&*&&*&*\\ \hline
$(\{1\},\{1,3\})$&*&&&&*&&*\\
$(\{2\},\{1,3\})$&*&*&&&&*&\\
$(\{3\},\{1,3\})$&&*&*&&&&*\\
$(\{4\},\{1,3\})$&*&&*&*&&&\\
$(\{5\},\{1,3\})$&&*&&*&*&&\\
$(\{6\},\{1,3\})$&&&*&&*&*&\\
$(\{7\},\{1,3\})$&&&&*&&*&*\\ \hline
$(\{1\},\{2,3\})$&*&&&&*&&*\\
$(\{2\},\{2,3\})$&*&*&&&&*&\\
$(\{3\},\{2,3\})$&&*&*&&&&*\\
$(\{4\},\{2,3\})$&*&&*&*&&&\\
$(\{5\},\{2,3\})$&&*&&*&*&&\\
$(\{6\},\{2,3\})$&&&*&&*&*&\\
$(\{7\},\{2,3\})$&&&&*&&*&*\\ \hline
\end{tabular}
\end{table}

\subsubsection{Construction of user-delivery array $\mathbf{Q}$}
For the sake of clarity, in the user-delivery array each non-null entry is filled by a set, instead of an integer.
As shown in Table \ref{tab-user-retrieve}, each entry $\mathbf{U}((\mathcal{D},\mathcal{T}),\mathcal{B})$ is null if and only if $\mathcal{D}\bigcap \mathcal{B}=\emptyset$ where $\mathcal{D}\in {[7]\choose 1}$ and $\mathcal{B}\in\mathfrak{B}$. Then we put the subset $\mathcal{D}\cup\mathcal{B}(\mathcal{T})$ into the entry $\mathbf{U}((\mathcal{D},\mathcal{T}),\mathcal{B})$ to the user-delivery array $\mathbf{Q}$.  For example, we can see that the entries $\mathbf{U}((\{3\},\{1,2\}),\{1,2,4\})$,  $\mathbf{U}((\{1\},\{1,2\}),\{2,3,5\})$ and $\mathbf{U}((\{2\},\{1,2\}),\{1,3,7\})$ are null in Table \ref{tab-user-retrieve}. Then we put the subset
\begin{align*}
\{1,2,3\} =&\{3\}\cup\{1,2,4\}(\{1,2\})=\{3\}\cup\{1,2\}\\
=&\{1\}\cup\{2,3,5\}(\{1,2\})=\{1\}\cup\{2,3\}\\
=&\{2\}\cup\{1,3,7\}(\{1,2\})=\{2\}\cup\{1,3\}
\end{align*} into the entries $\mathbf{U}((\{3\},\{1,2\}),\{1,2,4\})$, $\mathbf{U}((\{1\},\{1,2\}),\{2,3,5\})$ and $\mathbf{U}((\{2\},\{1,2\}),\{1,3,7\})$. We can check that the entries
\begin{align*}
&\mathbf{U}((\{3\},\{1,2\}),\{2,3,5\})=
\mathbf{U}((\{3\},\{1,2\}),\{1,3,7\})\\
=&\mathbf{U}((\{1\},\{1,2\}),\{1,2,4\})=
\mathbf{U}((\{1\},\{1,2\}),\{1,3,7\})\\
=&\mathbf{U}((\{2\},\{1,2\}),\{1,2,4\})=
\mathbf{U}((\{2\},\{1,2\}),\{2,3,5\})=*.
\end{align*} That is, the condition C3 of Definition \ref{def-PDA} holds.
  Similarly in this example we can obtain the $21\times 7$ user-delivery array $\mathbf{Q}$ listed in Table \ref{tab-user-delivery}.
 We can check that there are exactly $S={\Gamma \choose t+\mu\Gamma}-K{L\choose t+\mu\Gamma}={7\choose 2+1}-7\times {3\choose 2+1}=35-7=28$, and the obtained $\mathbf{Q}$ is a $(7,21,9,28)$ PDA which leads to a $(7,M',N)$ coded caching scheme for a shared-link network with $M'/N=3/7$, subpacketization $21$ and transmission load $28/21=4/3$ by Lemma \ref{le-Fundamental}. The total coded caching gain is $K(F-Z)/S=7\times 4\times 3/28=3$.
\begin{table}[!htbp]
\center
\caption{User-delivery array $\mathbf{Q}$.
\label{tab-user-delivery}}
\renewcommand\arraystretch{0.2}
    \setlength{\tabcolsep}{0.5mm}{
\begin{tabular}{|c|ccccccc|}
\hline
Subpacket labels&\multicolumn{7}{|c|}{User set $\mathfrak{B}$} \\ \hline
$\mathcal{D},\mathcal{T}$         &
$U_{124}$&$U_{235}$&$U_{346}$&$U_{457}$&
$U_{156}$&$U_{267}$&$U_{137}$\\ \hline
$\{1\},\{1,2\}$&*&$123$&$134$&$145$&*&$126$&*\\
$\{2\},\{1,2\}$&*&*&$234$&$245$&$125$&*&$123$\\
$\{3\},\{1,2\}$&$123$&*&*&$345$&$135$&$236$&*\\
$\{4\},\{1,2\}$&*&$234$&*&*&$145$&$246$&$134$\\
$\{5\},\{1,2\}$&$125$&*&$345$&*&*&$256$&$135$\\
$\{6\},\{1,2\}$&$126$&$236$&*&$456$&*&*&$136$\\
$\{7\},\{1,2\}$&$127$&$237$&$347$&*&$157$&*&*\\ \hline
$\{1\},\{1,3\}$&*&$125$&$136$&$147$&*&$127$&*\\
$\{2\},\{1,3\}$&*&*&$236$&$247$&$126$&*&$127$\\
$\{3\},\{1,3\}$&$134$&*&*&$347$&$136$&$237$&*\\
$\{4\},\{1,3\}$&*&$245$&*&*&$146$&$247$&$147$\\
$\{5\},\{1,3\}$&$145$&*&$356$&*&*&$257$&$157$\\
$\{6\},\{1,3\}$&$146$&$256$&*&$467$&*&*&$167$\\
$\{7\},\{1,3\}$&$147$&$257$&$367$&*&$167$&*&*\\ \hline
$\{1\},\{2,3\}$&*&$135$&$146$&$157$&*&$167$&*\\
$\{2\},\{2,3\}$&*&*&$246$&$257$&$256$&*&$237$\\
$\{3\},\{2,3\}$&$234$&*&*&$357$&$356$&$367$&*\\
$\{4\},\{2,3\}$&*&$345$&*&*&$456$&$467$&$347$\\
$\{5\},\{2,3\}$&$245$&*&$456$&*&*&$567$&$357$\\
$\{6\},\{2,3\}$&$246$&$356$&*&$567$&*&*&$367$\\
$\{7\},\{2,3\}$&$247$&$357$&$467$&*&$567$&*&*\\ \hline
\end{tabular}
}
\end{table}

\subsection{Sketch of the proposed scheme in Theorem \ref{th-GDD}}
\label{subsect-sketch-2}
Now we let the access topology be the $t$-$(m,q,L,\lambda)=2$-$(3,2,2,1)$ GDD
$(\mathcal{X},\mathfrak{G},\mathfrak{B})$ where
\begin{subequations}
\label{eq-2-3-2-2-1-GDD}
\begin{align}
\mathcal{X}&=\{(1,1),(1,2),(2,1),(2,2),(3,1),(3,2)\},\label{eq-point-GDD}\\
\mathfrak{G}&=\{\mathcal{G}_1=\{(1,1),(1,2)\},\ \ \mathcal{G}_2=\{(2,1),(2,2)\},\ \
\mathcal{G}_3=\{(3,1),(3,2)\}\},\label{eq-point-GDD2}\\
\mathfrak{B}&=\{\ \mathcal{B}_1=\{(1,1),(2,1)\},\ \ \mathcal{B}_2=\{(1,1),(2,2)\},\ \ \mathcal{B}_3=\{(1,1),(3,1)\},\nonumber\\
\ \ &\ \ \ \ \ \mathcal{B}_4=\{(1,1),(3,2)\},
\ \ \mathcal{B}_5=\{(1,2),(2,1)\},\ \
\mathcal{B}_6=\{(1,2),(2,2)\},\nonumber\\
\ \ &\ \ \ \ \ \mathcal{B}_7=\{(1,2),(3,1)\},\ \ \mathcal{B}_8=\{(1,2),(3,2)\},
\ \ \mathcal{B}_9=\{(2,1),(3,1)\},\nonumber\\
\ \ &\ \ \ \ \
\mathcal{B}_{10}=\{(2,1),(3,2)\},\  \mathcal{B}_{11}=\{(2,2),(3,1)\},\   \mathcal{B}_{12}=\{(2,2),(3,2)\}\
\}.\label{eq-block-GDD}
\end{align}
\end{subequations}
The topology represented by  the above  $t$-GDD, contains $6$ cache-nodes and $12$ users.
Each cache-node is represented by a vector in $\mathcal{X}$.
Each user $U_{\mathcal{B}_k}$ where $k\in [12]$ is connected to the cache-nodes with indices in $\mathcal{B}_k$  as shown in~\eqref{eq-block-GDD}.
In the following, for the $2$-$(3,2,2,1)$ GDD $(\mathcal{X},\mathfrak{G},\mathfrak{B})$ access topology, we will design a $(L,K,\Gamma,M,N)=(2,12,6,3,6)$ coded caching scheme  whose cache placement is  based on the OA$(m=3,q=2,s=2)$,  denoted by $\mathbf{A}=(\mathbf{A}(j,u))_{j\in [4],u\in [3]}$ (shown in~\eqref{eq-OA-4} of Example~\ref{exam-OA-3-2-3}).

 We will also introduce our main idea by constructing three arrays $\mathbf{C}$, $\mathbf{U}$, and $\mathbf{Q}$ as follows.

\subsubsection{Construction of node-placement array $\mathbf{C}$}
We divide each file into $\binom{L}{t}q^s=4$   subfiles with equal size, i.e., for each $n\in [6]$, $W_n=\left(W_{n,j}: \mathcal{T}\in {[L]\choose t},j\in [q^s] \right)=\left(W_{n,j}: \mathcal{T}\in {[2]\choose 2},j\in [4] \right)$. Each cache-node $C_{u,v}$ where $u\in [3]$ and $v\in [2]$ caches the packet $W^{[2]}_{n,j}$ if   $\mathbf{A}(j,u)=v$ for each $j\in [4]$. So the packets cached by the cache-nodes can be written as follows.
\begin{align}
\mathcal{Z}_{C_{1,1}}=\left\{W^{[2]}_{n,1},W^{[2]}_{n,3}\ |\  n\in [6]\right\},\ \ \
\mathcal{Z}_{C_{1,2}}=\left\{W^{[2]}_{n,2},W^{[2]}_{n,4}\ |\  n\in [6]\right\},\nonumber\\
\mathcal{Z}_{C_{2,1}}=\left\{W^{[2]}_{n,1},W^{[2]}_{n,2}\ |\  n\in [6]\right\},\ \ \
\mathcal{Z}_{C_{2,2}}=\left\{W^{[2]}_{n,3},W^{[2]}_{n,4}\ |\  n\in [6]\right\},\nonumber\\
\mathcal{Z}_{C_{3,1}}=\left\{W^{[2]}_{n,1},W^{[2]}_{n,4}\ |\  n\in [6]\right\},\ \ \
\mathcal{Z}_{C_{3,2}}=\left\{W^{[2]}_{n,2},W^{[2]}_{n,3}\ |\  n\in [6]\right\}. \label{eq-cache-subpackets-exam}
\end{align}
We can see that each cache-node caches $2\times 6=12$  packets. By Definition \ref{defn:three arrays}, the $4\times 6$ node-placement array $\mathbf{C}$ representing the packets cached by the cache-nodes can be written in Table~\ref{tab-caching-node-OA}. Note that the row index of $\mathbf{C}$ is $(j,[2])$ where $j\in [4]$.

\begin{table}
\center
\caption{Node-placement array $\mathbf{C}$.
\label{tab-caching-node-OA}}
\begin{tabular}{|c|c|cccccc|} \hline
Array&Packet labels&\multicolumn{6}{|c|}{cache-nodes} \\ \hline
$\mathbf{A}$&$(j\in [4],\mathcal{T}=[2])$
           &$C_{1,1}$&$C_{1,2}$&$C_{2,1}$&$C_{2,2}$&$C_{3,1}$&$C_{3,2}$ \\ \hline
111 &$(1,[2])$&   *& &*& &*&\\
212 &$(2,[2])$& &*&*& &&*\\
122 &$(3,[2])$&*& & &*&&*\\
221 &$(4,[2])$ & &*& &*&*&\\ \hline
\end{tabular}
\end{table}

\subsubsection{Construction of user-retrieve array $\mathbf{U}$}
Using the $2$-$(3,2,3,1)$ GDD $(\mathcal{X},\mathfrak{G},\mathfrak{B})$ in Example \ref{exam-dual-8-4}, the users can retrieve the following packets.
\begin{eqnarray}\label{eq-retrieve-subpackets-exam}
\begin{split}
\mathcal{Z}_{U_{(1,1),(2,1)}}&=\mathcal{Z}_{C_{1,1}}\bigcup \mathcal{Z}_{C_{2,1}}=\{
W^{[2]}_{n,1},W^{[2]}_{n,2},W^{[2]}_{n,3}\ \Big|\  n\in [6]\},&\\
\mathcal{Z}_{U_{(1,2),(2,1)}}&=\mathcal{Z}_{C_{1,2}}\bigcup\mathcal{Z}_{C_{2,1}}=\{
W^{[2]}_{n,1},W^{[2]}_{n,2},W^{[2]}_{n,4}\ \Big|\  n\in [6]\},&\\
\mathcal{Z}_{U_{(1,1),(2,2)}}&=\mathcal{Z}_{C_{1,1}}\bigcup\mathcal{Z}_{C_{2,2}}=\{
W^{[2]}_{n,1},W^{[2]}_{n,3},W^{[2]}_{n,4} \ \Big|\  n\in [6]\},&\\
\mathcal{Z}_{U_{(1,2),(2,2)}}&=\mathcal{Z}_{C_{1,2}}\bigcup\mathcal{Z}_{C_{2,2}}=\{
W^{[2]}_{n,2},W^{[2]}_{n,3},W^{[2]}_{n,4} \ \Big|\  n\in [6]\},&\\
\mathcal{Z}_{U_{(1,1),(3,1)}}&=\mathcal{Z}_{C_{1,1}}\bigcup\mathcal{Z}_{C_{3,1}}=\{
W^{[2]}_{n,1},W^{[2]}_{n,3},W^{[2]}_{n,4} \ \Big|\  n\in [6]\},&\\
\mathcal{Z}_{U_{(1,2),(3,1)}}&=\mathcal{Z}_{C_{1,2}}\bigcup\mathcal{Z}_{C_{3,1}}=\{
W^{[2]}_{n,1},W^{[2]}_{n,2},W^{[2]}_{n,4} \ \Big|\  n\in [6]\},&\\
\mathcal{Z}_{U_{(1,1),(3,2)}}&=\mathcal{Z}_{C_{1,1}}\bigcup\mathcal{Z}_{C_{3,2}}=\{
W^{[2]}_{n,1},W^{[2]}_{n,2},W^{[2]}_{n,3} \ \Big|\  n\in [6]\},&\\
\mathcal{Z}_{U_{(1,2),(3,2)}}&=\mathcal{Z}_{C_{1,2}}\bigcup\mathcal{Z}_{C_{3,2}}=\{
W^{[2]}_{n,2},W^{[2]}_{n,3},W^{[2]}_{n,4} \ \Big|\  n\in [6]\},&\\
\mathcal{Z}_{U_{(2,1),(3,1)}}&=\mathcal{Z}_{C_{2,1}}\bigcup\mathcal{Z}_{C_{3,1}}=\{
W^{[2]}_{n,1},W^{[2]}_{n,2},W^{[2]}_{n,4} \ \Big|\  n\in [6]\},&\\
\mathcal{Z}_{U_{(2,2),(3,1)}}&=\mathcal{Z}_{C_{2,2}}\bigcup\mathcal{Z}_{C_{3,1}}=\{
W^{[2]}_{n,1},W^{[2]}_{n,3},W^{[2]}_{n,4} \ \Big|\  n\in [6]\},&\\
\mathcal{Z}_{U_{(2,1),(3,2)}}&=\mathcal{Z}_{C_{2,1}}\bigcup\mathcal{Z}_{C_{3,2}}=\{
W^{[2]}_{n,1},W^{[2]}_{n,2},W^{[2]}_{n,3} \ \Big|\  n\in [6]\},&\\
\mathcal{Z}_{U_{(2,2),(3,2)}}&=\mathcal{Z}_{C_{2,2}}\bigcup\mathcal{Z}_{C_{3,2}}=\{
W^{[2]}_{n,2},W^{[2]}_{n,3},W^{[2]}_{n,4} \ \Big|\  n\in [6]\}.&
\end{split}
\end{eqnarray}
By Definition \ref{defn:three arrays} we have a $4\times 12$ user-retrieve array $\mathbf{U}$ to represent the packets retrieved by the users in Table \ref{tab-user-retrieve-GDD} as follows.

\begin{table}
\center
\caption{User-retrieve array $\mathbf{U}$. We omit the    braces and commas in the vectors; e.g., $111$ represents $(1,1,1)$, $U_{11,21}$ represents $U_{(1,1),(2,1)}$.
\label{tab-user-retrieve-GDD}}
\renewcommand\arraystretch{0.8}
    \setlength{\tabcolsep}{1.36mm}{
\begin{tabular}{|c|c|cccccccccccc|}
Array&Packet labels&\multicolumn{12}{|c|}{User set $U_{\mathcal{B}}$, $\mathcal{B}\in\mathfrak{B}$} \\ \hline
$\mathbf{A}$&$j\in [4],\mathcal{T}=[2]$       &$U_{11,21}$&$U_{11,22}$&$U_{11,31}$&$U_{11,32}$&$U_{12,21}$&$U_{12,22}$&$U_{12,31}$&$U_{12,32}$
&$U_{21,31}$&$U_{21,32}$&$U_{22,31}$&$U_{22,32}$\\\hline
111&$1,[2]$&*&*&*&*&*& &*& &*&*&*& \\\hline
212&$2,[2]$&*& & &*&*&*&*&*&*&*& &*\\\hline
122&$3,[2]$&*&*&*&*& &*& &*& &*&*&*\\\hline
221&$4,[2]$& &*&*& &*&*&*&*&*& &*&*\\\hline
\end{tabular}
}
\end{table}
\subsubsection{Construction of user-delivery array $\mathbf{Q}$}
Using the  OA$(m=3,q=2,s=2)$ $\mathbf{A}=(\mathbf{A}(j,u))_{j\in [4],u\in [3]}$ in~\eqref{eq-OA-4} of Example~\ref{exam-OA-3-2-3} and the GDD $(\mathcal{X},\mathfrak{G},\mathfrak{B})$ in \eqref{eq-2-3-2-2-1-GDD}, we can put the null entries of $\mathbf{U}$ in the following way to obtain the use-delivery array $\mathbf{Q}$, i.e., Table \ref{tab-user-delivery-GDD}.  For the sake of clarity, each block $\mathcal{B}_k, k\in [12]$  in \eqref{eq-2-3-2-2-1-GDD}  can be represented by $\{(u_{k,1},v_{k,1}), (u_{k,2},v_{k,2})\}$ where $u_{k,1}$, $u_{k,2}\in [3]$ and $v_{k,1}$, $v_{k,2}\in [2]$, and each non-null entry of $\mathbf{U}$ is filled by a vector, instead of an integer. As shown in Table \ref{tab-user-retrieve-GDD}, for any integer $j\in [4]$ and integer $k\in [12]$, the entry $\mathbf{U}((j,[2]),\mathcal{B}_k\})$ is null if and only if $\text{d}(\mathbf{A}(j,\{u_{k,1},u_{k,2}\}),(v_{k,1},v_{k,2}))=2$.
Then we put the vector ${\bf e}=(e_1,e_2,e_3)$ where
\begin{align*}
{\bf e}(\{u_{k,1},u_{k,2}\})=(v_{k,1},v_{k,2})\ \ \text{and}\ \
{\bf e}([3]\setminus\{u_{k,1},u_{k,2}\})
=\mathbf{A}(j,[3]\setminus\{u_{k,1},u_{k,2}\})
\end{align*} into the entry $\mathbf{U}((j,[2]),\mathcal{B}_k)$. For example, let us see the columns of the user-retrieve array $\mathbf{U}$ in Table \ref{tab-user-retrieve-GDD} which are labeled by the blocks
\begin{align*}
\mathcal{B}_2&=\{(u_{2,1},v_{2,1})=(1,1), (u_{2,2},v_{2,2})=(2,2)\}\\
\mathcal{B}_4&=\{(u_{4,1},v_{4,1})=(1,1), (u_{4,2},v_{4,2})=(3,2)\}\\
\mathcal{B}_{12}&=\{(u_{12,1},v_{12,1})=(2,2), (u_{12,2},v_{12,2})=(3,2)\}
\end{align*}respectively. We can see that the entries
\begin{align*}
\mathbf{U}((2,[2]),\mathcal{B}_2),\ \
\mathbf{U}((4,[2]),\mathcal{B}_4),\ \
\mathbf{U}((1,[2]),\mathcal{B}_{12})
\end{align*} are null in Table \ref{tab-user-retrieve-GDD}. Then we put the vector $(1,2,2)$ into these three entries since
\begin{align*}
&\left\{\begin{array}{c}
(1,2,2)(\{u_{2,1},u_{2,2}\})=(1,2,2)(\{1,2\})=(1,2)=(v_{2,1},v_{2,2}),\\
(1,2,2)([3]\{1,2\})=(1,2,2)(3)=2=\mathbf{A}(2,3);\ \ \ \ \ \ \ \ \ \ \ \ \ \ \ \ \
\end{array} \right.\\
&\left\{\begin{array}{c}
(1,2,2)(\{u_{4,1},u_{4,2}\})=(1,2,2)(\{1,3\})=(1,2)=(v_{4,1},v_{4,2}),\\
(1,2,2)([3]\{1,3\})=(1,2,2)(2)=2=\mathbf{A}(4,2);\ \ \ \ \ \ \ \ \ \ \ \ \ \ \ \ \
\end{array}\right.\\
&\left\{\begin{array}{c}
(1,2,2)(\{u_{12,1},u_{12,2}\})=(1,2,2)(\{2,3\})=(2,2)=(v_{2,1},v_{2,2}),\\
(1,2,2)([3]\{2,3\})=(1,2,2)(1)=1=\mathbf{A}(1,1).\ \ \ \ \ \ \ \ \ \ \ \ \ \ \ \ \
\end{array}\right.
\end{align*}One can check that then entries
\begin{align*}
&\mathbf{U}((2,[2]),\mathcal{B}_8)=
\mathbf{U}((2,[2]),\mathcal{B}_{12})
=\mathbf{U}((4,[2]),\mathcal{B}_2)=
\mathbf{U}((4,[2]),\mathcal{B}_{12})\\
=&\mathbf{U}((1,[2]),\mathcal{B}_{2})=
\mathbf{U}((1,[2]),\mathcal{B}_{4})=*.
\end{align*} That is, the condition C3 of Definition \ref{def-PDA} holds. We can check that $\mathbf{Q}$ is a $(12, 4,3,4)$ PDA.
\begin{table}
\center
\caption{User-delivery array $\mathbf{U}$. We omit the    braces and commas in the vectors; e.g., $111$ represents $(1,1,1)$, $U_{11,21}$ represents $U_{(1,1),(2,1)}$.
\label{tab-user-delivery-GDD}}
\renewcommand\arraystretch{1}
    \setlength{\tabcolsep}{1.36mm}{
\begin{tabular}{|c|c|cccccccccccc|}
Array&Packet labels&\multicolumn{12}{|c|}{User set $U_{\mathcal{B}}$, $\mathcal{B}\in\mathfrak{B}$} \\ \hline
$\mathbf{A}$&$(j\in [4],\mathcal{T}=[2])$
&$U_{11,21}$&$U_{11,22}$&$U_{11,31}$&$U_{11,32}$&$U_{12,21}$&$U_{12,22}$&$U_{12,31}$&$U_{12,32}$
&$U_{21,31}$&$U_{21,32}$&$U_{22,31}$&$U_{22,32}$\\\hline
111&$(1,[2])$&*&*&*&*&*& 221&*& 212&*&*&*&122 \\\hline
212&$(2,[2])$&*&122 &111 &*&*&*&*&*&*&*&221 &*\\\hline
122&$(3,[2])$&*&*&*&*&212 &*& 221&*& 111&*&*&*\\\hline
221&$(4,[2])$&111 &*&*&122 &*&*&*&*&*&212 &*&*\\\hline
\end{tabular}
}
\end{table}

\section{Multiaccess coded caching scheme for the $t$-design access topology: Proof of Theorem~\ref{th-PDA-SS}}
\label{sec-packing-design}
In this section, we describe   the $(L,K,\Gamma,M,N)$ coded caching scheme for the   $t$-$(\Gamma,L,K,1)$ design $(\mathcal{X},\mathfrak{B})$ access topology for Theorem~\ref{th-PDA-SS}.  By Definition of $t$-design, we have $
K={\Gamma\choose t}/{L\choose t}$. We will propose a coded caching scheme with a placement phase that depends only on $(\Gamma,M,N)$ MN scheme and is
agnostic to the access topology.\footnote{\label{foot:practice} This is also practical in real systems because placement is always done at peak times when we only know the number of cache-nodes (which are fixed edges in the network) and do not know the network access topology exactly. }
Our delivery phase is designed with the knowledge of the access topology. In addition, it is a unified delivery phase for all topologies of $t$-design, given a $t$.
It can also be seen that the load decreases as $t$ increases. This corresponds to the fact that,    $t_1$-design contains $t_2$-design if $t_1<t_2$, and that the topologies in $t_2$-design have denser  connectivity than the topologies in $t_1$-design which are not  $t_2$-design.

In the following, we describe our scheme by constructing sequentially  node-placement,  user-retrieve, and  user-delivery arrays.
\subsection{Node-placement array}
 The server places the files on $\Gamma$ cache-nodes using the placement strategy of the MN scheme. That is, the server divides each file into ${\Gamma\choose \mu\Gamma}$ subfiles of equal size, i.e., for each $n\in [N]$, $W_n=(W_{n,\mathcal{D}})_{\mathcal{D}\in {[\Gamma]\choose \mu\Gamma}}$. Each cache-node $C_{ \gamma}$ where $\gamma\in [\Gamma]$ caches the subfiles
    \begin{align}\label{eq-cache-packets}
    \mathcal{Z}_{C_\gamma}=\left\{W_{n,\mathcal{D}}\ \Big|\ \gamma\in\mathcal{D}, \mathcal{D}\in {[\Gamma]\choose \mu\Gamma},n\in [N]\right\}.
    \end{align}
So cache-node $C_{\gamma}$ totally caches $N{\Gamma-1\choose \mu\Gamma-1}$ subfiles, i.e., $\frac{N{\Gamma-1\choose \mu\Gamma-1}}{{\Gamma\choose \mu\Gamma}}=\frac{NM}{N}=M$ files. Based on the $t$-$(\Gamma,L,K,1)$ design $(\mathcal{X},\mathfrak{B})$, we further divide each subfile into ${L\choose t}$  packets, i.e., $W_{n,\mathcal{D}}=(W^{\mathcal{T}}_{n,\mathcal{D}})_{\mathcal{T}\in {[L]\choose t}}$ for each $n\in [N]$ and $\mathcal{D}\in {[\Gamma]\choose \mu\Gamma}$. Then the content cached by cache-node $C_{\gamma}$ in \eqref{eq-cache-packets} can be written as
   \begin{align}\label{eq-cache-subpackets}
    \mathcal{Z}_{C_\gamma}=\left\{W^{\mathcal{T}}_{n,\mathcal{D}}\ \Big|\ \gamma\in\mathcal{D},\mathcal{D}\in {[\Gamma]\choose \mu\Gamma}, \mathcal{T}\in {[L]\choose t}, n\in [N]\right\}.
    \end{align}
From \eqref{eq-cache-subpackets} and by Definition \ref{defn:three arrays} we can define a ${\Gamma\choose \mu\Gamma}{L\choose t}\times \Gamma$ node-placement array $\mathbf{C}=\left(\mathbf{C}((\mathcal{D},\mathcal{T}),\gamma)\right)_{\mathcal{D}\in {[\Gamma]\choose \mu\Gamma}, \mathcal{T}\in {[L]\choose t},\gamma \in [\Gamma]}$ to represent the subpackets cached by the cache-nodes where
	\begin{align}
		\label{eq-array-node-caching}
		\mathbf{C}((\mathcal{D},\mathcal{T}),\gamma)=\left\{
		\begin{array}{ll}
			* & \ \ \ \hbox{if}\ \ \gamma\in \mathcal{D} \\
			\text{null} & \ \ \ \hbox{otherwise}
		\end{array} \ .
		\right.
	\end{align}
Note that the row index of 	$\mathbf{C}$ is $(\mathcal{D},\mathcal{T})$ corresponding to each packet; the column index is $\gamma$ corresponding to each cache-node.
	From \eqref{eq-cache-packets} and \eqref{eq-array-node-caching}, it can be seen that there are $\mu\Gamma$ stars in each row of $\mathbf{C}$, which means that each packet is stored by $\mu\Gamma$ cache-nodes.

Let us return to the example in Section~\ref{subsect-sketch-1}. In the $(L,r,K,\Gamma,M,N)=(3,3,7,7,1,7)$ coded caching scheme for the $t$-$(\Gamma,L,K,\lambda)=2$-$(7,3,7,1)$ design $(\mathcal{X},\mathfrak{B})$ access topology, by~\eqref{eq-cache-subpackets} and~\eqref{eq-array-node-caching}
 the packets cached by cache-nodes are in \eqref{eq-exa-design-cache-node} and the node-placement array is listed in Table~\ref{tab-caching-node}, respectively.
 \subsection{User-retrieve array}
  Given a $t$-$(\Gamma,L,K,1)$ design $(\mathcal{X},\mathfrak{B})$ access topology, let $U_{\mathcal{B}}$ denote the user who can access the cache-node $C_{\gamma}$ where  $\gamma\in \mathcal{B}$. Then from \eqref{eq-cache-subpackets}   user $U_{\mathcal{B}}$ where $\mathcal{B}\in \mathfrak{B}$ can retrieve the following  packets,
   \begin{align}\label{eq-retrieve-subpackets}
    \mathcal{Z}_{U_{\mathcal{B}}}= \bigcup\limits_{\gamma\in \mathcal{B}}\mathcal{Z}_{C_{\gamma}}=\left\{W^{\mathcal{T}}_{n,\mathcal{D}}\ \Big|\ \mathcal{B}\cap\mathcal{D}\neq \emptyset, \mathcal{D}\in {[\Gamma]\choose \mu\Gamma}, \mathcal{T}\in {[L]\choose t},n\in [N]\right\}.
    \end{align}
From \eqref{eq-retrieve-subpackets} and by Definition \ref{defn:three arrays} we can define a ${\Gamma\choose \mu\Gamma}{L\choose t}\times K$ user-retrieve array $$\mathbf{U}=\left(\mathbf{U}((\mathcal{D},\mathcal{T}),\mathcal{B})\right)_{\mathcal{D}\in {[\Gamma]\choose \mu\Gamma}, \mathcal{T}\in {[L]\choose t},\mathcal{B}\in \mathfrak{B}}$$ to represent the packets retrieved by the users where each entry $\mathbf{U}((\mathcal{D},\mathcal{T}),\mathcal{B})$ can be defined as follows,
\begin{align}
		\label{eq-array-user-retrieve}
		\mathbf{U}((\mathcal{D},\mathcal{T}),\mathcal{B})=\left\{
		\begin{array}{ll}
			* & \ \ \ \hbox{if}\ \ \mathcal{B}\cap \mathcal{D}\neq\emptyset\\
			\text{null} & \ \ \ \hbox{otherwise}
		\end{array} \ .
		\right.
	\end{align}
Each column of $\mathbf{U}$ (representing one user) has exactly ${\Gamma -L\choose \mu\Gamma}\times {L\choose t}$ non-star entries. So there are $$Z=\left({\Gamma\choose \mu\Gamma}-{\Gamma -L\choose \mu\Gamma}\right){L\choose t}$$ stars in each column. Since each user can retrieve $Z$  packets of each file, the local caching gain is $\frac{F-Z}{F}={\Gamma -L\choose \mu\Gamma}/{\Gamma\choose \mu\Gamma} $ which  increases with the access degree $L$.

Let us return to the example in Section~\ref{subsect-sketch-1}.
In the $(L,K,\Gamma,M,N)=(3,7,7,1,7)$ coded caching scheme for the $t$-$(\Gamma,L,K,\lambda)=2$-$(7,3,7,1)$ design $(\mathcal{X},\mathfrak{B})$ access topology, by~\eqref{eq-retrieve-subpackets} and~\eqref{eq-array-user-retrieve},  the packets retrieved by users are in \eqref{eq-exa-design-cache-user}  and the user-retrieve array is listed in Table \ref{tab-user-retrieve}, respectively.

\subsection{User-delivery array}
\label{subsection-design-delivery-array}
Based on $\mathbf{U}$, the ${\Gamma\choose \mu\Gamma}{L\choose t}\times K$ user-delivery array $\mathbf{Q}=\left(\mathbf{Q}((\mathcal{D},\mathcal{T}),\mathcal{B}) \right)_{\mathcal{D}\in {[\Gamma]\choose \mu\Gamma}, \mathcal{T}\in {[L]\choose t}, \mathcal{B}\in \mathfrak{B}}$ to represent the transmitted packets to the users can be defined as follows,
	\begin{align}
		\label{eq-array-user-PDA}
		\mathbf{Q}((\mathcal{D},\mathcal{T}),\mathcal{B})=\left\{
		\begin{array}{ll}
			\mathcal{D}\cup\mathcal{B}(\mathcal{T}) & \ \ \ \hbox{if}\ \ \mathcal{D}\cap\mathcal{B}=\emptyset \\
			* & \ \ \ \hbox{otherwise}
		\end{array},
		\right.
	\end{align}
where $\mathcal{B}(\mathcal{T})$ represents the elements in $\mathcal{B}$ with index in $\mathcal{T}$.\footnote{\label{foot:ex1 user-delveliry}Let us return to the example in Section~\ref{subsect-sketch-1} again. In the $(L,K,\Gamma,M,N)=(3,7,7,1,7)$ coded caching scheme for the $t$-$(\Gamma,L,K,\lambda)=2$-$(7,3,7,1)$ design $(\mathcal{X},\mathfrak{B})$ access topology, by~\eqref{eq-array-user-PDA}, the user-delivery array is listed in Table~\ref{tab-user-delivery}.}

It can be seen that the resulting array $\mathbf{Q}$ only contains the $(\mu\Gamma+ t)$-subsets of $[\Gamma]$ which indicate the broadcasted messages. Let us count the number of these different subsets occurring in $\mathbf{Q}$. For any $(t+\mu\Gamma)$-subset of $\mathcal{X}$, denoted by $\mathcal{S}$, where  $\mathcal{S}$ is a subset of some block $\mathcal{B}$ in $\mathfrak{B}$, one can see that $\mathcal{S}$ cannot be an entry in the array $\mathbf{Q}$. Otherwise, suppose there is an entry $\mathcal{D}\cup\mathcal{B}'(\mathcal{T}')=\mathcal{S}$ in the array $\mathbf{Q}$. Then we have $\mathcal{B}'(\mathcal{T}')\subseteq \mathcal{B}'\cap \mathcal{B}$. Recall that from the property of $t$-design with $\lambda=1$, $\mathcal{B}=\mathcal{B}'$ always holds. So we have $\mathcal{D}\subseteq\mathcal{S}\subseteq\mathcal{B}$ which implies that this entry is a star from \eqref{eq-array-user-PDA}; by contradiction, we can prove that $\mathcal{S}$ cannot be an entry in the array $\mathbf{Q}$.

As a result, there are at most ${\Gamma \choose t+\mu\Gamma}-K{L\choose t+\mu\Gamma}$ different $(t+\mu\Gamma)$-subset of $\mathcal{X}$ occurring in $\mathbf{Q}$.
Thus the number of multicast messages is at most ${\Gamma \choose t+\mu\Gamma}-K{L\choose t+\mu\Gamma}$.

Let us then show that $\mathbf{Q}$ satisfies Condition C$3$ of Definition \ref{def-PDA}.

Consider Condition C3-a) first. Assume that there exists a  set  occurring in the same row, denoted by  $
\mathcal{S}=\mathbf{Q}((\mathcal{D},\mathcal{T}),\mathcal{B}_1)
=\mathbf{Q}((\mathcal{D},\mathcal{T}),\mathcal{B}_2) $,
 where $\mathcal{D}\in{[\Gamma]\choose \mu\Gamma}$, $\mathcal{T}\in{[L]\choose t}$, $\mathcal{B}_1$, $\mathcal{B}_2\in \mathfrak{B}$ and $\mathcal{B}_1\neq \mathcal{B}_2$. From \eqref{eq-array-user-PDA}, we have
\begin{align*}
\mathcal{D}\cap\mathcal{B}_1=\mathcal{D}\cap\mathcal{B}_2=\emptyset,\ \ \mathcal{D}\cup\mathcal{B}_1(\mathcal{T})=
\mathcal{D}\cup\mathcal{B}_1(\mathcal{T}).
\end{align*}This implies that $\mathcal{B}_1(\mathcal{T})=\mathcal{B}_2(\mathcal{T})$ which is impossible since the parameter $\lambda=1$ in our used design. Similarly assume that there exists a set  occurring in the same column, denoted by
$ \mathcal{C}=
\mathbf{Q}((\mathcal{D}_1,\mathcal{T}_1),\mathcal{B})
=\mathbf{Q}((\mathcal{D}_2,\mathcal{T}_2),\mathcal{B}),
$ where $\mathcal{D}_1,\mathcal{D}_2\in{[\Gamma]\choose \mu\Gamma}$, $\mathcal{T}_1,\mathcal{T}_2\in{[L]\choose t}$, $\mathcal{B}\in \mathfrak{B}$ and $(\mathcal{D}_1,\mathcal{T}_1)\neq (\mathcal{D}_2,\mathcal{T}_2)$. From \eqref{eq-array-user-PDA}, we have
\begin{align*}
\mathcal{D}_1\cap\mathcal{B}=\mathcal{D}_2\cap\mathcal{B}=\emptyset,\ \ \mathcal{D}_1\cup\mathcal{B}(\mathcal{T}_1)=
\mathcal{D}_2\cup\mathcal{B}(\mathcal{T}_2).
\end{align*}This implies that $\mathcal{D}_1=\mathcal{D}_2$. Then we have $\mathcal{B}(\mathcal{T}_1)=\mathcal{B}(\mathcal{T}_2)$, i.e., $\mathcal{T}_1=\mathcal{T}_2$. So $(\mathcal{D}_1,\mathcal{T}_1)= (\mathcal{D}_2,\mathcal{T}_2)$ which is contradiction with our hypothesis $(\mathcal{D}_1,\mathcal{T}_1)\neq (\mathcal{D}_2,\mathcal{T}_2)$. From the above introduction, we have that each subset of $\mathcal{X}$ occurring in $\mathbf{P}$ occurs at most once in each row and each column, i.e., Condition C3-a) holds.

Next consider Condition C3-b). Assume that there exist two
 different entries in $\mathbf{Q}$ which are filled by the same set,
\begin{align}\label{eq-C3-proof}
\mathbf{Q}((\mathcal{T},\mathcal{D}),\mathcal{B})=\mathbf{Q}((\mathcal{T}',\mathcal{D}'),\mathcal{B}')
=\mathcal{B}(\mathcal{T}) \cup\mathcal{D}=\mathcal{B}'(\mathcal{T}')\cup\mathcal{D}'.
\end{align}
It can be seen that  $\mathcal{B}\neq \mathcal{B}'$ and $\mathcal{D}\neq \mathcal{D}'$,  since $\mathbf{Q}$ satisfies Condition C3-a).
Then for any $x\in \mathcal{D}\setminus \mathcal{D}'$ and $x'\in \mathcal{D}'\setminus\mathcal{D}$, we have
$$x\in \mathcal{B}'(\mathcal{T}') \ \ \ \text{and}\ \ \ x'\in \mathcal{B}(\mathcal{T})  ,$$
 which implies that
$\mathcal{B}'\bigcap \mathcal{D}\neq\emptyset$ and $\mathcal{B}\bigcap \mathcal{D}'\neq\emptyset$. From \eqref{eq-array-user-PDA} we have  $\mathbf{Q}((\mathcal{T},\mathcal{D}),\mathcal{B}')=\mathbf{Q}((\mathcal{T}',\mathcal{D}'),\mathcal{B})=*$, implying that Condition C3-b) is satisfied.

As a result, $\mathbf{Q}$ is a $(K,F,Z,S)$ PDA, where
\begin{subequations}
\begin{align}
K&={\Gamma\choose t}/{L\choose t},\ \ F={\Gamma\choose \mu\Gamma}{L\choose t},\ \ Z=\left({\Gamma\choose \mu\Gamma}-{\Gamma -L\choose \mu\Gamma}\right){L\choose t},\label{eq-PDA-parameters-K-F-Z}\\
S& \leq {\Gamma \choose t+\mu\Gamma}-K{L\choose t+\mu\Gamma}.\label{eq-PDA-parameters-S}
\end{align}
\end{subequations}
Then by Lemma \ref{le-Fundamental}, we have a $(L,r,K,\Gamma, M,N)$ multiaccess coded caching scheme based on a $t$-$(\Gamma,L,K,r,1)$   design with transmission load
\begin{align*}
R& \leq    \frac{{\Gamma \choose t+\mu\Gamma}-K{L\choose t+\mu\Gamma}}{{\Gamma\choose \mu\Gamma}{L\choose t}}.
\end{align*}
We should point out that when $\mu\Gamma>\Gamma-L$, we have ${\Gamma -L\choose \mu\Gamma}=0$. So the parameter
$Z=\left({\Gamma\choose \mu\Gamma}-{\Gamma -L\choose \mu\Gamma}\right){L\choose t}={\Gamma\choose \mu\Gamma}{L\choose t}=F$ in \eqref{eq-PDA-parameters-K-F-Z}. This implies that each user can retrieve all the packets of any file. So in this paper we only need to study the case $\mu\Gamma\leq\Gamma-L$.

\subsection{Further reduce the amount of transmission}
\label{sub-perform}
In addition, we can further reduce the transmissions by using the following observation.
\begin{proposition}\rm
\label{pro-1}
In $\mathbf{Q}$ generated by \eqref{eq-array-user-PDA}, for any column labelled by $\mathcal{B}\in \mathfrak{B}$ and   any
 $(t+\mu\Gamma)$-subset $\mathcal{S}$ of $[\Gamma]$ with $t+1\leq|\mathcal{S}\cap \mathcal{B}|<t+\mu\Gamma$, the following   always holds:
\begin{itemize}
\item
$\mathbf{Q}((\mathcal{D},\mathcal{T}),\mathcal{B})=*$, where
 $\mathcal{D}\in {[\Gamma]\choose \mu\Gamma}$, $\mathcal{T}\in {[L]\choose t}$, $\mathcal{B}'\in \mathfrak{B}$, and   $\mathbf{Q}((\mathcal{D},\mathcal{T}),\mathcal{B}')=\mathcal{S}$.
\end{itemize}
\hfill $\square$
\end{proposition}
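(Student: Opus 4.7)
The plan is to unpack definition \eqref{eq-array-user-PDA} and then invoke the defining property of a $t$-$(\Gamma,L,1)$ design that any two distinct blocks intersect in at most $t-1$ points. Since the hypothesis gives $\mathbf{Q}((\mathcal{D},\mathcal{T}),\mathcal{B}')=\mathcal{S}$ as a non-star entry, the second branch of \eqref{eq-array-user-PDA} forces $\mathcal{D}\cap\mathcal{B}'=\emptyset$ and $\mathcal{S}=\mathcal{D}\cup\mathcal{B}'(\mathcal{T})$, a disjoint union of sets of sizes $\mu\Gamma$ and $t$. By \eqref{eq-array-user-PDA} again, the conclusion $\mathbf{Q}((\mathcal{D},\mathcal{T}),\mathcal{B})=*$ is equivalent to $\mathcal{D}\cap\mathcal{B}\neq\emptyset$, so this is what I aim to show.

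First I dispose of the trivial case $\mathcal{B}=\mathcal{B}'$: here $\mathcal{S}\cap\mathcal{B}=\mathcal{B}'(\mathcal{T})$, which has size $t$ and violates the hypothesis $|\mathcal{S}\cap\mathcal{B}|\geq t+1$; so I may assume $\mathcal{B}\neq\mathcal{B}'$. The $\lambda=1$ property of the $t$-design then forces $|\mathcal{B}\cap\mathcal{B}'|\leq t-1$, since any common $t$-subset would be contained in two distinct blocks contradicting $\lambda=1$. In particular $|\mathcal{B}'(\mathcal{T})\cap\mathcal{B}|\leq t-1$.

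Second, using the disjointness $\mathcal{D}\cap\mathcal{B}'(\mathcal{T})=\emptyset$ I decompose
\[
|\mathcal{S}\cap\mathcal{B}|=|\mathcal{D}\cap\mathcal{B}|+|\mathcal{B}'(\mathcal{T})\cap\mathcal{B}|,
\]
and combining the hypothesis $|\mathcal{S}\cap\mathcal{B}|\geq t+1$ with the bound $|\mathcal{B}'(\mathcal{T})\cap\mathcal{B}|\leq t-1$ yields $|\mathcal{D}\cap\mathcal{B}|\geq 2>0$. Hence $\mathcal{D}\cap\mathcal{B}\neq\emptyset$, which by \eqref{eq-array-user-PDA} gives $\mathbf{Q}((\mathcal{D},\mathcal{T}),\mathcal{B})=*$, as claimed.

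There is essentially no technical obstacle; the proof is pure bookkeeping once the pigeonholing on $|\mathcal{B}\cap\mathcal{B}'|\leq t-1$ is in place. I remark that the upper bound $|\mathcal{S}\cap\mathcal{B}|<t+\mu\Gamma$ in the hypothesis is not needed for the above implication: the boundary case $|\mathcal{S}\cap\mathcal{B}|=t+\mu\Gamma$ (i.e.\ $\mathcal{S}\subseteq\mathcal{B}$) has already been excluded in the discussion preceding the proposition, where such $\mathcal{S}$ was shown never to occur as an entry of $\mathbf{Q}$. The upper bound is therefore retained only to delineate exactly the range of intersection sizes that is relevant for the subsequent MDS-based reduction of the number of transmitted messages.
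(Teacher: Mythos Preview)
Your proof is correct, but it takes a slightly more elaborate route than the paper's. The paper simply observes that $\mathcal{S}=\mathcal{D}\cup\mathcal{B}'(\mathcal{T})$ with $|\mathcal{B}'(\mathcal{T})|=t$, so the trivial bound $|\mathcal{B}'(\mathcal{T})\cap\mathcal{B}|\leq t$ together with $|\mathcal{S}\cap\mathcal{B}|\geq t+1$ already forces $|\mathcal{D}\cap\mathcal{B}|\geq 1$; no use of the $\lambda=1$ property or case split on $\mathcal{B}=\mathcal{B}'$ is needed. Your detour through $|\mathcal{B}\cap\mathcal{B}'|\leq t-1$ does yield the sharper conclusion $|\mathcal{D}\cap\mathcal{B}|\geq 2$, but that extra strength is not used downstream, so the paper's one-line cardinality argument is the more economical choice here. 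Your closing remark about the role of the upper bound $|\mathcal{S}\cap\mathcal{B}|<t+\mu\Gamma$ is accurate and a nice clarification.
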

\begin{proof}
Since  $\mathbf{Q}((\mathcal{D},\mathcal{T}),\mathcal{B}')=\mathcal{S}$, $\mathcal{S}$ is composed of all elements in $\mathcal{D}$ and $t$ elements in $\mathcal{B}'$.
 Since  $t+1\leq|\mathcal{S}\cap \mathcal{B}|<t+\mu\Gamma$, there must be at least one element in $\mathcal{D}$ which is also in $\mathcal{B}$, and so $\mathbf{Q}((\mathcal{D},\mathcal{T}),\mathcal{B})=*$.
\end{proof}
Given $\mathcal{B}$, there are exactly
\begin{align}\label{eq-redundancy}
S'=\sum\limits_{i\in [\mu\Gamma-1]}{L\choose t+i}{\Gamma-L\choose \mu\Gamma -i}
\end{align} $(t+\mu\Gamma)$-subsets $\mathcal{S}$ of $[\Gamma]$ satisfying $t+1\leq |\mathcal{S}\cap \mathcal{B}|<t+\mu\Gamma$.   By Proposition \ref{pro-1}, the number of the   multicast messages (i.e., XOR of packets in each time slot) which are unknown to  user $U_{\mathcal{B}}$ is at most $S-S'$. Thus, using a $[S,S-S']$  MDS   code, the server only needs to send $S-S'$ coded packets. Then the transmission load is
\begin{eqnarray}
\label{eq-load2}
\begin{split}
 R_{\text{Th1}} &=\frac{{\Gamma \choose t+\mu\Gamma}-\sum^{\mu\Gamma-1}_{i=1}{L\choose t+i}{\Gamma-L\choose \mu\Gamma -i}-K{L\choose t+\mu\Gamma}}{{L\choose t}{\Gamma\choose \mu\Gamma}}&\\[0.2cm]
&=\frac{{\Gamma \choose t+\mu\Gamma}}{{\Gamma\choose \mu\Gamma}{L\choose t}}-
\frac{\sum^{\mu\Gamma-1}_{i=1}{L\choose t+i}{\Gamma-L\choose \mu\Gamma -i}}{{L\choose t}{\Gamma\choose \mu\Gamma}}-
\frac{K{L\choose t+\mu\Gamma}}{{L\choose t}{\Gamma\choose \mu\Gamma}}&
\end{split}
\end{eqnarray}
which coincides with Theorem~\ref{th-PDA-SS}..

\section{Multiaccess coded caching scheme for the $t$-GDD access topology: Proof of Theorem~\ref{th-GDD}}
\label{sec-Scheme-GDD}
In this section, we focus on the MACC system with the $t$-GDD access topology and propose a unified construction on the coded caching scheme given a $t$.
 Then we show that by extending the proposed MACC scheme to the shared-link model, the resulting scheme covers the shared-link caching scheme in \cite{CWZW} as a special case.

For any positive integers $m$, $q\geq 2$, $s$, $t$ and $L$ satisfying $1\leq t\leq L\leq s\leq m$, assume that there exists a $t$-$(m,q,L,1)$ GDD $(\mathcal{X},\mathfrak{G},\mathfrak{B})$ and an OA$(m,q,s)$  $\mathbf{A}=(\mathbf{A}(j,u))_{j\in [q^s],u\in [m]}$. Without loss of generality, we can let  $\mathcal{X}=\{(u,v)\ |\ u\in [m],v\in [q]\}$, $\mathfrak{G}=\{\mathcal{G}_1,\mathcal{G}_1,\ldots,
\mathcal{G}_{m}\}$ and $\mathfrak{B}=\{\mathcal{B}_1,\mathcal{B}_2,\ldots,\mathcal{B}_K\}$ where
\begin{subequations}
\label{eq-GDD-groups-blocks}
\begin{align}
\mathcal{G}_u&=\{(u,1),(u,2),\ldots,(u,q)\}, \  \forall  u\in [m]\label{eq-GDD-groups-OA}\\
\mathcal{B}_k&=\{(u_{k,1},v_{k,1}),\ldots,(u_{k,L},v_{k,L})\},\ 1\leq u_{k,1}<u_{k,2}<\cdots<u_{k,L}\leq m, \nonumber\\
 &\ \ \ \ \ \ \ \ \ \ \ \ \ \ \ \ \ \ \ \ \ \ \ \ \ \ \ \ \ \ \ \ \ \ \ \ \ \ \ \ 1\leq v_{k,1},v_{k,2},\ldots,v_{k,L}\leq q.\label{eq-GDD-blocks-OA}
\end{align}
\end{subequations}
From \eqref{eq-GDD-blocks-number} we have $K=|\mathfrak{B}|={m\choose t}q^t/{L\choose t}$.
 Then we can obtain a $(L,K=|\mathfrak{B}|,\Gamma=mq,M,N)$ coded caching scheme with $M/N=1/q$ under $(\mathcal{X},\mathfrak{G},\mathfrak{B})$ access topology and OA$(m,q,s)$  placement strategy  by constructing the following node-placement, user-retrieval  and user-delivery arrays.

\subsection{Node-placement array}
We will use the rows $\mathbf{A}(j,\cdot)$ where $j\in [q^s]$ of OA$(m,q,s)$ $\mathbf{A}=(\mathbf{A}(j,u))_{j\in [q^s],u\in [m]}$ as the packet labels and place the packets on each cache-node according to the entries in $\mathbf{A}$. Specifically, the server divides each file into $q^s$ subfiles with equal size, i.e., for each $n\in [N]$, $W_n=(W_{n,j})_{j\in [q^s]}$.  For simplicity, the cache-nodes are denoted by $C_{u,v}$ where $u\in [m]$ and $v\in [q]$, i.e., $C_{\gamma}=C_{u,v}$ if and only if $u=\lceil \gamma/q\rceil$ and $v=<\gamma>_q$.
 Each cache-node $C_{u,v}$ caches the following subfiles
    \begin{align}\label{eq-cache-packets-OA}
    \mathcal{Z}_{C_{u,v}}=\left\{W_{n,j}\ \Big|\ \mathbf{A}(j,u)=v,j\in [q^s], n\in [N]\right\}.
    \end{align}
Based on the $t$-$(m,q,L,1)$ GDD $(\mathcal{X},\mathfrak{G},\mathfrak{B})$, we further divide each subfile into ${L\choose t}$ packets, i.e., $W_{n,j}=(W^{\mathcal{T}}_{n,j})_{\mathcal{T}\in {[L]\choose t}}$ for each $n\in [N]$ and $j\in [q^s]$. Then the packets cached by the cache-node $C_{u,v}$ in \eqref{eq-cache-packets-OA} can be written as
   \begin{align}\label{eq-cache-subpackets-GDD}
    \mathcal{Z}_{C_{u,v}}=\left\{W^{\mathcal{T}}_{n, j }\ \Big|\ \mathbf{A}(j,u)=v,j\in [q^s], \mathcal{T}\in {[L]\choose t}, n\in [N]\right\}.
    \end{align}
From \eqref{eq-cache-subpackets-GDD} we can define a ${L\choose t}q^{s}\times mq$ node-placement array $\mathbf{C}=\left(\mathbf{C}((j,{\mathcal{T}}),(u,v))\right)$, where $\mathcal{T}\in {[L]\choose t}$, $j\in [q^s]$ and $u\in [m],v\in [q]$, to represent the packets cached by the cache-nodes where
\begin{align}
\label{eq-array-node-caching-GDD}
\mathbf{C}((j,{\mathcal{T}}),(u,v))=\left\{
\begin{array}{ll}
* & \ \ \ \hbox{if}\ \ \mathbf{A}(j,u)=v \\
\text{null} & \ \ \ \hbox{otherwise}
\end{array} \ .
\right.
\end{align}
Note that the row index of $\mathbf{C}$ is $(j,{\mathcal{T}})$ corresponding to each packet; the column index of $\mathbf{C}$ is $(u,v)$ corresponding to each cache-node.
From \eqref{eq-array-node-caching-GDD}, each column of $\mathbf{C}$ has exactly ${L\choose t}q^{s-1}$ stars which implies that each cache-node caches $Nq^{s-1}{L\choose t}$ packets, i.e., $\frac{N}{q}=\frac{NM}{N}=M$ files; there are $m$ stars in each row, which means that each packet stored by $m$ cache-nodes.

Let us return to the example in Section~\ref{subsect-sketch-2}. In the $(L,K,\Gamma,M,N)=(2,12,6,3,6)$ coded caching scheme with the
OA$(m=3,q=2,s=2)$ placement for the
$2$-$(3,2,2,1)$ GDD access topology, by \eqref{eq-cache-subpackets-GDD} and~\eqref{eq-array-node-caching-GDD}, the packets cached by cache-nodes are in \eqref{eq-cache-subpackets-exam} and the node-placement array is   in Table \ref{tab-caching-node-OA}, respectively.
\subsection{User-retrieve array}
We treat each block of $\mathfrak{B}$ as the index of a user. For each integer $k\in [K]$, from \eqref{eq-GDD-blocks-OA} we use $\psi(\mathcal{B}_k)$ to represent the set consisting of the first coordinate of each point in $\mathcal{B}_k$ and use $\psi(\mathcal{B}_k)$ to represent the vector generated by the second coordinate of each point in $\mathcal{B}_k$, i.e.,
\begin{align*}
\psi(\mathcal{B}_k)=\{u_{k,1},u_{k,2},\ldots,u_{k,L}\},\ \
\psi(\mathcal{B}_k)=(v_{k,1},v_{k,2},\ldots,v_{k,L}).
\end{align*} Then from \eqref{eq-cache-subpackets-GDD} user $U_{\mathcal{B}_k}$ where $k\in [K]$ can retrieve the following subpackets.
\begin{align}\label{eq-retrieve-subpackets-GDD}
    \mathcal{Z}_{U_{\mathcal{B}_k}}= \bigcup\limits_{i\in [L]}\mathcal{Z}_{C_{u_i,v_i}}=\left\{W^{\mathcal{T}}_{n,j}\ \Big|\ \text{d}\left(\mathbf{A}(j,\psi(\mathcal{B}_k)),\varphi(\mathcal{B}_k)\right)<L,j\in [q^{s}], \mathcal{T}\in {[L]\choose t}, n\in [N]\right\}.
\end{align} Note that $\text{d}\left(\mathbf{A}(j,\psi(\mathcal{B}_k)),\varphi(\mathcal{B}_k)\right)$ is the hamming distance between the vectors $\mathbf{A}(j,\psi(\mathcal{B}_k))$ and $\varphi(\mathcal{B}_k)$, i.e., the number of coordinates in which $\mathbf{A}(j,\psi(\mathcal{B}_k))$ and $\varphi(\mathcal{B}_k)$ differ. From \eqref{eq-retrieve-subpackets-GDD} we can define a ${L\choose t}q^s\times K$ user-retrieve array $\mathbf{U}=\left(\mathbf{U}((j,{\mathcal{T}}), \mathcal{B}_k)\right)_{j\in [q^s], \mathcal{T}\in {[L]\choose t}, k\in [K]}$ to represent the packets retrieved by the users where each entry $\mathbf{U}((j,{\mathcal{T}}),\mathcal{B}_k)$ can be defined as follows.
\begin{align}
\label{eq-array-user-retrieve-GDD}
\mathbf{U}((j,{\mathcal{T}}),\mathcal{B}_k)=\left\{
\begin{array}{ll}
* & \ \ \ \hbox{if}\ \ \text{d}\left(\mathbf{A}(j,\psi(\mathcal{B}_k)),\varphi(\mathcal{B}_k)\right)<L\\
\text{null} & \ \ \ \hbox{otherwise}
\end{array}
\right.
\end{align}

For example, in the $(L,K,\Gamma,M,N)=(2,12,6,3,6)$ coded caching scheme under OA$(m=3,q=2,s=2)$ placement and
$2$-$(3,2,2,1)$ GDD access topology in \eqref{eq-2-3-2-2-1-GDD} of Subsection \ref{subsect-sketch-2}, from \eqref{eq-retrieve-subpackets-GDD}, the packets retrieved by users can be obtained in \eqref{eq-retrieve-subpackets-exam}. From \eqref{eq-array-user-retrieve-GDD} the user-retrieve array is obtained in Table \ref{tab-user-retrieve-GDD}.

\subsection{User-delivery array}
\label{subsec-UDA-GDD}
Using the $\mathbf{U}$ generated in \eqref{eq-array-user-retrieve-GDD}, the ${L\choose t}q^s\times {m\choose t}q^t/{L\choose t}$ user-delivery array $$\mathbf{Q}=\left(\mathbf{Q}((j,{\mathcal{T}}), \mathcal{B}_k)\right)_{j\in [q^s],\mathcal{T}\in{[L]\choose t},k\in [K]}$$ to represent  the transmitted packets to the users can be defined as follows.
\begin{align}
\label{eq-array-user-PDA-GDD}
\mathbf{Q}((j,{\mathcal{T}}), \mathcal{B}_k)=\left\{
\begin{array}{ll}
({\bf e},n_{\bf e}) & \textrm{if}~\text{d}\left(\mathbf{A}(j,\psi(\mathcal{B}_k)),\varphi(\mathcal{B}_k)\right)=L, \\
* & \textrm{otherwise},
\end{array}
\right.
\end{align}
where ${\bf e}=(e_1,e_{2},\ldots,e_{m})\in[q]^m$ such that for each $i\in [m]$ \begin{eqnarray}
\label{eq-putting integer}
e_i=\left\{
\begin{array}{ll}
\varphi(\mathcal{B}_k)(h) & \textrm{if}\ i=\psi(\mathcal{B}_k)(h)\in \mathcal{T}\ \text{for some}\ h\in [t],\\[0.2cm]
\mathbf{A}(j,i) & \textrm{otherwise,} \end{array}
\right.
\end{eqnarray}
$n_{\bf e}$ is the order of occurrence of the vector ${\bf e}$ in column labelled by $\mathcal{B}_k$ and starts from $1$.

For example, in the $(L,K,\Gamma,M,N)=(2,12,6,3,6)$ coded caching scheme under OA$(m=3,q=2,s=2)$ placement and
$2$-$(3,2,2,1)$ GDD access topology of Subsection \ref{subsect-sketch-2}, from \eqref{eq-array-user-PDA-GDD} and \eqref{eq-putting integer}, the user-delivery array is obtained in Table \ref{tab-user-delivery-GDD}. Since each vector occurs at most once in each column of Table \ref{tab-user-delivery-GDD}, we omit the occurrence number of each vector.

Now let us show that the obtained array $\mathbf{Q}$ satisfies Condition C3 of Definition \ref{def-PDA}. For any two different entries say $\mathbf{Q}((j,{\mathcal{T}}), \mathcal{B}_k)$ and $\mathbf{Q}((j',\mathcal{T}'), \mathcal{B}_{k'})$ where $j$, $j'\in [q^s]$, $\mathcal{T}$, $\mathcal{T}'\in {[L]\choose t}$ and $k,k'\in [K]$, assume that
\begin{align}\label{eq-equal}
\mathbf{Q}((j,{\mathcal{T}}), \mathcal{B}_k)=\mathbf{Q}((j',\mathcal{T}'), \mathcal{B}_{k'})=(e_1,e_2,\ldots,e_{m},n).
\end{align}From \eqref{eq-GDD-groups-blocks}, we denote the blocks $\mathcal{B}_k$ and $\mathcal{B}_{k'}$ as follows
\begin{align*}
\mathcal{B}_k=\{(u_{k,1},v_{k,1}),\ldots,(u_{k,L},v_{k,L})\}\ \ \ \
\mathcal{B}_{k'}=\{(u_{k',1},v_{k',1}),\ldots,(u_{k',L},v_{k',L})\}
\end{align*} where $1\leq u_{k,1}<u_{k,2}<\cdots<u_{k,L}\leq m$, $1\leq u_{k',1}<u_{k',2}<\cdots<u_{k',L}\leq m$, $1\leq v_{k,1},v_{k,2},\ldots,v_{k,L}\leq q$ and $1\leq v_{k',1},v_{k',2},\ldots,v_{k',L}\leq q$.

\begin{itemize}
\item Let us consider condition C3-a) of Definition \ref{def-PDA}. From \eqref{eq-array-user-PDA-GDD} we have $\mathcal{B}_k\neq \mathcal{B}_{k'}$, i.e., $k\neq k'$, since all the vectors are different in each column. So each vector occurs at most once in each column. Assume that $(e_1,e_2,\ldots,e_{m})$ occurs twice in a row $(j,\mathcal{T})$, i.e., $(j,\mathcal{T})=(j',\mathcal{T}')$. Let us consider the subsets $\psi(\mathcal{B}_k)$, $\psi(\mathcal{B}_{k'})$ and the vectors $\varphi(\mathcal{B}_k)$, $\varphi(\mathcal{B}_{k'})$ respectively. If $\psi(\mathcal{B}_k)(\mathcal{T})=\psi(\mathcal{B}_{k'})(\mathcal{T})$ we have $\varphi(\mathcal{B}_k)\neq\varphi(\mathcal{B}_{k'})$ by the definition of $t$-GDD. Then from \eqref{eq-array-user-PDA-GDD} we have $\mathbf{Q}((j,{\mathcal{T}}), \mathcal{B}_k)\neq\mathbf{Q}((j',\mathcal{T}'), \mathcal{B}_{k'})$ which contradicts our hypothesis. When $\psi(\mathcal{B}_k)(\mathcal{T})\neq\psi(\mathcal{B}_{k'})(\mathcal{T})$ without loss of generality there must be two different integers say $u_{k,i}$ and $u_{k',i'}\in [m]$ where $i,i'\in[L]$ such that
\begin{align*}
u_{k,i}\in \psi(\mathcal{B}_k)(\mathcal{T})
\setminus\psi(\mathcal{B}_{k'})(\mathcal{T})\ \ \ \text{and}\ \ \
u_{k',i'}\in \psi(\mathcal{B}_{k'})(\mathcal{T})
\setminus\psi(\mathcal{B}_k)(\mathcal{T})
\end{align*}
always hold. From \eqref{eq-array-user-retrieve-GDD} we have
\begin{align*}
\mathbf{A}(j,u_{k',i'})=e_{u_{k',i'}}=v_{k',i'}\ \ \ \ \ \text{and}\ \ \ \ \  \mathbf{A}(j,u_{k,i})=e_{u_{k,i}}=v_{k,i}
\end{align*} which implies that $\mathbf{Q}((j,{\mathcal{T}}), \mathcal{B}_k)=\mathbf{Q}((j',\mathcal{T}'), \mathcal{B}_{k'})=*$, a contradiction to our hypothesis. So Condition C3-a) holds.
\item Let us consider C3-b). That is \eqref{eq-equal} holds for the case $(j,\mathcal{T})\neq(j',\mathcal{T}')$ and $\mathcal{B}_k\neq \mathcal{B}_{k'}$ (i.e., $k\neq k'$). Recall that every $t$-subset of points from $t$ different groups belongs to exactly $\lambda=1$ block, i.e, the third property of a $t$-GDD,
\begin{align*}
(\psi(\mathcal{B}_k)(\mathcal{T}),\varphi(\mathcal{B}_k)(\mathcal{T}))\neq (\psi(\mathcal{B}_{k'})(\mathcal{T}'),\varphi(\mathcal{B}_{k'})(\mathcal{T}'))
\end{align*}
always holds for any $\mathcal{T}$, $\mathcal{T}'\in {[L]\choose t}$. When $\psi(\mathcal{B}_k)(\mathcal{T})= \psi(\mathcal{B}_{k'})(\mathcal{T}')$ we have $\varphi(\mathcal{B}_k)(\mathcal{T})\neq \varphi(\mathcal{B}_{k'})(\mathcal{T}')$. Then from \eqref{eq-putting integer} we have $
\mathbf{Q}((j,{\mathcal{T}}), \mathcal{B}_k)\neq\mathbf{Q}((j',\mathcal{T}'),\mathcal{B}_{k'})$, a contradiction to our hypothesis in \eqref{eq-equal}. When $\psi(\mathcal{B}_k)(\mathcal{T})\neq \psi(\mathcal{B}_{k'})(\mathcal{T}')$, similar to proof of Condition C3-a), we also have that $\mathbf{Q}((j,{\mathcal{T}}), \mathcal{B}_k)=\mathbf{Q}((j',\mathcal{T}'), \mathcal{B}_{k'})=*$, a contradiction to our hypothesis.
\end{itemize}
From the above discussion, $\mathbf{Q}$ generated by \eqref{eq-array-user-PDA-GDD} satisfies Condition C3 of a PDA. Now let us compute the parameters $K$, $F$, $Z$ and $S$ respectively. First we have $K={m\choose t}q^t/{L\choose t}$, $F=q^s{L\choose t}$ and $Z=\left(q^s-(q-1)^Lq^{s-L}\right){L\choose t}$.

Finally let us consider the value of $S$. First there are at most $q^m$ different vectors ${\bf e}$ with length $m$. In addition, from \eqref{eq-array-user-PDA-GDD} and \eqref{eq-putting integer} it is easy to count that each vector ${\bf e}$ occurs at most $(q-1)^{t}$ times in a column. So the value of $S$ is at most $(q-1)^tq^m$. In fact it is not easy to count the exact value of $S$ for any parameters $L$, $t$, $s$ and $m$. However when $L=t$, we can check that \eqref{eq-array-user-PDA-GDD} is exactly the formula (2) in
\cite[Construction 1]{CWZW}. Then the proof is completed.

\section{The case $\lambda>1$}
\label{section-lambda>1-th1}
Using the same placement strategy in \eqref{eq-cache-subpackets} and \eqref{eq-retrieve-subpackets}, we can also obtain the scheme under any $t$-design accessing topology with index $\lambda>1$ by just replacing the $(t+\mu\Gamma)$-subset $\mathcal{D}\cup\mathcal{B}(\mathcal{T})$, which is defined in \eqref{eq-array-user-PDA-lambda} for the non-star entry at the row labelled by $(\mathcal{D},\mathcal{T})$ and the column labelled by $\mathcal{B}$, by a pair $(\mathcal{D}\cup\mathcal{B}(\mathcal{T}),
n_{\mathcal{D},\mathcal{B}(\mathcal{T})})$ where $n_{\mathcal{D},\mathcal{B}(\mathcal{T})}$ is the occurrence order of pair $(\mathcal{D},\mathcal{B}(\mathcal{T}))$ from top to bottom and left to right in $\mathbf{Q}$. That is using the same constructions of node-placement array in \eqref{eq-array-node-caching}, user-retrieve array in \eqref{eq-array-user-retrieve} and changing construction of user-delivery array $\mathbf{Q}$ in \eqref{eq-array-user-PDA} as follows.

\begin{align}
		\label{eq-array-user-PDA-lambda}
		\mathbf{Q}((\mathcal{D},\mathcal{T}),\mathcal{B})=\left\{
		\begin{array}{ll}
			(\mathcal{D}\cup\mathcal{B}(\mathcal{T}),
n_{\mathcal{D},\mathcal{B}(\mathcal{T})}), & \ \ \ \hbox{if}\ \ \mathcal{D}\cap\mathcal{B}=\emptyset \\
			* & \ \ \ \hbox{otherwise}
		\end{array}
		\right.
	\end{align}
For instance, we take the node placement array $\mathbf{C}$ in Table \ref{tab-caching-node} of Subsection \ref{subsect-sketch-1} to illustrate our rule defined in \eqref{eq-array-user-PDA-lambda}. When we use a $2$-$(7,4,7,2)$ design $(\mathcal{X},\mathfrak{B})$ where
$$\mathfrak{B}=\{\{1,2,3,5\},\{2,3,4,6\},\{3,4,5,7\},\{4,5,6,1\},\{5,6,7,2\},\{6,7,1,3\},\{7,1,2,4\}\},$$
as the accessing topology, and from \eqref{eq-array-user-PDA} we have a $42\times 7$ user-retrieve array $\mathbf{U}$ listed in Table \ref{tab-user-retrieve-lambda} to represent the packets retrieved by the users.
\begin{table}[!htbp]
\center
\caption{User-retrieve array $\mathbf{U}$.
\label{tab-user-retrieve-lambda}}
\renewcommand\arraystretch{0.5}
\begin{tabular}{|c|ccccccc|}
Subpacket labels&\multicolumn{7}{|c|}{User set $\mathfrak{B}$} \\ \hline
$\mathcal{D},\mathcal{T}$ & $U_{1235}$ & $U_{2346}$ & $U_{3457}$ & $U_{1456}$ & $U_{2567}$ & $U_{1367}$ & $U_{1247}$ \\ \hline
$\{1\},\{1,2\}$ & * &  &  & * &  & * & * \\
$\{2\},\{1,2\}$ & * & * &  &  & * &  & * \\
$\{3\},\{1,2\}$ & * & * & * &  &  & * &  \\
$\{4\},\{1,2\}$ &  & * & * & * &  &  & * \\
$\{5\},\{1,2\}$ & * &  & * & * & * &  &  \\
$\{6\},\{1,2\}$ &  & * &  & * & * & * &  \\
$\{7\},\{1,2\}$ &  &  & * &  & * & * & * \\\hline
$\{1\},\{1,3\}$ & * &  &  & * &  & * & * \\
$\{2\},\{1,3\}$ & * & * &  &  & * &  & * \\
$\{3\},\{1,3\}$ & * & * & * &  &  & * &  \\
$\{4\},\{1,3\}$ &  & * & * & * &  &  & * \\
$\{5\},\{1,3\}$ & * &  & * & * & * &  &  \\
$\{6\},\{1,3\}$ &  & * &  & * & * & * &  \\
$\{7\},\{1,3\}$ &  &  & * &  & * & * & * \\ \hline
$\{1\},\{1,4\}$ & * &  &  & * &  & * & * \\
$\{2\},\{1,4\}$ & * & * &  &  & * &  & * \\
$\{3\},\{1,4\}$ & * & * & * &  &  & * &  \\
$\{4\},\{1,4\}$ &  & * & * & * &  &  & * \\
$\{5\},\{1,4\}$ & * &  & * & * & * &  &  \\
$\{6\},\{1,4\}$ &  & * &  & * & * & * &  \\
$\{7\},\{1,4\}$ &  &  & * &  & * & * & * \\\hline
$\{1\},\{2,3\}$ & * &  &  & * &  & * & * \\
$\{2\},\{2,3\}$ & * & * &  &  & * &  & * \\
$\{3\},\{2,3\}$ & * & * & * &  &  & * &  \\
$\{4\},\{2,3\}$ &  & * & * & * &  &  & * \\
$\{5\},\{2,3\}$ & * &  & * & * & * &  &  \\
$\{6\},\{2,3\}$ &  & * &  & * & * & * &  \\
$\{7\},\{2,3\}$ &  &  & * &  & * & * & * \\ \hline
$\{1\},\{2,4\}$ & * &  &  & * &  & * & * \\
$\{2\},\{2,4\}$ & * & * &  &  & * &  & * \\
$\{3\},\{2,4\}$ & * & * & * &  &  & * &  \\
$\{4\},\{2,4\}$ &  & * & * & * &  &  & * \\
$\{5\},\{2,4\}$ & * &  & * & * & * &  &  \\
$\{6\},\{2,4\}$ &  & * &  & * & * & * &  \\
$\{7\},\{2,4\}$ &  &  & * &  & * & * & * \\ \hline
$\{1\},\{3,4\}$ & * &  &  & * &  & * & * \\
$\{2\},\{3,4\}$ & * & * &  &  & * &  & * \\
$\{3\},\{3,4\}$ & * & * & * &  &  & * &  \\
$\{4\},\{3,4\}$ &  & * & * & * &  &  & * \\
$\{5\},\{3,4\}$ & * &  & * & * & * &  &  \\
$\{6\},\{3,4\}$ &  & * &  & * & * & * &  \\
$\{7\},\{3,4\}$ &  &  & * &  & * & * & * \\ \hline
\end{tabular}
\end{table}
Then from \eqref{eq-array-user-PDA-lambda} the $42\times 7$ user-delivery array $\mathbf{Q}$ which is listed in  Table \ref{tab-user-delivery-lambda}.
\begin{table}[!htbp]
\center
\caption{User-delivery array $\mathbf{Q}$.
\label{tab-user-delivery-lambda}}
\renewcommand\arraystretch{0.5}
\begin{tabular}{|c|ccccccc|}
Subpacket labels&\multicolumn{7}{|c|}{User set $\mathfrak{B}$} \\ \hline
$\mathcal{D},\mathcal{T}$&$U_{1235}$ &$U_{2346}$ &$U_{3457}$ &$U_{1456}$ &$U_{2567}$ &$U_{1367}$ &$U_{1247}$ \\ \hline
$\{1\},\{1,2\}$ & * & 123,1 & 134,2 & * & 125,1 & * & * \\
$\{2\},\{1,2\}$ & * & * & 234,1 & 124,1 & * & 123,1 & * \\
$\{3\},\{1,2\}$ & * & * & * & 134,1 & 235,1 & * & 123,1 \\
$\{4\},\{1,2\}$ & 124,1 & * & * & * & 245,2 & 134,2 & * \\
$\{5\},\{1,2\}$ & * & 235,1 & * & * & * & 135,1 & 125,1 \\
$\{6\},\{1,2\}$ & 126,1 & * & 346,1 & * & * & * & 126,2 \\
$\{7\},\{1,2\}$ & 127,1 & 237,2 & * & 147,1 & * & * & * \\\hline
$\{1\},\{1,3\}$ & * & 124,1 & 135,1 & * & 126,2 & * & * \\
$\{2\},\{1,3\}$ & * & * & 235,1 & 125,1 & * & 126,2 & * \\
$\{3\},\{1,3\}$ & * & * & * & 135,1 & 236,1 & * & 134,2 \\
$\{4\},\{1,3\}$ & 134,1 & * & * & * & 246,1 & 146,1 & * \\
$\{5\},\{1,3\}$ & * & 245,1 & * & * & * & 156,1 & 145,1 \\
$\{6\},\{1,3\}$ & 136,1 & * & 356,2 & * & * & * & 146,1 \\
$\{7\},\{1,3\}$ & 137,1 & 247,1 & * & 157,2 & * & * & * \\ \hline
$\{1\},\{1,4\}$ & * & 126,1 & 137,1 & * & 127,1 & * & * \\
$\{2\},\{1,4\}$ & * & * & 237,1 & 126,1 & * & 127,1 & * \\
$\{3\},\{1,4\}$ & * & * & * & 136,1 & 237,1 & * & 137,1 \\
$\{4\},\{1,4\}$ & 145,1 & * & * & * & 247,1 & 147,1 & * \\
$\{5\},\{1,4\}$ & * & 256,1 & * & * & * & 157,1 & 157,2 \\
$\{6\},\{1,4\}$ & 156,1 & * & 367,1 & * & * & * & 167,1 \\
$\{7\},\{1,4\}$ & 157,1 & 267,1 & * & 167,1 & * & * & * \\\hline
$\{1\},\{2,3\}$ & * & 134,1 & 145,1 & * & 156,1 & * & * \\
$\{2\},\{2,3\}$ & * & * & 245,1 & 245,2 & * & 236,1 & * \\
$\{3\},\{2,3\}$ & * & * & * & 345,1 & 356,2 & * & 234,1 \\
$\{4\},\{2,3\}$ & 234,1 & * & * & * & 456,1 & 346,1 & * \\
$\{5\},\{2,3\}$ & * & 345,1 & * & * & * & 356,2 & 245,2 \\
$\{6\},\{2,3\}$ & 236,1 & * & 456,1 & * & * & * & 246,1 \\
$\{7\},\{2,3\}$ & 237,1 & 347,1 & *	& 457,1 & *	&	*	&	*	\\	\hline
$\{1\},\{2,4\}$	&	*	&	136,1	&	147,1	&	*	&	157,2	&	*	&	*	\\	
$\{2\},\{2,4\}$	&	*	&	*	&	247,1	&	246,1	&	*	&	237,2	&	*	\\	
$\{3\},\{2,4\}$	&	*	&	*	&	*	&	346,1	&	357,1	&	*	&	237,2	\\	
$\{4\},\{2,4\}$	&	245,1	&	*	&	*	&	*	&	457,1	&	347,1	&	*	\\	
$\{5\},\{2,4\}$	&	*	&	356,1	&	*	&	*	&	*	&	357,1	&	257,1	\\	
$\{6\},\{2,4\}$	&	256,1	&	*	&	467,1	&	*	&	*	&	*	&	267,1	\\	
$\{7\},\{2,4\}$	&	257,1	&	367,1	&	*	&	467,2	&	*	&	*	&	*	\\	\hline
$\{1\},\{3,4\}$	&	*	&	146,1	&	157,1	&	*	&	167,1	&	*	&	*	\\	
$\{2\},\{3,4\}$	&	*	&	*	&	257,1	&	256,1	&	*	&	267,1	&	*	\\	
$\{3\},\{3,4\}$	&	*	&	*	&	*	&	356,1	&	367,1	&	*	&	347,1	\\	
$\{4\},\{3,4\}$	&	345,1	&	*	&	*	&	*	&	467,1	&	467,2	&	*	\\	
$\{5\},\{3,4\}$	&	*	&	456,1	&	*	&	*	&	*	&	567,1	&	457,1	\\	
$\{6\},\{3,4\}$	&	356,1	&	*	&	567,1	&	*	&	*	&	*	&	467,2	\\	
$\{7\},\{3,4\}$	&	357,1	&	467,1	&	*	&	567,1	&	*	&	*	&	*	\\	\hline
\end{tabular}
\end{table}
It is not difficult to check that $\mathbf{Q}$ is a $(7,42,24,42)$ PDA  which realizes a $(7,M',N)$ shared-link coded caching scheme with memory ratio $M'/N=4/7$ and transmission load $R=1$\footnote{With the same user number and memory ratio, by Lemma \ref{le-MN} we have the $(7,M',N)$ MN scheme with transmission load $R=3/5$ which is smaller than our scheme. }. Then we have a $(L,K,\Gamma,M,N)=(4,7,7,1,7)$ coded caching scheme under the $2$-$(7,4,7,2)$ design accessing topology with transmission load $R=1$.

Now let us show that the user-delivery array $\mathbf{Q}$ generated by \eqref{eq-array-user-PDA-lambda} satisfies C3 of Definition \ref{def-PDA}. Similar to the discussion of the case $\lambda=1$, we can easy to check that $\mathbf{Q}$ generated in \eqref{eq-array-user-PDA-lambda} satisfies Condition C$3$-a) of Definition \ref{def-PDA}. Now let us see Condition C$3$-b) in the following. Assume that there exist two
 distinct entries in $\mathbf{Q}$ which are filled by the same pair,
\begin{align}\label{eq-C3-proof-lambda}
\mathbf{Q}((\mathcal{T},\mathcal{D}),\mathcal{B})
=\mathbf{Q}((\mathcal{T}',\mathcal{D}'),\mathcal{B}')
=(\mathcal{D}\cup\mathcal{B}(\mathcal{T}),n_{\mathcal{D},\mathcal{B}(\mathcal{T})})
=(\mathcal{D}'\cup\mathcal{B}'(\mathcal{T}'),n_{\mathcal{D}',\mathcal{B}'(\mathcal{T}')}).
\end{align}If $\mathcal{D}\cup\mathcal{B}(\mathcal{T})
=\mathcal{D}'\cup\mathcal{B}'(\mathcal{T}')$ we have $\mathcal{D}= \mathcal{D}'$. Then from \eqref{eq-array-user-PDA-lambda} we have  $n_{\mathcal{D},\mathcal{B}(\mathcal{T})}\neq n_{\mathcal{D}',\mathcal{B}'(\mathcal{T}')}$, a contradiction to our hypothesis. So we have  $\mathcal{B}(\mathcal{T})\neq\mathcal{B}'(\mathcal{T}')$ which implies that $\mathcal{D}\neq \mathcal{D}'$.

Then for any $x\in \mathcal{D}\setminus \mathcal{D}'$ and $x'\in \mathcal{D}'\setminus\mathcal{D}$, we have
$$x\in \mathcal{B}'(\mathcal{T}') \ \ \ \text{and}\ \ \ x'\in \mathcal{B}(\mathcal{T})$$
which implies that
$\mathcal{B}'\bigcap \mathcal{D}\neq\emptyset$ and $\mathcal{B}\bigcap \mathcal{D}'\neq\emptyset$. From \eqref{eq-array-user-PDA-lambda} we have $\mathbf{Q}((\mathcal{T},\mathcal{D}),\mathcal{B}')$ $=\mathbf{Q}((\mathcal{T}',\mathcal{D}'),\mathcal{B})$ $=*$.

Similar to Subsection \ref{sub-perform}, let us consider the parameters $K$, $F$ and $S$. First we have $K=\lambda{\Gamma\choose t}/{L\choose t}$ from \eqref{eq-value-K} and $F={\Gamma\choose \mu\Gamma}{L\choose t}$ from \eqref{eq-array-node-caching}. So we only need to count the number of different pairs occurring in $\mathbf{Q}$ generated in \eqref{eq-array-user-PDA-lambda}, i.e., the value of $S$. Clearly there are at most $\lambda{\Gamma\choose t+\mu\Gamma}$ different pairs; for each $(t+\mu\Gamma)$-subset $\mathcal{S}$ belonging to some block $\mathcal{B}\in \mathfrak{B}$, $\mathcal{B}$ contains all the $t$-subsets which are the subset of $\mathcal{S}$, i.e., there are exactly ${\mu\Gamma+t\choose t}$ subsets $\mathcal{T}\in {[L]\choose t}$ satisfying $\mathcal{B}(\mathcal{T})\subseteq \mathcal{S}$, this implies that for each $t$-subset belonging to $\mathcal{S}$, the subset $\mathcal{S}$ occurs in $\mathbf{Q}$ at most $\lambda-1$ times, then there are at least $K{L\choose t+\mu\Gamma}$ pairs which do not occur in $\mathbf{Q}$. So $\mathbf{Q}$ has at most
\begin{align*}
S&\leq \lambda{\Gamma \choose t+\mu\Gamma}-K{L\choose t+\mu\Gamma}=\lambda{\Gamma \choose t+\mu\Gamma}-\frac{\lambda{\Gamma\choose t}}{{L\choose t}}{L\choose t+\mu\Gamma}\\
&=
\lambda\left({\Gamma \choose t+\mu\Gamma}-\frac{{\Gamma\choose t}{L\choose t+\mu\Gamma}}{{L\choose t}}
\right)
\end{align*}
different pairs in $\mathbf{Q}$. We have that $\mathbf{Q}$ generated in \eqref{eq-array-user-PDA-lambda} is a
$\left(K=\lambda{\Gamma\choose t}/{L\choose t}\right.$, $F={L\choose t}{\Gamma\choose \mu\Gamma}$, $Z={L\choose t}{\Gamma\choose \mu\Gamma}-{L\choose t}{\Gamma -L\choose \mu\Gamma}$, $\left.S\leq \lambda({\Gamma \choose t+\mu\Gamma}-{\Gamma\choose t}{L\choose t+\mu\Gamma})/{L\choose t}\right)$ PDA.

Recall $\mathbf{Q}$ listed in Table \ref{tab-user-delivery-lambda}. We can see that there exactly $K{L\choose t+\mu\Gamma}=7\times {4\choose 3}=28$ subset with size $3$
\begin{eqnarray}
\label{eq-once-subset}
\begin{split}
&\{1,2,3\},\{1,2,5\},\{1,3,5\},\{2,3,5\}\subseteq \{1,2,3,5\},&\\
&\{2,3,4\},\{2,3,6\},\{2,4,6\},\{3,4,6\}\subseteq \{2,3,4,6\},&\\
&\{3,4,5\},\{3,4,7\},\{3,5,7\},\{4,5,7\}\subseteq \{3,4,5,7\},&\\
&\{4,5,6\},\{1,4,5\},\{1,4,6\},\{1,5,6\}\subseteq \{1,4,5,6\},&\\
&\{5,6,7\},\{2,5,6\},\{2,5,7\},\{2,6,7\}\subseteq \{2,5,6,7\},&\\
&\{1,6,7\},\{3,6,7\},\{1,3,6\},\{1,3,7\}\subseteq \{1,3,6,7\},&\\
&\{1,2,7\},\{1,4,7\},\{2,4,7\},\{1,2,4\}\subseteq \{1,2,4,7\}.&
\end{split}
\end{eqnarray} each of which occurs in $\mathbf{Q}$ exactly once. Since the subset $\{1,2,6\}$ doest not occur in any blocks in \eqref{eq-once-subset} and $\lambda=2$, the subset $\{1,2,6\}$ occurs in $\mathbf{Q}$ exactly twice for each $t=2$-subset $\{1,2\}$, $\{1,6\}$ and $\{2,6\}$.

Finally similar to Subsection \ref{sub-perform}, we can also further reduce the transmission. First we can prove that Proposition \ref{pro-1} also holds for the $\mathbf{Q}$ generated by \eqref{eq-array-user-PDA-lambda}. Then given any block $\mathcal{B}\in \mathfrak{B}$, there are exactly $S'=\sum_{i\in [\mu\Gamma-1]}{L\choose t+i}{\Gamma-L\choose \mu\Gamma -i}$ $(t+\mu\Gamma)$-subsets $\mathcal{S}$ of $[\Gamma]$ satisfying $t+1\leq |\mathcal{S}\cap \mathcal{B}|<t+\mu\Gamma$, and for each $t$-subset of $\mathcal{S}$, there are exactly $\lambda$ different pairs containing $\mathcal{S}$, so by Proposition \ref{pro-1} the number of the coded packets which are unknown for user $U_{\mathcal{B}}$ is at most $S-\lambda S'$. Using a $[S,S-\lambda S']$ maximum sparable code, the server only needs to send $S-\lambda S'$ coded packets. Then the transmission load is
\begin{eqnarray}
\label{eq-load-lambda}
\begin{split}
R&= \frac{\lambda{\Gamma \choose t+\mu\Gamma}-\lambda\sum^{\mu\Gamma-1}_{i=1}{L\choose t+i}{\Gamma-L\choose \mu\Gamma -i}-K{L\choose t+\mu\Gamma}}{{L\choose t}{\Gamma\choose \mu\Gamma}}&\\[0.2cm]
&=\frac{\lambda{\Gamma \choose t+\mu\Gamma}}{{\Gamma\choose \mu\Gamma}{L\choose t}}-
\frac{\lambda\sum^{\mu\Gamma-1}_{i=1}{L\choose t+i}{\Gamma-L\choose \mu\Gamma -i}}{{L\choose t}{\Gamma\choose \mu\Gamma}}-
\frac{K{L\choose t+\mu\Gamma}}{{L\choose t}{\Gamma\choose \mu\Gamma}}.&
\end{split}
\end{eqnarray}

\begin{remark}\rm(The schemes via $t$-design with $\lambda=1$ and $\lambda>1$)
When there exists a $t$-$(\Gamma,L,K,1)$ design $(\mathcal{X},\mathfrak{B})$ and a $t$-$(\Gamma,L,\lambda K,\lambda)$ design $(\mathcal{X},\mathfrak{B}')$, we can obtain a $(L,K={\Gamma\choose t}/{L\choose t},\Gamma, M,N)$ coded caching scheme (Scheme A) under $(\mathcal{X},\mathfrak{B})$ with transmission load $R_1$ in \eqref{eq-load}, and a $(L,\lambda K,\Gamma, M,N)$ coded caching scheme (Scheme B) under $(\mathcal{X},\mathfrak{B}')$ with transmission load $R_2$ in \eqref{eq-load-lambda}. We can see that Scheme B has the $\lambda$ times larger transmission load than that of Scheme A. So by grouping method (replicating Scheme A $\lambda$ times) we can easily obtain a scheme with the user number $\lambda K$ which has the same performance as Scheme B.

We should point out that if there exists a $t$-$(\Gamma,L,K,1)$ design, then there must exit a $t$-$(\Gamma,L,\lambda K,\lambda)$ design for any $\lambda>1$. However it does not always hold conversely. So studying a coded caching scheme under a $t$-$(\Gamma,L,\lambda K,\lambda)$ design for any $\lambda>1$ is also meaningful.
\end{remark}

\section{Conclusion}
\label{sec:conclu}
In this paper, we considered a very general MACC setting where the topology defining the connectivity between the users and the caches is directly derived from a general combinatorial $t$-design and $t$-GDD structures. For this general setting, we have developed new schemes that manage to unify a broad spectrum of existing MACC schemes and the shared-link schemes. Most importantly, by allowing this unification, our schemes can dramatically densify the range of users for which high-performance MACC scheme can treat.

\begin{appendices}
\section{Proof of Lemma~\ref{le-cross-GDD}}
\label{sec:duality}
Let $(\mathcal{V}, \mathfrak{A})$ be a $t$-cross $(v,k,\Gamma_t,m,\lambda_t)$ resolvable design. Then $|\mathcal{V}|=v$, $|\mathfrak{A}|=\Gamma_t$, where each block has $k$ points. Let $q=v/k$. Without loss of generality let $\mathcal{V}=\{1,2,\ldots,v\}$ and $\mathfrak{A}$ consist of $m$ parallel classes $\mathfrak{A}_1$, $\mathfrak{A}_2$, $\ldots$, $\mathfrak{A}_m$ where
$$\mathfrak{A}_u=\{\mathcal{A}_{u,1},\mathcal{A}_{u,2},\ldots,\mathcal{A}_{u,q}\},\ \ \ \ \  u\in[m].$$
Then we can obtain the dual design $(\mathcal{X},\mathfrak{B})=(\mathfrak{A},\mathcal{V})$ where for each $j\in \mathcal{V}$, the block is defined as follows.
\begin{align}\label{eq-dual-block}
\mathcal{B}_j=\{\mathcal{A}_{u,v}\ |\ j\in \mathcal{A}_{u,v},u\in [m], v\in [q]\}
\end{align}Clearly each block $\mathcal{B}_j$ has exactly $m$ points, i.e., $|\mathcal{B}_j|=m$. Let
\begin{align}\label{eq-dual-group}
\mathfrak{G}=\{\mathcal{G}_1=\mathfrak{A}_1,\mathcal{G}_2=\mathfrak{A}_2,\ldots,
\mathcal{G}_{m}=\mathfrak{A}_{m}\}.
\end{align}
Clearly, each point occurs exactly $r$ times since each block of $\mathfrak{A}$ has $r$ points of $\mathcal{V}$, and $\mathfrak{G}$ is a partition of $\mathcal{X}$ into $m$ subsets each of which has size $q$.

Now let us show that the triple $(\mathcal{X},\mathfrak{G},\mathfrak{B})$ is a $t$-$(m,q,m,\lambda_t)$ GDD. By Definition \ref{def-cross-resolvable-desgin}, i.e., the intersection of any $t$ blocks drawn from any $t$ different parallel classes has the same size $\lambda_t$, we have that any $t$-subset of points in $\mathcal{X}$ from $t$ different groups belongs to exactly $\lambda_t$ blocks. So $(\mathcal{X},\mathfrak{B})$ is a $t$-$(m,q,m,\lambda_t)$ GDD.

\section{Proof of Lemma~\ref{le-cross-OA}}
\label{sec:proof of OA lemma}
Let us continue to use the notations in Lemma \ref{le-cross-GDD}. Recall the $t$-$(m,q,m,\lambda_t)$ GDD $(\mathcal{X},\mathfrak{G},\mathfrak{B})$ generated by \eqref{eq-dual-block} and \eqref{eq-dual-group} in proof Lemma \ref{le-cross-GDD}. Similarly to defining the block of $\mathfrak{B}$ \eqref{eq-dual-block} we can define a $v\times m$ array $\mathbf{A}=(\mathbf{A}(j,u))_{j\in \mathcal{X},u\in [m]}$ where
\begin{align}\label{eq-crros-row}
\mathbf{A}(j,u)=v  \ \ \ \text{if}\ \ j\in \mathcal{A}_{u,v}\ \ \text{for some integer}\ v\in [q].
\end{align} It is not difficult to check that for any $t$ columns, each vector of $[q]^t$ occurs in exactly $\lambda_t$ rows by the property of $t$-$(v,k,\Gamma,m,\lambda_t)$ resolvable design, i,e., the size of the intersection of the $t$ blocks $\mathcal{A}_{1,a_{j,1}}$,  $\mathcal{A}_{2,a_{j,2}}$, $\ldots$, $\mathcal{A}_{m,a_{j,m}}$ is $\lambda_t$, which implies that for any $t$ columns,  each vector of $[q]^t$ occurs in exactly $\lambda_t$ rows. By Definition \ref{def-OA} $v=\lambda_t q^t$, $\Gamma=qm$ and each integer occurs exactly $\lambda_t q^{t-1}$ in each column. This is the reason why the cross resolvable design always has $v=\lambda_t q^t$ and each block has exactly $\lambda_t q^{t-1}$ points in \cite{KMR,MKR,MR,DR}.

Conversely suppose $\mathbf{A}=(\mathbf{A}(j,u))_{j\in [\lambda_t q^t],u\in [m]}$ is a OA$_{\lambda_t}(m,q,t)$. Let
\begin{align}
\mathcal{V}=\{\mathbf{A}(j,\cdot)\ |\  j\in [\lambda_t q^t]\} \label{eq-point-set}
\end{align} and $\mathfrak{A}=\{\mathcal{A}_{u,v}\ |\ u\in [m], v\in [q]\}$ where
\begin{align}
\mathcal{A}_{u,v}=\{j\in [F]\ |\  \mathbf{A}(j,u)=v \}
\label{eq-block}
\end{align}
Clearly each $\mathfrak{A}_{u}=\{\mathcal{A}_{u,v}\ |\  v\in [q]\}$ where $u\in [m]$ is a parallel class of $\mathcal{V}$. Furthermore, one can check that $(\mathcal{V}, \mathfrak{A})$ is a $t$ cross $(v,k,\Gamma_t,m,\lambda_t)$ resolvable design by the property of OA$_{\lambda_t}(m,q,t)$, i.e., for any $t$ columns,  each vector of $[q]^t$ occurs in exactly $\lambda_t$ rows which is equivalent to that the intersection of any $L$ blocks from different $t$ parallel classes contains exactly $\lambda_t$ points.

\end{appendices}

\bibliographystyle{IEEEtran}
\bibliography{reference}

\end{document}